\pdfoutput=1
\documentclass[11pt]{article}
\usepackage[margin=1.15in]{geometry}
\usepackage[utf8]{inputenc}
\usepackage[T1]{fontenc}
\usepackage{amsmath, amssymb, amsthm, mathtools}
\usepackage{graphicx}
\usepackage{bm}
\usepackage[round]{natbib}
\usepackage{tikz}
\usetikzlibrary{arrows.meta, positioning}
\usepackage{pgfplots}
\pgfplotsset{compat=1.16}
\usetikzlibrary{decorations.pathreplacing, calligraphy, calc, patterns,
                backgrounds, fit, shapes.misc}
\newcommand{\pnl}[1]{\textbf{(#1)}}

\definecolor{cPrior}{RGB}{214,219,226}   
\definecolor{cPriorE}{RGB}{130,138,148}
\definecolor{cWin}{RGB}{ 31,105,175}     
\definecolor{cWinF}{RGB}{233,242,252}
\definecolor{cPost}{RGB}{ 17, 68, 50}    
\definecolor{cPostF}{RGB}{148,192,171}
\definecolor{cAdv}{RGB}{176, 30, 40}     
\definecolor{cSel}{RGB}{  0,  0,  0}     

\tikzset{
  fignum/.style   = {font=\footnotesize, text=black},
  figlab/.style   = {font=\small, text=black},
  figtiny/.style  = {font=\scriptsize, text=black},
  axisline/.style = {-{Latex[length=2mm,width=1.4mm]}, gray!70, line width=.4pt},
  band/.style     = {line width=.6pt},
  priorband/.style= {band, draw=cPriorE, fill=cPrior},
  winband/.style  = {band, draw=cWin, fill=cWinF},
  postband/.style = {band, draw=cPost, fill=cPostF},
  tick/.style     = {line width=.5pt, gray!80},
  advdot/.style   = {circle, fill=cAdv, inner sep=0pt, minimum size=3.2pt},
  seldot/.style   = {rectangle, fill=cSel, inner sep=0pt, minimum size=3pt},
  meas/.style     = {decorate, decoration={calligraphic brace, amplitude=3.2pt,
                     raise=1.2pt}, line width=.6pt},
  measlab/.style  = {font=\scriptsize, inner sep=1.5pt},
  callout/.style  = {font=\scriptsize, align=center, inner sep=1.5pt},
  cline/.style    = {-{Latex[length=1.6mm,width=1.1mm]}, gray!75, line width=.4pt},
  trajline/.style = {cAdv, line width=.9pt},
  trajdot/.style  = {circle, fill=cAdv, draw=white, line width=.4pt,
                     inner sep=0pt, minimum size=3.6pt},
}

\pgfplotsset{
  appfig/.style = {
    width=6.4cm, height=4.6cm,
    scale only axis,
    axis lines=left,
    axis line style={gray!70, line width=.4pt},
    tick style={gray!70, line width=.4pt},
    tick label style={font=\scriptsize},
    label style={font=\small},
    title style={font=\small, yshift=-1pt},
    legend style={font=\scriptsize, draw=gray!50, fill=white,
                  inner sep=2pt, row sep=-1pt},
    every axis plot/.append style={line width=.9pt},
    clip mode=individual,
  }
}
\usepackage[colorlinks=true, linkcolor=blue, citecolor=blue, urlcolor=blue]{hyperref}
\hypersetup{pdftitle={Consistent Model Chasing Is Minimax Optimal: The Exact Value of Scalar Adversarial Adaptive Control under Large Parametric Uncertainty},
            pdfauthor={Dimitar Ho}}
\usepackage{enumitem}
\usepackage{booktabs}

\newtheorem{theorem}{Theorem}
\newtheorem{proposition}{Proposition}
\newtheorem{lemma}{Lemma}
\newtheorem{corollary}{Corollary}
\theoremstyle{definition}
\newtheorem{definition}{Definition}

\theoremstyle{remark}
\newtheorem{remark}{Remark}

\newcommand{\R}{\mathbb{R}}
\newcommand{\N}{\mathbb{N}}
\newcommand{\midpt}{\operatorname{mid}}
\newcommand{\sgn}{\operatorname{sgn}}
\newcommand{\dm}{\operatorname{diam}}
\newcommand{\dH}{d_{\mathrm{H}}}
\newcommand{\norm}[1]{\left\lVert #1 \right\rVert}
\newcommand{\abs}[1]{\left\lvert #1 \right\rvert}
\newcommand{\Sel}{\texttt{SEL}}
\newcommand{\Api}{\mathcal{A}_{\pi}(\Sel)}
\newcommand{\Kclass}{\mathcal{K}}
\newcommand{\ecap}{G}   

\title{Consistent Model Chasing Is Minimax Optimal:\\
The Exact Value of Scalar Adversarial Adaptive Control\\ under Large Parametric Uncertainty}

\author{Dimitar Ho\thanks{\texttt{dho@caltech.edu}, \texttt{dimitarho@gmail.com}. This work builds on the author's doctoral research at the California Institute of Technology.}}
\date{}

\begin{document}
\maketitle

\begin{abstract}
We solve exactly a fundamental problem of adaptive control against adversarial disturbances: regulate the scalar system $x_{t+1} = a\,x_t + u_t + w_t$, $x_0=0$, $\norm{w}_\infty \le 1$, where the constant pole $a \in [-\Delta, \Delta]$ is unknown in sign and magnitude and $\Delta$ is arbitrarily large. Elementary as the system looks, the least worst-case peak $\norm{x}_\infty$ that a causal controller can guarantee against an adversarial pair $(a, w)$ (the \emph{value} of this game) has, to our knowledge, never been determined for any adaptive control problem with parametric uncertainty of arbitrary size under this criterion; existing theory supplies stability certificates, gain bounds, and regret rates, not the value. That value is
\[
\gamma^\star(\Delta) \;=\; 1 + \Delta\qquad\text{for every }\Delta>0.
\]
The summand $1$ is the irreducible price of the disturbance, and $\Delta$ the exact price of a single, unavoidable identification spike. The optimal policy is certainty-equivalent deadbeat control at the midpoint of the set-membership consistent interval, an instance of the \emph{robust oracle} $\times$ \emph{consistent model chasing} architecture \citep{ho2021online, ho2023thesis}. The architecture is forced, not merely sufficient: writing $\theta_t := -u_t/x_t$ exhibits every causal controller as an oracle--selector composition, and optimality pins the selector to the midpoint at the critical histories. The standard tools, classical and modern, each fail quantifiably at this problem: probing is punished before it pays, commitment at least doubles the optimal excess once adaptation is necessary and is fatal at sub-disturbance excitation, optimism degenerates to tie-breaking or pays asymptotically at least twice the optimum, and regret certificates are blind to the worst-case peak in both directions. The optimal controller contains no exploration mechanism of any kind --- its learning is purely passive. These results give the first exact optimality certificate for consistent model chasing as a design principle for adversarial adaptive control.
\end{abstract}

\section{Introduction}\label{sec:intro}

The core promise of adaptive control is one-shot autonomy: deploy a single control algorithm on a system whose dynamics are largely unknown, and guarantee acceptable closed-loop behavior from the first time step, while the algorithm learns whatever it needs online. When the environment is adversarial (disturbances chosen by nature to do maximal harm, with the unknown dynamics conspiring with them), the appropriate notion of guarantee is a \emph{worst-case} one, and the appropriate formulation is a dynamic game between the controller and an adversary that controls both the disturbance and the residual model uncertainty.

This paper studies what is arguably the minimal nontrivial instance of that game, and solves it exactly. The instance is chosen for its simplicity, not despite it: everything about the problem fits in one line, yet its exact minimax value has remained open (posed as a min--max program, for parameter polytopes, in \citealp{ho2018passive}, which solved its static-feedback restriction and left the value uncomputed), and it is the smallest arena in which the defining difficulty of adversarial adaptation appears in pure form. Consider the scalar linear system
\begin{equation}\label{eq:sys}
    x_{t+1} \;=\; a\,x_t + u_t + w_t, \qquad x_0 = 0,
\end{equation}
with full state observation, adversarial disturbance $\norm{w}_\infty \le 1$, and a constant but \emph{unknown} pole $a$, known only to satisfy $a \in [-\Delta, \Delta]$. The half-width $\Delta$ may be arbitrarily large: the open-loop system can be strongly unstable, with unknown magnitude \emph{and unknown sign}. The controller is any causal law $u_t = K_t(x_t, \dots, x_0)$, and the objective is the worst-case peak of the state:
\begin{equation}\label{eq:minimax}
    \gamma^\star(\Delta) \;=\;
    \inf_{K \in \Kclass}\;
    \sup_{a \in [-\Delta,\Delta]}\;
    \sup_{\norm{w}_\infty \le 1}\;
    \norm{x(K,a,w)}_\infty .
\end{equation}
This is a one-shot, worst-case, large-uncertainty problem: for $\Delta \ge 1$ no static linear controller achieves a finite value, so adaptation (closed-loop learning) is not an optimization but a prerequisite for boundedness. At the same time the problem couples learning and control as tightly as possible: information about $a$ is only generated when the state is away from the origin, i.e., precisely when the system is exposed.

\paragraph{Main result.} We prove (Theorem~\ref{thm:main}) that
\[
    \boxed{\;\gamma^\star(\Delta) = 1 + \Delta\quad\text{for every }\Delta > 0,\;}
\]
with both the infimum and the suprema attained. To our knowledge this is the first \emph{exact} minimax value (not an order-of-growth, not a regret rate) for adaptive control with interval parametric uncertainty and adversarial disturbances under the worst-case peak criterion. The two summands are priced exactly ($1$ for the disturbance, $\Delta$ for a single, unavoidable \emph{identification spike}), an anatomy made precise after Theorem~\ref{thm:main}.

\paragraph{The optimal policy is an oracle--selector architecture.} The second contribution is structural. The optimal controller (Definition~\ref{def:controller}) maintains the interval $A_t \subseteq [-\Delta,\Delta]$ of parameter values consistent with the observed data, and plays certainty-equivalent deadbeat control at the \emph{midpoint} of that interval:
\[
    u_t = -\midpt(A_t)\, x_t .
\]
This is precisely an instantiation of the modular \emph{online robust control} framework introduced in \citet*{ho2021online} and developed in the author's thesis as the ``PixSel'' (``$\pi \times \Sel$'') framework \citep[Ch.~6]{ho2023thesis}: decompose the large-uncertainty control problem into (i) a \emph{robust oracle} $\pi$ (here, the deadbeat law $\pi[\theta](x) = -\theta x$, the nominal design that would be optimal if the parameter were known) and (ii) a \emph{consistent model chaser} $\Sel$ --- here, the midpoint (in dimension one, also the Steiner-point) selection on the set-membership consistent set, which is a $1$-competitive chasing algorithm (competitiveness in the precise sense of \citealp{ho2021online}, the constant proved in Appendix~\ref{app:interface}). The framework's general theorems guarantee that any such composition $\Api$ enjoys worst-case safety and finite-mistake bounds \emph{for arbitrarily large uncertainty}; the general bounds are, necessarily, conservative.

What Theorem~\ref{thm:main} shows is much stronger and, we believe, new in kind: for the fundamental problem \eqref{eq:minimax}, the oracle--selector composition is not merely sufficient --- \emph{it is exactly minimax optimal, and its two components are individually forced}. Indeed the decomposition is a \emph{normal form} rather than a modeling choice: writing $\theta_t := -u_t/x_t$ exhibits every causal law as an oracle--selector composition with some, possibly inconsistent, selector (Definition~\ref{def:implicit}). Measured against that yardstick, Definition~\ref{def:controller} is forced in three respects: passivity before information exists ($u_0 = 0$, and then $u_1 = 0$ after the kick; Remark~\ref{rem:u0}), the exposure cap as a necessary condition on every optimal controller rather than a property of our particular law (Proposition~\ref{prop:implicit}), and the midpoint as the \emph{unique} optimal selection at the critical step, every optimism-style bias toward an extreme being strictly and quantifiably worse (Section~\ref{sec:exploration}). Consistent model chasing is thus not one design choice among many for this problem class: it is the solution.

We are deliberate about scope. The architecture is not scalar: the framework and its worst-case guarantees are formulated for general nonlinear systems and compact model classes \citep{ho2021online}. What is scalar is the \emph{completeness} of the certificate (exact value, forced architecture, priced impossibilities), and whether that exactness extends to higher dimensions is open: the game there is genuinely richer (Section~\ref{sec:outlook}).

\paragraph{Why the standard tools fail here.} The optimal policy performs no exploration whatsoever: no probing signal, no persistency of excitation, no optimism, no explicit identification phase. All learning is \emph{passive}: the controller merely refuses to contradict the data (consistency) and moves its model hypothesis as little as the data allow (competitiveness). Section~\ref{sec:exploration} proves that this is forced. The standard tools, classical and modern, are each strictly suboptimal for the problem \eqref{eq:minimax}:
\begin{itemize}[nosep]
    \item \emph{Probing is punished before it pays} (Proposition~\ref{prop:probing}): a probe is amplified by the full uncertainty $\Delta$ one step before its information can be used, and the penalty is attained exactly.
    \item \emph{Commitment can be gamed} (Proposition~\ref{prop:etc}): every scheme that irrevocably commits to a frozen gain in finite time (explore-then-commit in particular) pays, for $\Delta \ge 1$, at least double the optimal excess, and diverges whenever its excitation stays at or below the disturbance level: a cancellation adversary erases that excitation, starving the identification phase outright.
    \item \emph{Optimism cannot select} (Propositions~\ref{prop:endpoint}--\ref{prop:ofudichotomy}): under the peak objective the clairvoyant value function is identical for every consistent model, so optimism degenerates to tie-breaking; and every instantiation that actually determines the selection pays asymptotically at least twice (endpoint rules four times) the optimum. The midpoint is the unique optimal selection, and it is a set-relative, minimax object that no per-model criterion selects.
\end{itemize}
Their common root is an information-timing asymmetry intrinsic to adversarial adaptive control, displayed and exploited in Section~\ref{sec:exploration}. Section~\ref{sec:regret} closes the case from the metric side: regret, the currency in which exploration is usually justified, is blind to the worst-case peak in \emph{both} directions (Proposition~\ref{prop:regret}).

\paragraph{Closing a loop.} This work closes, at its foundational instance, a loop that began with a line of research on one-shot control design with worst-case guarantees under arbitrarily large model uncertainty: passive--aggressive learning and control \citep{ho2018passive}, model-free robust stabilization without prior bounds \citep{ho2019robust}, scalable adaptive system-level control \citep{ho2019scalable}, and the oracle/consistent-model-chasing decomposition \citep*{ho2021online}, subsequently extended to distributed systems \citep*{yu2023online} and applied to linear time-varying systems (where the consistent sets are no longer nested and the selector becomes general convex body chasing) \citep*{yu2023onlineLTV}, and to power systems \citep*{yeh2022online}. The program is consolidated in the author's thesis \citep{ho2023thesis}, which contains the general theory in full and includes results not published elsewhere. That framework was built as a \emph{sufficient} design methodology: robust control supplies the oracle, competitive online learning (convex body chasing, \citealp{friedman1993convex, bubeck2020chasing, argue2019nearly, sellke2020chasing}) supplies the selector, and the composition inherits worst-case guarantees. The present paper supplies the missing converse for the scalar problem: the architecture is optimal, exactly. We view this as evidence that consistent model chasing (a consistent selector that moves no more than the data compel, rather than exploration) is the correct first-order principle for adversarial adaptive control at large uncertainty, in the same way that certainty equivalence, itself exactly optimal only in the LQG setting, is the correct first-order principle of the stochastic theory.

\subsection{Related work}\label{sec:related}

\paragraph{Minimax and dual adaptive control.} The idea of treating adaptive control as a dynamic game goes back to dual control \citep{feldbaum1960dual, filatov2004dual} and was given a modern game-theoretic form by \citet{didinsky1994minimax} via cost-to-come functions for soft-constrained quadratic criteria. A contemporary line, minimax adaptive control in the sense of Rantzer and Kjellqvist, certifies closed-loop $\ell_2$-gains --- first for \emph{finite} sets of linear models \citep{rantzer2021minimax, kjellqvist2022noisy}, with an exact solution for a state matrix of unknown sign \citep{rantzer2020sign}, and subsequently for norm-bounded continua of parameters in exactly solvable fully actuated and single-input cases \citep{rantzer2025actuated, rantzer2026single} --- together with finite-dimensional information states for minimax dual control \citep{kjellqvist2024dual}, its output-feedback extension \citep{kjellqvist2024output}, and regret analyses thereof \citep{renganathan2023regret}. We work in the same game spirit, but in a different \emph{currency}. An $\ell_2$-gain certificate bounds accumulated energy, so a single large excursion may be amortized against many small ones; ours is the induced $\ell_\infty$ peak (the amortization-free currency of safety), and the deliverable is not a certified controller but the exact value of the game, the exact optimal policy attaining it, its forced oracle--selector architecture, and the impossibility theorems of Section~\ref{sec:exploration}. The known-dynamics theory of this criterion is $\ell_1$ optimal control \citep{vidyasagar1986optimal, dahleh1987l1, dahleh1995control}: the floor $1$ in Theorem~\ref{thm:main} is the known-parameter $\ell_1$-optimal value, and $\Delta$ is what not knowing the parameter adds to it. To our knowledge no prior work gives an exact minimax \emph{peak} value for parametric uncertainty of arbitrary size. Competitive-ratio formulations of online control with \emph{known} dynamics appear in \citet{goel2021competitive}: closed-loop cost is compared multiplicatively against the hindsight-optimal controller. Our use of the word ``competitive'' is unrelated: it is that of the chasing literature \citep{friedman1993convex, bubeck2020chasing, sellke2020chasing}, and it enters only through the model \emph{selector}, whose movement is bounded by the Hausdorff variation $\dH$ of the consistent sets, a criterion internal to the model-chasing subproblem, defined without reference to any control cost or clairvoyant benchmark. The controller-level benchmark in this paper is instead the exact minimax value; indeed, Proposition~\ref{prop:regret} shows that benchmark-relative certificates cannot carry the worst-case guarantee sought here.

\paragraph{Set-membership identification.} The information structure of our problem (each transition confines the parameter to a data window, and uncertainty sets are nested intersections) is precisely that of bounded-noise set-membership identification \citep{fogel1982value, MILANESE1991997, tse1993optimal}. That literature quantified identification under bounded noise, including its fundamental hardness \citep{dahleh1993sample}, but as an estimation problem in its own right, decoupled from any control objective. Set-membership estimation has since been placed inside worst-case control loops (robust adaptive MPC \citep{lorenzen2019robust, kohler2021robust, parsi2020robust} being the most developed instance, optimizing a worst-case cost over the feasible parameter set at every step). Those guarantees, however, are premised on the uncertainty set being robustly stabilizable by a \emph{single fixed} controller, which is the \emph{small}-uncertainty regime: in \eqref{eq:minimax} a fixed gain $u = -kx$ achieves $\abs{a - k} < 1$ for every $a \in [-\Delta, \Delta]$ precisely when $\Delta < 1$, the regime in which, as the time-varying comparison below shows, even an adversarially varying parameter is benign and adaptation buys nothing. Multi-model and switching schemes \citep{anderson2001multiple, hespanha2003overcoming} do reach beyond it, at the cost of any worst-case optimality certificate. The gap is structural: even restricted to $\Delta < 1$, none of this work determines the exact minimax peak; it certifies stability, constraint satisfaction, or a receding-horizon worst-case cost, and leaves the value of the game open. Theorem~\ref{thm:main} settles that value, for every compact interval and every $\Delta$. Our window lemma (Lemma~\ref{lem:window}) can be read as a closed-loop accounting identity: realized state magnitude and surviving parameter uncertainty are exchanged one-for-one against the same budget. Sokolov's $\ell_1$-adaptive line \citep{sokolov1985adaptive, sokolov2001adaptive} put set-membership estimators in the loop with matching norm structure, obtaining asymptotic suboptimality bounds; our contribution is the exact transient value. An instructive contrast, in the spirit of the feedback-capability results of \citet{xie2000how}: for a \emph{time-varying} pole $a_t \in [-\Delta, \Delta]$ the value of \eqref{eq:minimax} is finite if and only if $\Delta < 1$,\footnote{The adversary aligns $\sgn(a_t x_t)$ with $u_t$ at every step, forcing $\abs{x_{t+1}} \ge \Delta\abs{x_t} + 1$, which is unbounded for $\Delta \ge 1$; conversely $u \equiv 0$ keeps the state bounded for $\Delta < 1$.} whereas for a constant pole it is finite --- and equals $1 + \Delta$ --- for every $\Delta$. The entire benefit of large-uncertainty adaptation is thus attributable to the parameter being \emph{identifiable}, and our theorem prices that benefit exactly.

\paragraph{Online learning for control.} A large recent literature studies learning to control linear systems with regret guarantees, predominantly for stochastic or small-uncertainty settings: LQR regret \citep{abbasi2011regret, dean2018regret, cohen2018online, cohen2019learning, mania2019certainty}, online control with adversarial disturbances for \emph{known} or stabilized systems \citep{agarwal2019online, hazan2019nonstochastic}, and black-box control of unknown systems \citep{chen2021blackbox}, the latter proving an exponential-in-dimension lower bound that is the nearest regret-metric analogue of the exponential-in-dimension phenomena we expect beyond the scalar case (Section~\ref{sec:outlook}). For stochastic LQR this literature has itself concluded that optimism is unnecessary: certainty equivalence with naive exploration is regret-optimal \citep{simchowitz2020naive}. Section~\ref{sec:exploration} proves the adversarial, exact counterpart: for the problem \eqref{eq:minimax}, every one of the standard tools is not merely unnecessary but strictly suboptimal, and the penalty is larger than the amortization intuition suggests, ranging from a constant factor of two to four up to unboundedness. We bound each tool's price from below --- exactly, in the case of probing. Our Proposition~\ref{prop:regret} explains why regret is structurally unable to certify the guarantee sought here, extending the mistake-versus-regret separation observed in \citet{ho2021online}. Classical robust adaptive control \citep{ioannou1996robust, AstromAdaptive} and multi-model adaptive control \citep{anderson2001multiple, hespanha2003overcoming} pursue stability under large uncertainty but do not provide worst-case optimality certificates.

\paragraph{Competitive selection and convex body chasing.} The selector of the optimal policy is a competitive chasing algorithm on nested consistent sets, connecting adaptive control to the competitive-analysis literature: convex body chasing \citep{friedman1993convex}, whose general case was settled with linear \citep{argue2020chasing} and then optimal \citep{sellke2020chasing} competitive ratios, and its \emph{nested} case (the only case this paper requires), for which \citet{argue2019nearly} gave a nearly-linear bound and \citet{bubeck2020chasing} the near-optimal Steiner-point algorithm. The reduction of adaptive-control model selection to nested convex body chasing was introduced in \citet{ho2021online}; here, the one-dimensional Steiner point (the interval midpoint) turns out to be a component of the \emph{exactly optimal} policy.

\subsection{Notation}
$\N = \{0,1,\dots\}$. For a sequence $x = (x_0, x_1, \dots)$, $\norm{x}_\infty := \sup_t \abs{x_t}$. For a bounded interval $I \subseteq \R$, $\abs{I}$ denotes its length and $\midpt(I)$ its midpoint. $\dH$ denotes the Hausdorff distance. All proofs not given in the main text appear in the appendix.

\section{Problem formulation}\label{sec:problem}

Consider the system \eqref{eq:sys} with state $x_t \in \R$ measured exactly, input $u_t \in \R$, adversarial disturbance $\norm{w}_\infty \le 1$, and unknown constant parameter $a \in [-\Delta, \Delta]$ with $\Delta > 0$ known. Admissible controllers are causal, deterministic feedback laws
\begin{equation}\label{eq:causal}
    u_t = K_t(x_t, x_{t-1}, \dots, x_0),
\end{equation}
collected in the class $\Kclass$; the value of the game is $\gamma^\star(\Delta)$ defined in \eqref{eq:minimax}. Two normalizations are without loss: rescaling $(x, u, w)$ by the disturbance bound reduces any bound $\norm{w}_\infty \le W$ to $1$, scaling the value by $W$; and the causal substitution $\tilde u_t = u_t + c\,x_t$ recenters any prior interval at the origin, so the symmetry of $[-\Delta, \Delta]$ is a convention rather than an assumption (Remark~\ref{rem:extensions}(c) records the resulting value for an asymmetric prior). Three remarks frame the problem.

\begin{remark}[Adaptation is necessary]\label{rem:adapt}
A static law $u_t = -kx_t$ yields $x_{t+1} = (a-k)x_t + w_t$, stable only if $\abs{a-k}<1$; a single gain works for all $a \in [-\Delta,\Delta]$ only when $\Delta < 1$. For $\Delta \ge 1$, therefore, no static linear law achieves a finite worst-case peak; the controllers of this paper escape the obstruction by genuinely using the history to infer $a$.
\end{remark}

\begin{remark}[Learning is implicit in the loop]
Each transition localizes the parameter: whenever $x_t \neq 0$, the observation $(x_t, x_{t+1} - u_t)$ confines $a$ to a \emph{data window} of width $2/\abs{x_t}$,
\begin{equation}\label{eq:window}
    a \;\in\; W_t := \left\{\alpha : \abs{x_{t+1} - \alpha x_t - u_t} \le 1 \right\}.
\end{equation}
Information about $a$ is therefore generated exactly when, and in proportion to how far, the state is away from the origin --- the crux of the exploration--regulation coupling.
\end{remark}

\begin{remark}[Peak objective]
The $\ell_\infty$-peak criterion is the natural carrier of worst-case safety guarantees (state constraints that must never be violated); it is also the least forgiving to learning transients, since a single bad step is never amortized. This is precisely the regime where, as we show in Sections~\ref{sec:exploration} and~\ref{sec:regret}, intuitions imported from average-cost and regret formulations fail.
\end{remark}

\section{Main result: the exact value and the optimal architecture}\label{sec:main}

\begin{theorem}[Exact minimax value]\label{thm:main}
For every $\Delta > 0$,
\begin{equation}\label{eq:value}
    \gamma^\star(\Delta) \;=\; 1 + \Delta,
\end{equation}
and the value is attained by the controller of Definition~\ref{def:controller}.
\end{theorem}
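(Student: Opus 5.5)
The plan is to prove the two inequalities $\gamma^\star(\Delta)\ge 1+\Delta$ and $\gamma^\star(\Delta)\le 1+\Delta$ separately. The lower bound comes from an explicit two-step adversary; the upper bound comes from analysing the midpoint controller of Definition~\ref{def:controller} through an invariant that couples the state magnitude with the width of the consistent interval.

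\textbf{Lower bound.} The key fact is that $x_0=0$, so the first transition carries no information about $a$. Fix any causal $K$. Since $x_0=0$, the input $u_0=K_0(0)$ is a fixed number and $x_1=u_0+w_0$ does not depend on $a$. The adversary picks $w_0=\pm1$ with the sign of $u_0$ (either sign if $u_0=0$), which gives $|x_1|\ge 1$. Then $u_1=K_1(x_1,0)$ is determined, still independently of $a$. Now
\[
x_2=a x_1+u_1+w_1,
\]
and the adversary chooses $a=\pm\Delta$ so that $a x_1$ has the sign of $u_1$ (either sign if $u_1=0$), giving $|a x_1+u_1|\ge \Delta|x_1|\ge\Delta$. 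Finally it takes $w_1=\pm1$ with the same sign, so that $|x_2|\ge 1+\Delta$. The suprema in \eqref{eq:minimax} are attained by this $(a,w)$, and the argument also explains the ``single identification spike'': the adversary spends its parameter choice exactly once, at the step before any information is available.

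\textbf{Upper bound.} Let $A_t$ be the consistent interval, with $A_0=A_1=[-\Delta,\Delta]$, and write $h_t:=|A_t|/2$ and $m_t:=\midpt(A_t)$. Under $u_t=-m_t x_t$ the error $e_t:=a-m_t$ satisfies $e_t\in[-h_t,h_t]$ and
\[
x_{t+1}=e_t x_t+w_t .
\]
This gives $x_1=w_0$ (so $|x_1|\le1$) and $|x_2|\le \Delta|x_1|+1\le 1+\Delta$. For $t\ge1$ we have the general one-step bound
\[
|x_{t+1}|\le 1+h_t|x_t| ,
\]
so it suffices to show the invariant $h_t|x_t|\le\Delta$ for all $t$. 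It holds at $t=0$ and $t=1$.

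For the inductive step, the observation $x_{t+1}$ confines $e_t$ to a window $[(x_{t+1}-1)/x_t,\,(x_{t+1}+1)/x_t]$ of width $2/|x_t|$ (see \eqref{eq:window}). Hence $e_t$ lies in the intersection of this window with $[-h_t,h_t]$, and $h_{t+1}$ is half the length of that intersection. Without loss of generality take $x_t>0$ and $y=x_{t+1}>0$. The window is the binding constraint on one side, so
\[
2h_{t+1}\le \min\bigl\{2h_t,\;2/x_t,\;h_t-(y-1)/x_t\bigr\},
\]
and the task is to bound $h_{t+1}\,y$ uniformly by $\Delta$ using the induction hypothesis $h_t x_t\le\Delta$. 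This is exactly the one-for-one exchange between realized state magnitude and surviving uncertainty. A large $y$ forces $y\le h_t x_t+1$ and shrinks the surviving interval to within $h_t x_t+1-y$ of its edge; a small $y$ keeps $h_{t+1}$ up to $\min\{h_t,1/x_t\}$ but multiplies it by a small state.

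\textbf{Main obstacle.} I expect the inductive step to be the hard part. The naive product bound gives something like $(h_t x_t+1)^2/(8x_t)$, which is not obviously at most $\Delta$ when $x_t$ is small. Therefore the invariant probably has to be strengthened to a joint budget on the pair $(|x_t|,h_t)$. The candidate I would try first is
\[
|x_{t+1}|\le 1+\Delta \quad\text{together with}\quad h_t|x_t|\le \Delta\ \text{ and }\ h_t\le 1/|x_{t-1}| ,
\]
closed by a case split on whether the window or the prior interval is binding. If this does not close, I would replace it with a potential function $\Phi(x,h)$ that is nonincreasing along the closed loop and satisfies $|x_{t+1}|\le 1+\Delta$ whenever $\Phi\le\Phi_0$. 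That potential would serve as the window lemma, which the paper refers to as Lemma~\ref{lem:window}. Combining this with the lower bound gives \eqref{eq:value}, with attainment by Definition~\ref{def:controller}.
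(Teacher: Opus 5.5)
Your lower bound is the paper's Proposition~\ref{prop:lb} move for move: an information-free kick aligned with $u_0$, then the extreme parameter aligned with $u_1$, then an aligned $w_1$. Your target invariant $h_t|x_t|\le\Delta$ is exactly the exposure cap $g_t\le 2\Delta$ the paper uses. The upper bound, however, is not proved. You leave the inductive step open, and the proposed repairs (a strengthened invariant involving $h_t\le 1/|x_{t-1}|$, or an unspecified potential $\Phi$) are speculation. The missing idea is that you compare against the wrong quantity. Bounding the new exposure $h_{t+1}|x_{t+1}|$ by the old exposure $h_t|x_t|$ discards a factor $1/|x_t|$, which is why your bound $(h_tx_t+1)^2/(8x_t)$ blows up for small $x_t$. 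The right one-step statement is
\[
h_{t+1}\,|x_{t+1}| \;\le\; h_t ,
\]
which is the paper's Proposition~\ref{prop:invariant}, $g_{t+1}\le L_t$. It needs no induction hypothesis. Nestedness gives $h_t\le\Delta$ for free, hence $|x_{t+2}|\le 1+h_{t+1}|x_{t+1}|\le 1+h_t\le 1+\Delta$.

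This inequality follows from the three constraints you already wrote down. In regressor units, put $r:=h_t|x_t|$, $y:=|x_{t+1}|$ and $\mu:=2h_{t+1}|x_t|$. Your constraints then read $\mu\le\min\{2r,\,2,\,r+1-y\}$, and the claim is $\mu y\le 2r$, which is Lemma~\ref{lem:window}(ii). If $y\le 1$, use $\mu\le 2r$ directly. If $y>1$, use $y\le r+1-\mu$ to get
\[
2r-\mu y \;\ge\; 2r-\mu(r+1-\mu) \;=\; r(2-\mu)+\mu(\mu-1).
\]
The right-hand side is termwise nonnegative when $1\le\mu\le 2$. When $\mu<1$, the bound $r\ge\mu/2$ makes it at least $\mu^2/2$. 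Your naive bound came from optimizing $y$ against the clipping constraint $\mu\le r+1-y$ alone. That only gives $(r+1)^2/4\le 2r$ for $r\in[3-2\sqrt2,\,3+2\sqrt2]$. Outside that range the other two constraints bind: $\mu\le 2r$ for small $r$ (the vertex $y=(r+1)/2$ violates it once $r<1/3$), and $\mu\le 2$ for large $r$. Finally, the case $x_t=0$, which your normalization $x_t>0$ excludes, needs a separate line. There $A_{t+1}=A_t$ and $|x_{t+1}|=|w_t|\le 1$, so again $h_{t+1}|x_{t+1}|\le h_t$.
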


The two summands have a clean interpretation: $1$ is the irreducible effect of the disturbance (unavoidable even when $a$ is known), and $\Delta$ is the exact price of a single, unavoidable \emph{identification spike}, so named because realizing it is what identifies $a$ exactly (Remark~\ref{rem:anatomy}), not because anyone seeks it: the disturbance can move the state from $0$ to $\pm1$ without leaking any information about $a$, after which one step of exposure of magnitude $\Delta$ cannot be prevented. The content of the theorem is that an optimal controller pays this price \emph{at most once}.

\subsection{The optimal controller}

\begin{definition}[Set-membership certainty-equivalent deadbeat controller]\label{def:controller}
Maintain the set $A_t \subseteq \R$ of parameter values consistent with the data observed up to time $t$:
\begin{equation}\label{eq:setmem}
    A_0 = A_1 = \left[-\Delta, \Delta\right], \qquad
    A_{t+1} \;=\;
    \begin{cases}
        A_t \cap W_t, & x_t \neq 0,\\[2pt]
        A_t, & x_t = 0,
    \end{cases}
\end{equation}
with $W_t$ the data window \eqref{eq:window}. Each $A_t$ is a closed interval containing the true parameter, computable by the controller at time $t$ (it depends only on $x_0, \dots, x_t$ and $u_0, \dots, u_{t-1}$). Write $\hat a_t := \midpt(A_t)$ and $L_t := \abs{A_t}$ (so $L_0 = L_1 = 2\Delta$). The control law is
\begin{equation}\label{eq:law}
    u_t \;=\; -\,\hat a_t\, x_t .
\end{equation}
Note $u_0 = 0$ automatically, since $x_0 = 0$.
\end{definition}

The law \eqref{eq:law} is certainty-equivalent deadbeat control at the midpoint of the membership set: it cancels the best available estimate of $a x_t$ and places the nominal successor state at the origin. With the estimation error $e_t := a - \hat a_t$ (so $\abs{e_t} \le L_t/2$), the closed loop reads
\begin{equation}\label{eq:closedloop}
    x_{t+1} \;=\; e_t\,x_t + w_t .
\end{equation}

\subsection{The controller is an oracle--selector composition}\label{sec:architecture}

Definition~\ref{def:controller} is an instance of the meta-algorithm $\Api$ of \citet{ho2021online}: at each step, present the data to a selector $\Sel$ to obtain a posited parameter $\theta_t$, then apply the policy $\pi[\theta_t]$ returned by a nominal-design oracle $\pi$:
\begin{center}
\begin{tikzpicture}[node distance=7mm and 12mm, every node/.style={font=\small}]
    \node[draw, rounded corners, minimum width=2.9cm, minimum height=8.5mm] (sel) {$\Sel$: $\;\theta_t = \midpt(A_t)$};
    \node[draw, rounded corners, minimum width=2.9cm, minimum height=8.5mm, right=of sel] (pi) {$\pi$: $\;u_t = -\theta_t\, x_t$};
    \node[draw, rounded corners, minimum width=2.9cm, minimum height=8.5mm, right=of pi] (sys) {$x_{t+1} = a x_t + u_t + w_t$};
    \draw[-{Latex}] (sel) -- node[above]{$\theta_t$} (pi);
    \draw[-{Latex}] (pi) -- node[above]{$u_t$} (sys);
    \draw[-{Latex}] (sys.south) -- ++(0,-5.5mm) -| node[below, pos=0.25]{data $(x_{s+1}, x_s, u_s)_{s < t}$ $\;\to\;$ consistent set $A_t$} (sel.south);
\end{tikzpicture}
\end{center}
The two components of Definition~\ref{def:controller} provably satisfy the framework's two interface conditions, stated for general nonlinear systems and compact parameter spaces in \citet{ho2021online}. Informally: the deadbeat law $\pi[\theta](x) = -\theta x$ is a \emph{$\rho$-robust oracle} for every $\rho \in (0,1)$ --- applied with any parameter sequence within $\rho$ of the true $a$, it confines the state to the geometric envelope $\abs{x_{t+1}} \le \rho\abs{x_t} + 1$, with a logarithmic mistake function for every threshold above the envelope's limit (a \emph{mistake}: a step at which $\abs{x_t}$ exceeds the threshold; Proposition~\ref{prop:oracle}); and the midpoint is a \emph{$1$-competitive consistent model chaser} on the nested consistent sets. It always selects a parameter consistent with all data seen so far, and its total movement over any time window is bounded by the Hausdorff distance between the consistent sets at the window's ends, and by $\Delta$ in total over the infinite horizon (Proposition~\ref{prop:selector}(ii)--(iii)). Appendix~\ref{app:interface} states the interface conditions formally (Definition~\ref{def:interface}) and proves both claims (Propositions~\ref{prop:oracle} and~\ref{prop:selector}).

The general theory of \citet{ho2021online} guarantees, for any such composition, worst-case boundedness and finite mistake bounds that are polynomial in $\Delta$ --- the framework's own scalar design example instantiates this pair (deadbeat oracle, Steiner selector) and derives an $O(\Delta^2)$ mistake guarantee. Theorem~\ref{thm:main} upgrades the qualitative guarantee for the problem \eqref{eq:minimax}: this composition attains the \emph{exact} minimax peak. Moreover, the decomposition is not a restriction of the design space at all:

\begin{definition}[Implicit selector]\label{def:implicit}
At any history with $x_t \neq 0$, the action of an arbitrary causal law $K \in \Kclass$ defines its \emph{implicit selector}
\[
    \theta_t \;:=\; -\,\frac{u_t}{x_t}, \qquad\text{so that}\qquad u_t \;=\; -\theta_t\,x_t \quad\text{identically.}
\]
No consistency is asserted: $\theta_t$ need not lie in $A_t$, nor even in the prior. (The recursion \eqref{eq:setmem} reads off observed data alone, so the consistent set $A_t$ --- an intersection of intervals, hence an interval --- is defined along the trajectory of \emph{any} causal law, not only of the law \eqref{eq:law}.)
\end{definition}

The identification is a tautology, and that is precisely its use: it converts a statement about a restricted \emph{class} of controllers into a statement about a \emph{quantity} that every controller possesses. Every causal law factors through the deadbeat oracle $\pi$ and some selector; the composition $\Api$ is a normal form for this problem, and the only question a design can answer is \emph{which} selector. Definition~\ref{def:controller} answers it with a selector that is consistent and centered, and both properties are forced: Proposition~\ref{prop:implicit} shows that the worst-case peak of any causal $K$ is at least $1 + s_t\sup_{\alpha \in A_t}\abs{\alpha - \theta_t}$ at every reachable history (writing $s_t := \abs{x_t}$ and $g_t := L_t s_t$ for the \emph{exposure}, the diameter of $\{\alpha x_t : \alpha \in A_t\}$ and thus how far the surviving uncertainty can still throw the state in one step, both formally introduced in Section~\ref{sec:ub}), so optimality caps every controller's exposure and, where that exposure is maximal, pins $\theta_t$ to the midpoint. At $x_t = 0$ the identification is vacuous; there Proposition~\ref{prop:probing} supplies the complementary forcing $u_0 = 0$ at the start. Concretely:

\begin{remark}[Passivity is necessary: $u_0 = 0$, and then $u_1 = 0$]\label{rem:u0}
Every optimal controller must play $u_0 = 0$: by Proposition~\ref{prop:probing} below, $\abs{u_0} = \eta > 0$ implies worst-case peak at least $1 + (1+\eta)\Delta > \gamma^\star(\Delta)$. The forcing extends one further step. After the kick $w_0 = \pm 1$ the state is $x_1 = \pm 1$ with the prior intact ($A_1 = A_0$, the regressor $x_0$ having leaked nothing), so the exposure $g_1 = 2\Delta$ is maximal and Proposition~\ref{prop:implicit}(iii) applies: every optimal causal controller plays $u_1 = 0$ \emph{exactly}, whatever its internal parameterization, and a deviation of size $d$ costs exactly $d$ --- worst-case peak $1 + \Delta + d$ (the lower bound is Proposition~\ref{prop:implicit}(iii); the matching upper bound is Remark~\ref{rem:u1exact}).
\end{remark}

\begin{remark}[The midpoint is the unique optimal selection]\label{rem:midunique}
At the critical step the midpoint is forced among \emph{all} causal controllers, not merely among certainty-equivalent deadbeat laws: by Definition~\ref{def:implicit} every law has an implicit selector $\theta_t = -u_t/x_t$, and Proposition~\ref{prop:implicit}(iii) leaves it no freedom at any history of maximal exposure. Proposition~\ref{prop:endpoint} prices the deviation as a function of the selector's relative position in the consistent interval, uniquely minimized at the midpoint. The forcing is local: it binds precisely on the maximal-exposure set $\{g_t = 2\Delta\}$ (which the optimal law of Definition~\ref{def:controller} visits only at the kick-from-rest histories), and elsewhere leaves the explicit tolerance displayed after Proposition~\ref{prop:implicit} in Section~\ref{sec:cap}, an outer bound that becomes positive strictly one step after the peak. Away from the midpoint the loss also compounds: an endpoint selector suffers at least $4\Delta - 1$ (Proposition~\ref{prop:echo}).
\end{remark}

\section{Lower bound: no causal controller beats \texorpdfstring{$1+\Delta$}{1+Delta}}\label{sec:lb}

\begin{proposition}[Lower bound]\label{prop:lb}
For every causal controller $K \in \Kclass$ there exist an admissible parameter $a \in [-\Delta,\Delta]$ and disturbance $\norm{w}_\infty \le 1$ such that the closed-loop trajectory satisfies $\norm{x}_\infty \ge 1 + \Delta$. Consequently $\gamma^\star(\Delta) \ge 1 + \Delta$.
\end{proposition}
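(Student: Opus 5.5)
An adversary strategy of at most two steps suffices: first a kick that reveals nothing about $a$, then a choice of $a$ and $w_1$ that exploits whatever the controller does next. Fix any causal $K \in \Kclass$. Since $x_0 = 0$, the first input $u_0 = K_0(0)$ is a deterministic number, independent of $a$. We branch on it.

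\emph{Case $u_0 \neq 0$.} Take $w_0 = \sgn(u_0)$, so that $x_1 = u_0 + w_0$ has $\abs{x_1} = \abs{u_0} + 1$. This already yields $\abs{x_1} \ge 1$ without any information about $a$ having been gathered, and the next step is treated exactly as in the other case, starting from a state of magnitude at least $1$.

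\emph{Case $u_0 = 0$.} Take $w_0 = 1$, so $x_1 = 1$.

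In either case we reach a state with $s := \abs{x_1} \ge 1$ and the same value of $x_1$ for every $a \in [-\Delta,\Delta]$: the regressor $x_0 = 0$ carries no information about $a$. Causality then gives a single input $u_1 = K_1(x_1, 0)$, common to all admissible $a$. Let $\theta := -u_1/x_1$ be the implicit selector of Definition~\ref{def:implicit}. Then
\[
x_2 = (a - \theta)\,x_1 + w_1 .
\]
Choose $a = -\Delta\,\sgn(\theta)$, or $a = \Delta$ if $\theta = 0$. This gives $\abs{a - \theta} = \Delta + \abs{\theta} \ge \Delta$. Next choose $w_1 = \pm 1$ with the sign of $(a-\theta)x_1$. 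Then
\[
\abs{x_2} = \abs{a-\theta}\,s + 1 \ge \Delta s + 1 \ge 1 + \Delta .
\]
Hence $\norm{x}_\infty \ge 1 + \Delta$. Taking the supremum over the adversary and the infimum over $K$ gives $\gamma^\star(\Delta) \ge 1 + \Delta$.

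The only point needing care is the justification that $u_1$ cannot depend on $a$. Here we use that $K_1$ sees only $(x_1, x_0)$, and that, by the choices above, these values are identical across all $a$, because the $a$-dependence of $x_1 = a x_0 + u_0 + w_0$ vanishes when $x_0 = 0$. The rest is the elementary sign alignment, and the argument is valid for every $\Delta > 0$.

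As a by-product, the same computation in the case $u_0 \neq 0$ gives $\abs{x_2} \ge 1 + (1 + \abs{u_0})\Delta$. This is consistent with the probing penalty quoted in Remark~\ref{rem:u0}, although that sharper bound is not needed here.
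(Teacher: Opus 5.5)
Your proof is correct and is essentially the paper's argument. You use the same information-free kick with $w_0$ aligned to $u_0$, then the extreme parameter aligned against the controller with an aligned $w_1$; your choice $a = -\Delta\,\sgn(\theta)$, $\theta = -u_1/x_1$, is exactly the paper's $a^\star = \Delta\,\sgn(u_1)\sgn(x_1)$ rewritten through the implicit selector. The only omission is cosmetic: to exhibit a complete admissible disturbance you should also fix $w_t$ for $t \ge 2$ (e.g.\ $w_t = 0$), which leaves the bound unchanged since the peak is already realized at $x_2$.
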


\begin{proof}
Fix $K = (K_0, K_1, \dots)$ and construct $(a,w)$ explicitly.

\emph{Step 1 (free, information-less kick to magnitude $\ge 1$).} The number $u_0 = K_0(0)$ is determined by $K$ alone. Choose
\[
    w_0 := \begin{cases} +1, & u_0 \ge 0,\\ -1, & u_0 < 0,\end{cases}
    \qquad\text{so that}\qquad
    x_1 = u_0 + w_0,\quad \abs{x_1} = \abs{u_0} + 1 \ge 1 .
\]
Because $x_0 = 0$, the transition $x_1 = a\cdot 0 + u_0 + w_0$ carries \emph{no information about $a$}: every $a \in [-\Delta, \Delta]$ is consistent with the data $(x_0, x_1)$, whatever $w_0$ did. Note also that $x_1$ does not depend on $a$, so the controller's next move $u_1 = K_1(x_1, x_0)$ is a fixed number once $w_0$ is fixed.

\emph{Step 2 (aligned extreme parameter).} Choose the extreme parameter whose contribution reinforces $u_1$:
\[
    a^\star := \Delta\,\sigma, \qquad
    \sigma := \begin{cases} \sgn(u_1)\,\sgn(x_1), & u_1 \neq 0,\\ \sgn(x_1), & u_1 = 0,\end{cases}
\]
so that $a^\star x_1$ and $u_1$ have the same sign (or $u_1 = 0$). Then $\abs{a^\star x_1 + u_1} = \Delta\abs{x_1} + \abs{u_1} \ge \Delta$. Finally take $w_1 := \sgn(a^\star x_1 + u_1) \in \{\pm1\}$ and $w_t := 0$ for $t \ge 2$. Then
\[
    \abs{x_2} \;=\; \abs{a^\star x_1 + u_1 + w_1} \;=\; \Delta\abs{x_1} + \abs{u_1} + 1 \;\ge\; 1 + \Delta . \qedhere
\]
\end{proof}

\begin{remark}[Two extremes suffice; randomization does not help]\label{rem:randomization}
The argument uses only the two extreme parameter values $\pm\Delta$ and two disturbance moves, and exploits a single structural fact: the state can be moved off the origin \emph{before} any information about $a$ exists. Randomization does not help: the construction reacts only to realized past states and inputs (with the parameter itself chosen only at $t = 1$, which is legitimate since no data restricts $a$ before then), so the bound extends to randomized controllers against a causal disturbance strategy; moreover, a Jensen argument (align $w_0$ with $\mathbb{E}[u_0]$, the extreme parameter with $\mathbb{E}[u_1]$, and $w_1$ with the resulting mean drift) shows that even an \emph{oblivious} disturbance sequence, chosen with knowledge of the randomized strategy but not of its realizations, forces $\mathbb{E}\,\norm{x}_\infty \ge 1 + \Delta$.
\end{remark}

\section{Upper bound: one identification spike is all the adversary gets}\label{sec:ub}

\subsection{The window lemma, the invariant, and the upper bound}

Throughout this section the controller is the law \eqref{eq:law}. Write $s_t := \abs{x_t}$ and define the \emph{exposure}
\[
    g_t \;:=\; L_t\,s_t \;=\; \dm\{a x_t : a \in A_t\},
\]
the functional diameter of the surviving uncertainty along the current state. The exposure governs the one-step reachable set: since the true $a$ lies in $A_t$ and $\abs{e_t} \le L_t/2$, the closed loop \eqref{eq:closedloop} gives
\begin{equation}\label{eq:reach}
    \abs{x_{t+1}} \;\le\; \frac{g_t}{2} + 1
    \qquad\text{for every $t$ and every admissible $(a,w)$.}
\end{equation}
The heart of the proof is that $g_t$ can never exceed $2\Delta$: increasing the state costs information about $a$ at exactly the rate needed to keep the exposure bounded. Everything follows from one elementary statement about intersecting an interval with a width-$2$ window.

\begin{lemma}[Window lemma: the information--exposure tradeoff]\label{lem:window}
Let $r \ge 0$ and $y \in \R$, and suppose $I := [-r,r] \cap [y-1, y+1]$ is nonempty, with $\mu := \abs{I}$. Then
\begin{equation}\label{eq:tradeoff}
    \text{(i)}\quad \abs{y} + \mu \;\le\; r + 1,
    \qquad\qquad
    \text{(ii)}\quad \mu\,\abs{y} \;\le\; 2r .
\end{equation}
\end{lemma}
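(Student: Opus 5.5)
Both inequalities reduce to elementary interval arithmetic once the reflection $y \mapsto -y$ (which leaves $[-r,r]$ invariant and maps $I$ to $-I$) lets us assume $y \ge 0$. Nonemptiness of $I$ then means $y - 1 \le r$, i.e.\ $y \le r+1$. The intersection is
\[
I = [\max(-r,\,y-1),\ \min(r,\,y+1)],
\]
and I will bound its length $\mu$ by choosing, in each case, whichever endpoint pair is convenient.

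For (i), I use the left endpoint $\max(-r, y-1) \ge y - 1$ and the right endpoint $\min(r, y+1) \le r$. This gives $\mu \le r - (y-1) = r + 1 - y$, which rearranges to $y + \mu \le r+1$. So (i) holds in all cases at once.

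For (ii), I combine (i) with the trivial bounds $\mu \le 2$ (the window has width $2$) and $\mu \le 2r$ (the prior has width $2r$). There are two cases.
\begin{itemize}
\item If $y \le r$, then $\mu y \le 2y \le 2r$.
\item If $r < y \le r+1$, then (i) gives $\mu \le r+1-y$. The function $y \mapsto y(r+1-y)$ is a downward parabola with vertex at $(r+1)/2$. For $r \ge 1$ that vertex lies at or below $r$, so on $[r, r+1]$ the function is decreasing, and its maximum is $r \cdot 1 = r \le 2r$. For $r < 1$ I instead use $\mu \le 2r$ together with $\mu \le r+1-y$. This needs more care, since $\mu y \le 2r\, y$ with $y$ up to $r+1$ is not enough by itself. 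The right move is to take the minimum of the two bounds: when $y \ge 1-r$ we have $r+1-y \le 2r$, so $\mu y \le y(r+1-y) \le \bigl((r+1)/2\bigr)^2$.
\end{itemize}

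That last bound is where I expect the obstacle, because $\bigl((r+1)/2\bigr)^2 \le 2r$ is false for small $r$. The simple case split is too lossy there, so I first re-derive $\mu$ exactly. For $0 \le y \le r+1$ and $y>0$, the exact length is $\mu = \min(r, y+1) - \max(-r, y-1)$.
\begin{itemize}
\item If $y \le 1 - r$ (possible only when $r \le 1$), the window contains all of $[-r,r]$, so $\mu = 2r$ and $\mu y = 2ry$. This can exceed $2r$ only if $y > 1$, which contradicts $y \le 1 - r$. So $\mu y \le 2r$ here.
\item If $1 - r < y$ and $y \ge r-1$, then $\mu = r + 1 - y$. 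Also $y \le 1+r$, and when $r<1$ this range forces $y > 1-r$. Maximizing $y(r+1-y)$ over $y \in [\max(1-r,\,r-1,\,0),\ r+1]$: when $r < 1$ the unconstrained vertex $(r+1)/2$ lies at or below $1-r$ exactly when $r \le 1/3$. In that regime the maximum is at $y = 1-r$, giving $(1-r)\cdot 2r \le 2r$. For $1/3 < r < 1$ the vertex value is $(r+1)^2/4$, and $(r+1)^2 \le 8r$ holds on this range because $r^2 - 6r + 1 \le 0$ for $r \in [3-2\sqrt2,\,3+2\sqrt2] \supset (1/3,1)$.
\item The remaining case $r \ge 1$ with $y < r - 1$ has $\mu = 2$, and then $\mu y < 2(r-1) \le 2r$.
\end{itemize}

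So the plan is to first fix $y \ge 0$ by symmetry, then prove (i) directly from the endpoint bounds, and finally prove (ii) by the exact three-regime formula for $\mu$ (window inside the prior, prior inside the window, partial overlap). The partial-overlap regime is handled by maximizing the quadratic $y(r+1-y)$ over its admissible range. The only delicate point is checking $(r+1)^2 \le 8r$ on the range where the vertex is admissible, which I have verified above.
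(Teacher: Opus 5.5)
Your proof is correct. It uses the same three geometric regimes as the paper: prior inside the window, window inside the prior, and right-clipped overlap. In the first two regimes your bounds for (ii), $\mu y \le 2r(1-r)$ and $\mu y < 2(r-1)$, coincide with the paper's.

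The route differs in two places.
\begin{itemize}
\item For (i), you give one uniform estimate, $\mu = \min(r,y+1) - \max(-r,y-1) \le r-(y-1)$. The paper instead checks (i) case by case, so yours is cleaner.
\item For (ii) in the clipped regime, the paper substitutes $y = r+1-\mu$ and uses the identity $2r - \mu y = r(2-\mu) + \mu(\mu-1)$. This is nonnegative by sign when $\mu \ge 1$, and by $r \ge \mu/2$ when $\mu < 1$. That takes two lines, with no splitting on $r$ and no optimization.
\end{itemize}
You instead maximize the quadratic $y(r+1-y)$ over the admissible $y$. This costs sub-cases in $r$ and the auxiliary inequality $r^2-6r+1 \le 0$. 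In return, carried through for all $r$, it gives the exact supremum of $\mu\abs{y}$:
\begin{itemize}
\item $2r(1-r)$ for $r \le \tfrac13$;
\item $\big(\tfrac{r+1}{2}\big)^2$ for $\tfrac13 \le r \le 3$;
\item $2(r-1)$ for $r \ge 3$.
\end{itemize}
This shows (ii) is never tight for $r>0$. It is also exactly the sharp envelope that the paper computes separately in the footnote following Proposition~\ref{prop:implicit}. Your method is longer, but it proves the lemma and that later sharpness claim together.

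One bookkeeping point for a final write-up concerns coverage. Your closing three-regime list treats the clipped regime only for $r<1$. The case $r \ge 1$, $y \ge r-1$ is covered only by your earlier bullets: $\mu y \le 2y \le 2r$ for $y \le r$, and the decreasing parabola on $[r,r+1]$ for $y>r$. Keep those bullets in the final argument rather than letting the summary suggest the second case split alone suffices. Alternatively, note that $y(r+1-y) \le (r+1)^2/4$ holds with no check of where the vertex sits, which removes the need to split on whether the vertex is admissible.
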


Part (i) says realized magnitude and surviving uncertainty are exchanged one-for-one against the same budget; part (ii) is the multiplicative form that yields the key invariant in one line.

\begin{proof}
Reflecting $(y,I) \mapsto (-y,-I)$ if necessary, assume $y \ge 0$; nonemptiness gives $y \le r+1$, and $\mu \le \min(2, 2r)$ because $I$ lies in both intervals. Three exhaustive cases.
\begin{itemize}[nosep]
    \item \emph{Window interior} ($y+1 \le r$): then also $y - 1 \ge -r$ (since $y \ge 0$ and $r \ge y+1 \ge 1$), so $\mu = 2$ and $y \le r-1$, whence $y + \mu \le r+1$ and $\mu y = 2y \le 2(r-1) < 2r$.
    \item \emph{Right-clipped} ($y - 1 \ge -r$, $y+1 > r$): then $I = [y-1, r]$, so $\mu = r+1-y$ and (i) holds \emph{with equality}. Substituting $y = r+1-\mu$,
    \[
        2r - \mu y \;=\; 2r - \mu(r+1-\mu) \;=\; r(2-\mu) + \mu(\mu - 1).
    \]
    If $\mu \ge 1$, both terms are nonnegative (recall $\mu \le 2$). If $\mu < 1$, use $r \ge \mu/2$: $r(2-\mu) - \mu(1-\mu) \ge \tfrac\mu2(2-\mu) - \mu(1-\mu) = \tfrac{\mu^2}{2} \ge 0$.
    \item \emph{Window covers the left end} ($y-1 < -r$, possible only if $r<1$ since $y \ge 0$): then $\mu \le 2r$, so $y + \mu < (1-r) + 2r = 1+r$, and $\mu y \le 2r(1-r) \le 2r$. \qedhere
\end{itemize}
\end{proof}

The lemma applies to the closed loop through a change of variables identifying the posterior consistent set with such an intersection.

\begin{corollary}[Posterior geometry]\label{cor:geom}
Fix $t$ with $x_t \neq 0$ and set $r := g_t/2$, $y := x_{t+1}$, and $\mu_{t+1} := s_t L_{t+1}$ (the posterior length in regressor units). Then $\mu_{t+1} = \abs{[-r,r] \cap [y-1, y+1]}$, so \eqref{eq:tradeoff} holds with these values; in particular
\[
    \text{(i)}\quad \abs{x_{t+1}} + s_t L_{t+1} \;\le\; \frac{g_t}{2} + 1,
    \qquad\qquad
    \text{(ii)}\quad s_t L_{t+1}\,\abs{x_{t+1}} \;\le\; g_t .
\]
\end{corollary}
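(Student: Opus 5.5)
The plan is a change of variables that rescales parameter space by the regressor, so that $A_{t+1}$ becomes exactly the intersection in Lemma~\ref{lem:window}. Since $x_t \neq 0$, the recursion \eqref{eq:setmem} gives $A_{t+1} = A_t \cap W_t$. Consider the affine map $\phi(\alpha) := (\alpha - \hat a_t)\,x_t$. It is a bijection of $\R$ that scales lengths by $s_t = \abs{x_t}$, so $\abs{\phi(J)} = s_t\abs{J}$ for every interval $J$. Under $\phi$, the interval $A_t = [\hat a_t - L_t/2,\, \hat a_t + L_t/2]$ maps onto $[-L_t s_t/2,\, L_t s_t/2] = [-r, r]$ with $r = g_t/2$. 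This holds whatever the sign of $x_t$, because the image of a symmetric interval is symmetric.

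Next I will compute the image of the data window. Substituting the law \eqref{eq:law}, $u_t = -\hat a_t x_t$, the defining inequality of $W_t$ in \eqref{eq:window} becomes $\abs{x_{t+1} - (\alpha - \hat a_t) x_t} \le 1$, that is, $\abs{y - \phi(\alpha)} \le 1$ with $y = x_{t+1}$. So $\phi(W_t) = [y-1, y+1]$. Because $\phi$ is a bijection, it commutes with intersection. Hence $\phi(A_{t+1}) = [-r,r] \cap [y-1,y+1]$, and its length is $s_t L_{t+1} = \mu_{t+1}$, which proves the identity claimed in the corollary.

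This intersection is nonempty, because the true $a$ lies in both $A_t$ and $W_t$: $A_t$ contains $a$ by the invariant noted in Definition~\ref{def:controller}, and $a \in W_t$ since $\abs{w_t} \le 1$. With $r \ge 0$, all hypotheses of Lemma~\ref{lem:window} hold. Part (i) of the lemma gives $\abs{x_{t+1}} + s_t L_{t+1} \le g_t/2 + 1$. Part (ii) gives $s_t L_{t+1}\abs{x_{t+1}} \le 2r = g_t$.

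The only step needing care is the translation of $W_t$. Here I must use the specific control law so that the window is centered correctly relative to the recentered prior. Everything else is bookkeeping of lengths under the scaling by $s_t$.
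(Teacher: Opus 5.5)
Your proposal is correct and follows essentially the same route as the paper: the affine recentering-and-rescaling $\varphi_t(\alpha) = (\alpha - \hat a_t)x_t$ sends $A_t$ to $[-r,r]$ and, via $u_t = -\hat a_t x_t$, sends $W_t$ to $[y-1,y+1]$. Both proofs then read off the posterior length as $s_t L_{t+1}$, get nonemptiness from the true parameter lying in both sets, and apply Lemma~\ref{lem:window}.
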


\begin{proof}
Consider the affine change of variables $\varphi_t : \R \to \R$, $\varphi_t(\alpha) := (\alpha - \theta_t)x_t$, written for a general selected point $\theta_t$; for the midpoint law considered here $\theta_t = \hat a_t$ (both vary with $t$; the index is fixed throughout this proof). Since $A_t$ is an interval centered at $\hat a_t = \theta_t$, its image is symmetric: $\varphi_t(A_t) = [-r, r]$ with $r = L_t s_t/2 = g_t/2$. The update \eqref{eq:setmem} intersects $A_t$ with the data window $W_t$, an interval of width $2/\abs{x_t}$ in $\alpha$-space; since $\alpha x_t + u_t = (\alpha - \hat a_t)x_t = \varphi_t(\alpha)$ and $x_{t+1} = e_t x_t + w_t$, its image is $\varphi_t(W_t) = [x_{t+1} - 1, x_{t+1}+1]$, of width $2$. Hence the posterior, in the $\varphi_t$-variable, is
\[
    I \;=\; \varphi_t(A_{t+1}) \;=\; \varphi_t(A_t) \cap \varphi_t(W_t) \;=\; [-r,r] \cap [y-1, y+1], \qquad y := x_{t+1},
\]
using that the bijection $\varphi_t$ commutes with intersection. Because $\varphi_t$ is affine with slope $x_t \neq 0$, it scales interval lengths by $\abs{x_t}$, so $\abs{I} = s_t L_{t+1} = \mu_{t+1}$. (The scale factor is the \emph{current} state $x_t$ --- the regressor through which the observed transition measures $a$ --- not the successor $x_{t+1}$, which only positions the window.) $I$ is nonempty because the true parameter lies in both intervals. Figure~\ref{fig:errorcoord} pictures the change of variables and the exchange it exposes.
\end{proof}

\begin{figure}[tbp]
\centering
\begin{tikzpicture}[x=1cm, y=1cm]

\begin{scope}
  \node[figlab, anchor=south west] at (-0.15,3.60) {\pnl{a} model space $\alpha$};
  \draw[axisline] (-0.1,-0.35) -- (6.35,-0.35)
        node[fignum, above left=-1pt and -2pt] {$\alpha$};

  \draw[gray!50, dashed, line width=.4pt] (2.6,0.43) -- (2.6,2.47);
  \draw[gray!50, dashed, line width=.4pt] (4.1,0.43) -- (4.1,2.47);
  \draw[cAdv, dash pattern=on 2pt off 1.5pt, line width=.5pt] (3.0,0.17) -- (3.0,3.38);
  \node[fignum, cAdv, anchor=south, inner sep=1pt] at (3.0,3.34) {$a$};

  \fill[priorband] (0.5,2.47) rectangle (4.1,2.73);
  \fill[winband]   (2.6,1.32) rectangle (5.4,1.58);
  \fill[postband]  (2.6,0.17) rectangle (4.1,0.43);
  \draw[priorband, fill=none] (0.5,2.47) rectangle (4.1,2.73);
  \draw[winband,   fill=none] (2.6,1.32) rectangle (5.4,1.58);
  \draw[postband,  fill=none] (2.6,0.17) rectangle (4.1,0.43);

  \node[figtiny, anchor=west, inner sep=2.5pt] at (4.1,2.60) {$A_t$};
  \node[figtiny, anchor=west, inner sep=2.5pt, cWin]  at (5.4,1.45) {$W_t$};
  \node[figtiny, anchor=east, inner sep=2.5pt, cPost] at (2.6,0.30) {$A_{t+1}$};

  \node[seldot] at (3.2,2.60) {};
  \node[fignum, anchor=north, inner sep=2.5pt] at (3.28,2.47) {$\theta_t$};
  \node[advdot] at (3.0,0.30) {};

  \draw[meas] (0.5,2.93) -- (4.1,2.93) node[measlab, pos=0.22, above=7pt] {$L_t$};
  \draw[meas] (2.6,1.78) -- (5.4,1.78) node[measlab, pos=0.72, above=7pt] {$2/s_t$};
  \draw[meas] (2.6,0.63) -- (4.1,0.63) node[measlab, midway, above=7pt] {$L_{t+1}$};
\end{scope}

\begin{scope}[xshift=6.65cm]
  \draw[-{Latex[length=2.6mm,width=1.8mm]}, cPriorE, line width=1.1pt]
        (0.2,1.45) -- (1.45,1.45);
  \node[callout, anchor=south, inner sep=3pt] at (0.82,1.57)
        {$\varphi_t(\alpha)=(\alpha-\theta_t)x_t$};
  \node[callout, anchor=north, inner sep=3pt, text width=2.3cm] at (0.82,1.35)
        {recenter at $\theta_t$,\\ rescale by $x_t$};
\end{scope}

\begin{scope}[xshift=8.4cm]
  \node[figlab, anchor=south west] at (-0.15,3.60) {\pnl{b} regressor coordinate $\beta$};
  \draw[axisline] (-0.1,-0.35) -- (6.55,-0.35)
        node[fignum, above left=-1pt and -2pt] {$\beta$};
  \draw[tick] (1.4,-0.47) -- (1.4,-0.23);
  \node[fignum, anchor=north, inner sep=2pt] at (1.4,-0.47) {$0$};
  \draw[tick] (5.0,-0.47) -- (5.0,-0.23);
  \node[fignum, anchor=north, inner sep=2pt] at (5.0,-0.47) {$y=x_{t+1}$};

  \draw[gray!50, dashed, line width=.4pt] (4.0,0.43) -- (4.0,2.47);
  \draw[gray!50, dashed, line width=.4pt] (5.6,0.43) -- (5.6,2.47);
  \draw[cAdv, dash pattern=on 2pt off 1.5pt, line width=.5pt] (4.6,0.17) -- (4.6,3.38);
  \node[fignum, cAdv, anchor=south, inner sep=1pt] at (4.6,3.34) {$\varphi_t(a)$};

  \fill[priorband] (0.2,2.47) rectangle (5.6,2.73);
  \fill[winband]   (4.0,1.32) rectangle (6.0,1.58);
  \fill[postband]  (4.0,0.17) rectangle (5.6,0.43);
  \draw[priorband, fill=none] (0.2,2.47) rectangle (5.6,2.73);
  \draw[winband,   fill=none] (4.0,1.32) rectangle (6.0,1.58);
  \draw[postband,  fill=none] (4.0,0.17) rectangle (5.6,0.43);

  \node[figtiny, anchor=west, inner sep=2.5pt] at (5.6,2.60) {$J_t$};
  \node[figtiny, anchor=east, inner sep=2.5pt, cWin]  at (4.0,1.45) {window};
  \node[figtiny, anchor=west, inner sep=2.5pt, cPost] at (5.6,0.30) {$I$};

  \node[seldot] at (1.4,2.60) {};
  \node[fignum, anchor=north, inner sep=2.5pt] at (1.30,2.47) {$\theta_t$};
  \node[advdot] at (4.6,0.30) {};

  \draw[meas] (0.2,2.93) -- (1.4,2.93) node[measlab, midway, above=7pt]
       {near reach $q_-$};
  \draw[meas] (1.4,2.93) -- (5.6,2.93) node[measlab, pos=0.34, above=7pt] {far reach $q_+$};
  \draw[meas] (4.0,1.78) -- (6.0,1.78) node[measlab, midway, above=7pt] {$2$};
  \draw[meas] (4.0,0.63) -- (5.6,0.63) node[measlab, midway, above=7pt]
       {$\mu=s_tL_{t+1}$};
\end{scope}

\begin{scope}[yshift=-2.95cm, xshift=3.55cm]
  \node[figlab, anchor=south west] at (-0.05,1.28)
       {\pnl{c} the clipped case is an exact exchange: \ $y+\mu=q_+\,{+}\,1$};
  \fill[cAdv!22] (0,0) rectangle (5.2,0.34);
  \fill[cPostF]  (5.2,0) rectangle (7.1,0.34);
  \fill[white]   (7.1,0) rectangle (7.9,0.34);
  \draw[cPriorE, line width=.6pt] (0,0) rectangle (7.9,0.34);
  \draw[gray!70, line width=.4pt] (5.2,0) -- (5.2,0.34);
  \draw[gray!70, line width=.4pt] (7.1,0) -- (7.1,0.34);
  \draw[meas] (5.2,-0.18) -- (0,-0.18)
        node[measlab, midway, below=7pt] {realized magnitude $y$};
  \draw[meas] (7.1,-0.18) -- (5.2,-0.18)
        node[measlab, midway, below=7pt] {surviving uncertainty $\mu$};
  \node[measlab, anchor=west, black!55] at (7.95,0.17) {slack};
  \draw[meas] (7.9,0.52) -- (0,0.52)
        node[measlab, midway, above=7pt] {budget $q_+ + 1$};
\end{scope}

\end{tikzpicture}
\caption{The change of variables behind Corollary~\ref{cor:geom} and Proposition~\ref{prop:implicit}.
\pnl{a} In model space the update intersects the consistent set $A_t$ with a data
window $W_t$ whose width $2/s_t$ shrinks as the state grows.
\pnl{b} The affine map $\varphi_t$ sends the posited model $\theta_t$ to the
origin and rescales by the regressor $x_t$. The image $J_t$ of $A_t$ straddles
$0$ with reaches $q_\pm$, both equal to $g_t/2$ for the midpoint law, matching
the radius $r$ of Lemma~\ref{lem:window}, and at most $E_t s_t$ for a general
implicit selector, where $E_t := \sup_{\alpha \in A_t}\abs{\alpha - \theta_t}$ is
the selector's worst-case error; the window becomes an interval of
width exactly $2$ \emph{centered at the realized successor state} $y=x_{t+1}$;
and the posterior length is read off as $\mu = s_tL_{t+1}$. The window lemma, the
invariant, and the attack primitives of Appendix~\ref{app:echo} are all
statements about this one picture: how much of $J_t$ a width-$2$ window placed at
$y$ can leave behind.
\pnl{c} Once $y>1$ the window sits strictly on one side of $0$, and if it is
clipped by the far end of $J_t$ (the only way to realize a large $y$), the
exchange is exact: magnitude realized now is subtracted, unit for unit, from the
uncertainty left to exploit later.}
\label{fig:errorcoord}
\end{figure}

\begin{proposition}[Key invariant]\label{prop:invariant}
Under the controller of Definition~\ref{def:controller}, for every admissible $(a,w)$ and every $t \ge 0$:
\begin{equation}\label{eq:invariant}
    g_{t+1} \;\le\; L_t \;\le\; 2\Delta .
\end{equation}
\end{proposition}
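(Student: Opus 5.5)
The plan is to prove the two inequalities separately, treating the right one as bookkeeping and the left one as a one-line consequence of Corollary~\ref{cor:geom}(ii).

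\emph{Right inequality.} The recursion \eqref{eq:setmem} only ever intersects, so the sets are nested: $A_{t+1} \subseteq A_t \subseteq \dots \subseteq A_0 = [-\Delta,\Delta]$. Hence $L_t \le L_0 = 2\Delta$ for every $t$. Each $A_t$ is nonempty because it contains the true $a$.

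\emph{Left inequality.} Split on whether the state sits at the origin.

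\emph{Case $x_t = 0$.} Here $A_{t+1} = A_t$, so $L_{t+1} = L_t$. The closed loop \eqref{eq:closedloop} gives $x_{t+1} = w_t$, so $s_{t+1} \le 1$. Therefore $g_{t+1} = L_{t+1}s_{t+1} \le L_t$.

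\emph{Case $x_t \neq 0$.} Apply Corollary~\ref{cor:geom}(ii), which states $s_t L_{t+1} s_{t+1} \le g_t = L_t s_t$. Dividing by $s_t > 0$ gives $L_{t+1}s_{t+1} \le L_t$, that is, $g_{t+1} \le L_t$.

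This is exactly where the midpoint choice is used. The corollary relies on $\varphi_t(A_t)$ being the symmetric interval $[-r,r]$, which holds only because $\theta_t = \hat a_t$ is the centre of $A_t$. That symmetry is what lets Lemma~\ref{lem:window}(ii) apply.

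\emph{Expected obstacle.} Given the earlier results, no step is substantive: the real content, the multiplicative bound $\mu\abs{y}\le 2r$, is already proved in Lemma~\ref{lem:window}. The only points needing care are the degenerate case $x_t = 0$, where the corollary is unavailable and the trivial bound $s_{t+1}\le 1$ is used instead, and the initial indices $t = 0$ and $t = 1$. Those initial indices are covered by the same two cases: $x_0 = 0$ gives $g_1 \le 2\Delta = L_0$. I would also note explicitly that the argument holds for every admissible $(a,w)$, since nothing beyond $a \in A_t$ and $\abs{w_t}\le 1$ is used.
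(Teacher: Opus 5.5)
Your proposal is correct and follows the paper's proof essentially verbatim: nestedness gives $L_t \le 2\Delta$. The left inequality uses the same two-case split, with $x_{t+1}=w_t$ and $L_{t+1}=L_t$ when $x_t=0$, and Corollary~\ref{cor:geom}(ii) divided by $s_t$ when $x_t\neq 0$.
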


\begin{proof}
$L_t \le 2\Delta$ is immediate from nestedness. For the first inequality, fix $t$. If $x_t = 0$: then $u_t = 0$, so $x_{t+1} = w_t$, $s_{t+1} \le 1$, and $L_{t+1} = L_t$; hence $g_{t+1} = L_t s_{t+1} \le L_t$. If $x_t \neq 0$: by Corollary~\ref{cor:geom}(ii), $\mu_{t+1}\abs{x_{t+1}} \le g_t$, whence
\[
    g_{t+1} \;=\; L_{t+1} s_{t+1} \;=\; \frac{\mu_{t+1}}{s_t}\,\abs{x_{t+1}} \;\le\; \frac{g_t}{s_t} \;=\; L_t . \qedhere
\]
\end{proof}

\begin{theorem}[Upper bound]\label{thm:ub}
The controller of Definition~\ref{def:controller} guarantees, for every $a \in [-\Delta, \Delta]$ and every $\norm{w}_\infty \le 1$, $\norm{x}_\infty \le 1 + \Delta$. Consequently $\gamma^\star(\Delta) \le 1 + \Delta$.
\end{theorem}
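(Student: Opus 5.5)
The plan is to combine the one-step reachability bound \eqref{eq:reach} with the key invariant of Proposition~\ref{prop:invariant}, handling the first steps separately. By \eqref{eq:reach}, $\abs{x_{t+1}} \le g_t/2 + 1$ for every $t$. Hence it suffices to show $g_t \le 2\Delta$ for every $t \ge 0$, since then $\abs{x_{t+1}} \le \Delta + 1$ for all $t$. Together with $x_0 = 0$ this gives $\norm{x}_\infty \le 1+\Delta$.

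\emph{Step 1 (base cases).} At $t = 0$ we have $x_0 = 0$, so $g_0 = 0$. At $t=1$ the controller plays $u_0 = 0$, so $x_1 = w_0$ and $s_1 \le 1$. Since $L_1 = 2\Delta$, this gives $g_1 = L_1 s_1 \le 2\Delta$. Alternatively, Proposition~\ref{prop:invariant} applied at $t=0$ directly gives $g_1 \le L_0 = 2\Delta$, so no separate computation is needed.

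\emph{Step 2 (all later times).} For $t \ge 1$, Proposition~\ref{prop:invariant} applied at index $t-1$ gives $g_t \le L_{t-1} \le 2\Delta$. That proposition already splits into the cases $x_{t-1} = 0$ (trivial) and $x_{t-1} \ne 0$, the latter via Corollary~\ref{cor:geom}(ii). So $g_t \le 2\Delta$ holds for every $t \ge 0$.

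\emph{Step 3 (conclude).} Substitute into \eqref{eq:reach}: $\abs{x_{t+1}} \le \tfrac{g_t}{2} + 1 \le \Delta + 1$ for all $t$. Taking the supremum over $t$, and noting that the bound is uniform in the admissible $(a,w)$, gives $\gamma^\star(\Delta) \le 1+\Delta$ by the definition \eqref{eq:minimax}.

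The one point needing care is justifying \eqref{eq:reach} itself. It requires that the true $a$ lies in $A_t$ for every $t$, so that $\abs{e_t} \le L_t/2$. I would verify this by induction from \eqref{eq:setmem}: $a \in [-\Delta,\Delta] = A_0$, and $a \in W_t$ because $\abs{x_{t+1} - a x_t - u_t} = \abs{w_t} \le 1$. The recursion therefore never discards the true parameter. The closed-loop identity \eqref{eq:closedloop} and the bound $\abs{w_t} \le 1$ then give \eqref{eq:reach}. Everything substantive has already been absorbed into the window lemma, so I expect no real obstacle beyond this bookkeeping.
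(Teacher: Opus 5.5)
Your proof is correct and is the paper's argument: the paper bounds $\abs{x_1} \le 1$ directly from $u_0 = 0$, and for $t \ge 1$ chains \eqref{eq:reach} with Proposition~\ref{prop:invariant} to get $\abs{x_{t+1}} \le g_t/2 + 1 \le L_{t-1}/2 + 1 \le 1 + \Delta$, which is exactly your Steps 2--3. Your induction showing that the true $a$ is never discarded from $A_t$ (needed for \eqref{eq:reach}) is correct; the paper only asserts this in Definition~\ref{def:controller} without writing it out.
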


\begin{proof}
$\abs{x_0} = 0$ and $\abs{x_1} = \abs{w_0} \le 1$ (since $u_0 = 0$). For $t \ge 1$, \eqref{eq:reach} and Proposition~\ref{prop:invariant} give
$\abs{x_{t+1}} \le g_t/2 + 1 \le L_{t-1}/2 + 1 \le \Delta + 1$.
\end{proof}

Theorem~\ref{thm:ub} together with Proposition~\ref{prop:lb} proves Theorem~\ref{thm:main}, the infimum and suprema all being attained.

\begin{remark}[The bound restarts from any low state]\label{rem:restart}
The proof of Theorem~\ref{thm:ub} uses only $\abs{x_1} \le 1$ together with Proposition~\ref{prop:invariant}; the hypothesis $x_0 = 0$ enters solely to produce that bound. Hence the reusable form: if at any time $\tau$ the state satisfies $\abs{x_\tau} \le 1$ and the surviving set has half-width $L_\tau/2 \le \delta$, then $\abs{x_t} \le 1 + \delta$ for every $t \ge \tau$ --- immediately, since $\abs{x_{\tau+1}} \le g_\tau/2 + 1 \le \delta + 1$ and thereafter $\abs{x_{t+1}} \le L_{t-1}/2 + 1 \le \delta + 1$ by nestedness. This is the form invoked when the invariant is re-run from a post-probe consistent set in Propositions~\ref{prop:probing} and~\ref{prop:regret}, where $\delta$ is far smaller than $\Delta$.
\end{remark}

\begin{remark}[Anatomy of the worst case]\label{rem:anatomy}
Against the optimal controller the adversary's optimal play is: (i) use the disturbance to move the state from $0$ to $\pm1$ (this leaks nothing about $a$ because the state was $0$); (ii) immediately cash in with an extreme parameter $a = \pm\Delta$ aligned against the control and $w = \pm1$, producing the spike $\abs{x_2} = \Delta + 1$. The spike is self-defeating: by Lemma~\ref{lem:window}(i), realizing $\abs{x_2} = g_1/2 + 1$ forces $\mu_2 = 0$ --- the posterior collapses to a point, $a$ becomes known exactly, and thereafter deadbeat control keeps $\abs{x_t} \le 1$ forever. Any partial spike likewise buys the controller information at the exact rate \eqref{eq:tradeoff}; the invariant \eqref{eq:invariant} says the adversary can never accumulate more exposure than the prior diameter, no matter how it schedules the attack. A stealth adversary that keeps the state small teaches the controller nothing, but then no peak ever forms.
\end{remark}

\begin{remark}[Exactness, uniqueness, generalizations]\label{rem:extensions}
\emph{(a)} The value $1+\Delta$ is exact for every $\Delta > 0$, and the price of identification is exactly linear: $\gamma^\star(\Delta) - \gamma^\star(0) = \Delta$.
\emph{(b)} The optimal controller is not unique off worst-case paths, but every optimal controller must play $u_0 = 0$ and then $u_1 = 0$ (Remark~\ref{rem:u0}), and must respect the exposure cap and midpoint forcing of Proposition~\ref{prop:implicit} everywhere.
\emph{(c)} For an asymmetric prior $a \in [\underline a, \bar a]$ with diameter $D := \bar a - \underline a$, the value is $\gamma^\star = 1 + D/2$: the upper-bound argument applies verbatim (no step uses symmetry of the prior --- only that $A_t$ is an interval and $\hat a_t$ its midpoint), while in the lower bound the sign-aligned extreme is replaced by the endpoint whose response has larger magnitude: the two endpoint responses $\underline a\, x_1 + u_1$ and $\bar a\, x_1 + u_1$ differ by $D\abs{x_1} \ge D$, so one of them has magnitude at least $D/2$. The value thus depends only on the \emph{diameter} of the parameter uncertainty, not on how unstable the nominal system is.
\end{remark}

\subsection{The exposure cap is necessary for every controller}\label{sec:cap}

The invariant \eqref{eq:invariant} was derived for the law \eqref{eq:law}. We close the section by showing that its conclusion is a constraint on \emph{every} optimal controller, and that it follows from one aligned attack, with no window lemma at all. For this one proposition we suspend the standing assumption of the section and let $K \in \Kclass$ be arbitrary; $A_t$, $s_t$, $L_t$ and $g_t = L_t s_t$ are read off the realized trajectory as before, $\theta_t = -u_t/x_t$ is the implicit selector of Definition~\ref{def:implicit}, and
\[
    E_t \;:=\; \sup_{\alpha \in A_t}\abs{\alpha - \theta_t}
\]
is that selector's worst-case error (for the law \eqref{eq:law} it equals $L_t/2$, the bound on $\abs{e_t}$ in \eqref{eq:closedloop}). Write $\gamma(K) := \sup_{a,\,\norm{w}_\infty \le 1}\norm{x(K,a,w)}_\infty$. A history is \emph{reachable under $K$} if it is realized by $K$ together with some admissible pair $(a, w)$.

\begin{proposition}[The exposure cap is necessary]\label{prop:implicit}
Let $K \in \Kclass$. At every history reachable under $K$ with $x_t \neq 0$:
\begin{enumerate}[label=(\roman*), nosep]
    \item $\displaystyle \gamma(K) \;\ge\; 1 + E_t\,s_t \;\ge\; 1 + \frac{g_t}{2}$, the second inequality holding with equality if and only if $\theta_t = \midpt(A_t)$;
    \item hence if $K$ is optimal, $\gamma(K) = 1 + \Delta$, then $E_t s_t \le \Delta$ and in particular $g_t \le 2\Delta$: the cap of Proposition~\ref{prop:invariant}, now as a necessary condition on \emph{every} optimal controller;
    \item and at a history of maximal exposure $g_t = 2\Delta$ the implicit selector is forced, $\theta_t = \midpt(A_t)$, i.e.\ $u_t = -\midpt(A_t)\,x_t$: a deviation $d := \abs{\theta_t - \midpt(A_t)} > 0$ costs at least $d\,s_t$, in that $\gamma(K) \ge 1 + \Delta + d\,s_t$.
\end{enumerate}
\end{proposition}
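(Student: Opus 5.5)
The proof is a single aligned attack launched from the given reachable history, with no window lemma. Fix a history $(x_0,\dots,x_t)$ reachable under $K$ with $x_t \neq 0$, realized by some admissible pair $(a_0, w_0,\dots,w_{t-1})$. The consistent set $A_t$ is read off this history.

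\emph{Reachability under every consistent parameter.} Every $\alpha \in A_t$ reproduces the same history, with disturbances $w_s' := x_{s+1} - \alpha x_s - u_s$. These satisfy $\abs{w'_s}\le 1$ exactly because $\alpha \in W_s$ for each $s<t$ with $x_s\neq 0$. When $x_s = 0$ the transition does not involve $\alpha$, so we keep $w'_s = w_s$. Since $K$ is causal and deterministic, it produces the same inputs $u_0,\dots,u_t$ along this history. In particular the current input is $u_t = -\theta_t x_t$ regardless of which $\alpha \in A_t$ is true.

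\emph{Proof of (i).} Pick $\alpha^\star \in A_t$, which is compact, attaining $E_t = \abs{\alpha^\star - \theta_t}$. Set $w_t := \sgn\bigl((\alpha^\star - \theta_t)x_t\bigr)$, with either sign if the product is zero. Then
\[
\abs{x_{t+1}} = \abs{(\alpha^\star-\theta_t)x_t + w_t} = E_t s_t + 1 ,
\]
so $\gamma(K) \ge 1 + E_t s_t$. For the second inequality, write $A_t = [c - L_t/2,\, c + L_t/2]$ with $c = \midpt(A_t)$. Then $E_t = \abs{\theta_t - c} + L_t/2$, which is at least $L_t/2$, with equality exactly when $\theta_t = c$. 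Multiplying by $s_t$ gives $E_t s_t \ge g_t/2$, with equality iff $\theta_t = \midpt(A_t)$. This identity $E_t = L_t/2 + d$ also gives (iii) directly.

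\emph{Proof of (ii).} If $\gamma(K) = 1 + \Delta$, then (i) gives $E_t s_t \le \Delta$, and hence $g_t \le 2E_t s_t \le 2\Delta$.

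\emph{Proof of (iii).} At $g_t = 2\Delta$ we have $E_t s_t = (L_t/2 + d)s_t = \Delta + d s_t$. Then (i) gives $\gamma(K) \ge 1 + \Delta + d s_t$. If $d > 0$ this exceeds $1 + \Delta$, so an optimal $K$ must have $\theta_t = \midpt(A_t)$.

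The only delicate step is the reachability claim. We must check that switching the true parameter to $\alpha^\star$ keeps every earlier disturbance admissible, including at times with $x_s = 0$, and that causality freezes $u_t$. This follows directly from the definitions of $W_s$ and the update \eqref{eq:setmem}.
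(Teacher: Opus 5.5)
Your proposal is correct and is essentially the paper's proof. It uses the same deferred-commitment observation that every $\alpha \in A_t$ reproduces the observed prefix (so causality freezes $\theta_t$), the same aligned attack at the endpoint of $A_t$ farthest from $\theta_t$, and the same identity $E_t = L_t/2 + \abs{\theta_t - \midpt(A_t)}$ for (iii). The differences are cosmetic: you also use that identity for the second inequality in (i), where the paper bounds the maximum of the two endpoint distances by their average, and you treat steps with $x_s = 0$ explicitly, which the paper leaves implicit in ``because $\alpha \in W_s$''.
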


\begin{proof}
\emph{(i)} Fix the observed prefix. Every $\alpha \in A_t$ is consistent with it: taking $w_s^\alpha := x_{s+1} - \alpha x_s - u_s$ for $s < t$ gives $\abs{w_s^\alpha} \le 1$ because $\alpha \in W_s$, and these disturbances reproduce the prefix under parameter $\alpha$. Since $K$ is causal and deterministic and the prefix is the same, $K$ plays the same $u_t$, hence the same $\theta_t$, under every such $\alpha$. The adversary therefore \emph{defers its commitment}: it takes $a^\star \in \arg\max_{\alpha \in A_t}\abs{\alpha - \theta_t}$, an endpoint of the closed interval $A_t$ (so the maximum is attained), fixes the corresponding prefix disturbances $w_s^{a^\star}$, and aligns $w_t := \sgn\big((a^\star - \theta_t)x_t\big)$, giving
\[
    \abs{x_{t+1}} \;=\; \abs{(a^\star - \theta_t)x_t + w_t} \;=\; E_t\,s_t + 1 .
\]
For the second inequality,
\[
    E_t \;=\; \max\big(\abs{\theta_t - \min A_t},\, \abs{\theta_t - \max A_t}\big) \;\ge\; \tfrac12\big(\abs{\theta_t - \min A_t} + \abs{\theta_t - \max A_t}\big) \;\ge\; L_t/2,
\]
with equality throughout exactly when $\theta_t$ is the midpoint $\midpt(A_t)$.
\emph{(ii)} is \emph{(i)} evaluated at $\gamma(K) = 1+\Delta$.
\emph{(iii)} For an interval, $E_t = L_t/2 + \abs{\theta_t - \midpt(A_t)}$ (the worst corner is the one the selector moved away from), so $g_t = 2\Delta$ gives $E_t s_t = \Delta + d\,s_t$, and \emph{(i)} applies.
\end{proof}

Two readings. First, Proposition~\ref{prop:invariant} and Proposition~\ref{prop:implicit}(ii) are converses: the window lemma says the exposure cap $g_t \le 2\Delta$ can be \emph{maintained}; the deferred-commitment attack says it must be. The cap is a law of the problem, not a property of the law \eqref{eq:law}. (Only the cap is forced: the recursive form $g_{t+1} \le L_t$ of \eqref{eq:invariant} is a property of the midpoint law, and an optimal controller with submaximal exposure may violate it within the cap.) Second, the forcing in \emph{(iii)} is genuinely local, and quantifiably so: away from maximal exposure it leaves the tolerance
\[
    \abs{\theta_t - \midpt(A_t)} \;\le\; \frac{2\Delta - g_t}{2\,s_t},
\]
an outer bound which is strictly positive as soon as the exposure is submaximal, as it is at every step following a maximal-exposure one.\footnote{At such a step $s_t \ge 1$ (since $g_t = 2\Delta$ and $L_t \le 2\Delta$) and $\theta_t = \midpt(A_t)$ is forced, so Corollary~\ref{cor:geom} applies: $g_{t+1} \le \mu_{t+1}\abs{x_{t+1}}$ with $\mu_{t+1} \le \min(2, g_t)$ and $\abs{x_{t+1}} + \mu_{t+1} \le g_t/2 + 1$. Maximizing $\mu_{t+1}\abs{x_{t+1}}$ under these constraints gives $2(\Delta - 1)$ for $\Delta \ge 3$, $(\Delta+1)^2/4$ for $\tfrac13 \le \Delta \le 3$, and $2\Delta(1-\Delta)$ for $\Delta \le \tfrac13$; in every case strictly below $2\Delta$: a maximal-exposure history is never followed by another one.} The optimal controller is thus pinned exactly on the maximal-exposure set $\{g_t = 2\Delta\}$, and constrained elsewhere by a tolerance that closes linearly as the exposure approaches its cap. The tolerance is necessary but not sufficient: a deviation also reshapes the posterior, and the adversary can re-aim the surviving uncertainty at the same bias --- the echo of Section~\ref{sec:exploration}.

\begin{remark}[The post-kick deviation penalty is exactly $d$]\label{rem:u1exact}
Remark~\ref{rem:u0} claims that at the post-kick history a deviation of size $d$ costs \emph{exactly} $d$. The lower bound is Proposition~\ref{prop:implicit}(iii); the matching upper bound takes two lines. Deviate by $d$ at the post-kick history --- at every other history play the midpoint law, so an adversary that does not kick fully ($s_1 < 1$) faces the unmodified law and Theorem~\ref{thm:ub} caps its peak at $1 + \Delta$. Against the full kick, the one-step reach is $1 + \Delta + d$; for the echo step, the image of $A_1$ in the regressor coordinate of Corollary~\ref{cor:geom} is an interval with reaches $\Delta \pm d$, contained in $[-(\Delta+d),\, \Delta+d]$, and enlarging the interval only enlarges the intersection, so Lemma~\ref{lem:window}(ii) gives $g_2 \le 2(\Delta + d)$ and $\abs{x_3} \le 1 + \Delta + d$. From $t = 2$ on the law is the midpoint law and the invariant applies step by step (as in Remark~\ref{rem:restart}) with $L_2 \le 2$ (the data window has width $2$ at $s_1 = 1$), so every later state is at most $1 + \min(1, \Delta) \le 1 + \Delta + d$. Verified over a deferred-commitment adversary search in Section~\ref{sec:numerics}.
\end{remark}

\begin{corollary}[Nonzero initial conditions]\label{cor:nonzeroic}
Fix $s_0 := \abs{x_0} \ge 1$ and run the update \eqref{eq:setmem} from the start, so that $A_1 = A_0 \cap W_0$. Then for every $\Delta > 0$ the value of the game from $x_0$ is $\max\big(s_0,\; 1 + \Delta s_0\big)$, and the law \eqref{eq:law} attains it. Unless the game degenerates --- the given start already being the minimax peak, $s_0 > 1 + \Delta s_0$, which requires $\Delta < 1 - 1/s_0$ and thus never occurs for $\Delta \ge 1$ --- every optimal controller plays $\theta_0 = \midpt(A_0)$, i.e.\ $u_0 = 0$, and a deviation of size $d$ costs at least $d\,s_0$.
\end{corollary}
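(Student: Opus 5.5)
The plan is to prove matching lower and upper bounds of $\max(s_0, 1+\Delta s_0)$ and then extract the forcing statement from Proposition~\ref{prop:implicit}.

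\emph{Lower bound.} The peak is trivially at least $s_0 = \abs{x_0}$. For the other term we apply the deferred-commitment attack of Proposition~\ref{prop:implicit}(i) at $t=0$. The history consists of $x_0$ alone, $A_0 = [-\Delta,\Delta]$ is unrestricted, and $x_0 \neq 0$. So any causal $K$ has $\gamma(K) \ge 1 + E_0 s_0 \ge 1 + \tfrac{g_0}{2} = 1 + \Delta s_0$. The reachability and consistency hypotheses are vacuous here, since the empty prefix is consistent with every $\alpha$. By part (iii), with $E_0 = \Delta + \abs{\theta_0 - \midpt(A_0)}$, a deviation $d$ costs at least $d s_0$, giving $\gamma(K) \ge 1 + \Delta s_0 + d s_0$. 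In the nondegenerate case $1 + \Delta s_0 \ge s_0$ this exceeds the value whenever $d>0$, so every optimal controller has $\theta_0 = 0$, i.e.\ $u_0 = 0$. Finally, $s_0 > 1 + \Delta s_0$ is equivalent to $\Delta < 1 - 1/s_0$, which is impossible for $\Delta \ge 1$.

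\emph{Upper bound for the law \eqref{eq:law}.} The key observation is that Proposition~\ref{prop:invariant} never used $x_0=0$: its proof is step-by-step via Corollary~\ref{cor:geom}, which only needs $A_t$ to be an interval centered at the selected point, with the update \eqref{eq:setmem} now run from $t=0$. The step $t=0$ can be handled directly. Here $u_0 = 0$, and \eqref{eq:reach} gives $\abs{x_1} \le g_0/2 + 1 = \Delta s_0 + 1$. For $t \ge 1$, the invariant $g_t \le L_{t-1} \le 2\Delta$ gives $\abs{x_{t+1}} \le g_t/2 + 1 \le \Delta + 1 \le \Delta s_0 + 1$, using $s_0 \ge 1$. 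Together with $\abs{x_0} = s_0$, the peak is at most $\max(s_0, 1+\Delta s_0)$.

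I need to double-check that the invariant still gives $g_1 \le L_0 = 2\Delta$ now that $A_1 = A_0 \cap W_0$ is a genuine posterior. In the old setting $A_1 = A_0$ held only because $x_0 = 0$. Here the $x_t \neq 0$ branch of Proposition~\ref{prop:invariant} applies at $t=0$: Corollary~\ref{cor:geom}(ii) gives $\mu_1 \abs{x_1} \le g_0$, hence $g_1 \le g_0/s_0 = L_0$. That is exactly the required inequality, and the later steps are unchanged.

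The main obstacle I anticipate is just this bookkeeping: confirming that the invariant's proof carries over without an initial $x_0=0$ step and that the case $t=0$ with $s_0 \ge 1$ enters only through the final comparison $\Delta+1 \le \Delta s_0 + 1$. The hypothesis $s_0 \ge 1$ is used exactly there. For $s_0<1$ the later spike $1+\Delta$ would dominate, which explains why the corollary is restricted to $s_0 \ge 1$.
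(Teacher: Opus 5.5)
Your proposal is correct and follows the paper's proof essentially step for step. The lower bound uses the same one-step aligned attack at $t=0$: the paper phrases it as Step~2 of Proposition~\ref{prop:lb} with $x_0$ in place of $x_1$, you as the deferred-commitment attack of Proposition~\ref{prop:implicit}(i), and both realize $\abs{x_1} = \Delta s_0 + \abs{u_0} + 1$. The upper bound is the reach bound at $t=0$ plus the invariant run from $t=0$, and the forcing is $E_0 = \Delta + d$ fed into Proposition~\ref{prop:implicit}(i).

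One small citation point: part (iii) is stated under the hypothesis $g_t = 2\Delta$, which fails at $t=0$ when $s_0 > 1$ (there $g_0 = 2\Delta s_0$). What you actually use is the interval identity $E_0 = L_0/2 + \abs{\theta_0 - \midpt(A_0)}$ combined with part (i), which is exactly what the paper does.
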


\begin{proof}
\emph{Lower bound.} The start contributes $s_0$ to the peak outright, and $u_0 = K_0(x_0)$ is a fixed number, so Step~2 of Proposition~\ref{prop:lb} applies verbatim with $x_0$ in place of $x_1$ --- the initial condition already supplies the regressor that the kick there buys: the aligned extreme parameter and disturbance realize $\abs{x_1} = \Delta s_0 + \abs{u_0} + 1 \ge 1 + \Delta s_0$.
\emph{Upper bound.} By \eqref{eq:reach}, $\abs{x_1} \le g_0/2 + 1 = \Delta s_0 + 1$; Proposition~\ref{prop:invariant}, whose proof nowhere uses $x_0 = 0$, applies from $t = 0$ and caps every later state by $\Delta + 1 \le \Delta s_0 + 1$.
\emph{Forcing.} If $1 + \Delta s_0 \ge s_0$, optimality and Proposition~\ref{prop:implicit}(i) at $t = 0$ give $1 + E_0 s_0 \le 1 + \Delta s_0$, so $E_0 \le \Delta = L_0/2$, attained only at $\theta_0 = \midpt(A_0)$; a deviation of size $d$ makes $E_0 = \Delta + d$.
\end{proof}

The forced action is again $u_0 = 0$, now by symmetry of the prior rather than by passivity. In the degenerate case the forcing relaxes, by Proposition~\ref{prop:implicit}(i), to the tolerance $\abs{\theta_0} \le (s_0 - 1)/s_0 - \Delta$, and the tolerance is genuine: any static gain within it is also optimal from such a start, as it must be, since for $\Delta < 1$ adaptation is optional (Remark~\ref{rem:adapt}). Nothing deteriorates with the initial condition: the value stays linear in $s_0$, one spike deep, and the architecture is unchanged.

\section{Why probing, commitment, and optimism are suboptimal}\label{sec:exploration}

The optimal policy of Definition~\ref{def:controller} contains none of the mechanisms that the stochastic adaptive control and reinforcement learning traditions treat as essential: no probing or dither signal, no persistency-of-excitation condition, no optimism, no explicit identification phase, no regret-driven balancing of exploration against exploitation. This section proves that in the adversarial setting this is not an idiosyncrasy of one clever policy --- each of those mechanisms is \emph{structurally} suboptimal for the problem \eqref{eq:minimax}. The common cause is an information-timing asymmetry:
\begin{quote}
\emph{Exposure is paid at time $t$; the information it buys arrives at time $t+1$; and the adversary chooses when exposure is expensive.}
\end{quote}
In stochastic settings a probe's cost is a constant amortized against future gain; here, the adversary aligns the unknown parameter and disturbance against the probe at the moment of maximal uncertainty, so the probe's cost is multiplied by $\Delta$ \emph{before} its information can be used.

The scope of these results is the objective \eqref{eq:minimax}: peak cost, full actuation, no input penalty, initial state at rest. We do not claim that exploration is universally counterproductive in adversarial games. Under quadratic accumulated-cost criteria, minimax-optimal policies can themselves be dual, trading exploration against exploitation, even with full actuation and no input penalty \citep{rantzer2025actuated}, and likewise for a single unknown input direction under an input-weighted criterion \citep{rantzer2026single}. The claim here is both narrower and sharper: for the peak objective every standard mechanism pays a penalty that the amortization intuition underestimates, and that penalty can be bounded below explicitly. Matching upper and lower bounds are supplied for the \emph{optimal} law at every $\Delta$ (Theorem~\ref{thm:main}); for the suboptimal alternatives priced in this section we give lower bounds only (the probing penalty, which is attained exactly, being the one exception), since the point is the separation, not the constant.

\subsection{Probing is punished before it pays}

\begin{proposition}[Linear-in-$\Delta$ probing penalty]\label{prop:probing}
Let $K \in \Kclass$ with $u_0 = K_0(0)$ and $\eta := \abs{u_0}$. Then
\[
    \sup_{a,\,\norm{w}_\infty \le 1} \norm{x(K,a,w)}_\infty \;\ge\; \max\Big(1 + \eta,\;\; 1 + (1+\eta)\,\Delta\Big),
\]
and the bound is tight: superimposing the probe on the optimal law of Definition~\ref{def:controller} attains it exactly. In particular, any controller that explores with a nonzero input before information exists forfeits optimality by at least $\eta\Delta$: the probe is amplified by the full parameter uncertainty one step before its information becomes available.
\end{proposition}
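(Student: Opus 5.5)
The plan is to reuse the two-step attack of Proposition~\ref{prop:lb}, keeping track of the probe magnitude $\eta$ instead of discarding it. For tightness, I will rerun the upper-bound argument of Theorem~\ref{thm:ub} for the probed law, where only the first two steps change.

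\emph{Lower bound.} Fix $K$ and take $w_0 := \sgn(u_0)$, with $w_0 := +1$ if $u_0 = 0$. Then $x_1 = u_0 + w_0$, so $\abs{x_1} = 1 + \eta$. This already gives the term $1+\eta$ in the maximum. Because $x_0 = 0$, the transition carries no information about $a$, and $x_1$ does not depend on $a$. Hence $u_1 = K_1(x_1, x_0)$ is a fixed number. Step~2 of Proposition~\ref{prop:lb} now applies verbatim. We choose the extreme parameter $a^\star = \pm\Delta$ so that $a^\star x_1$ has the sign of $u_1$ (or so that $u_1 = 0$), and set $w_1 := \sgn(a^\star x_1 + u_1)$. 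This gives
\[
\abs{x_2} = \Delta\abs{x_1} + \abs{u_1} + 1 \ge 1 + (1+\eta)\Delta .
\]
Together with $\abs{x_1} = 1+\eta$, this yields the stated lower bound. The forfeit of at least $\eta\Delta$ relative to $\gamma^\star = 1+\Delta$ then follows immediately.

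\emph{Tightness.} Let $K^\eta$ play $u_0$ with $\abs{u_0} = \eta$ and then the law \eqref{eq:law} for all $t \ge 1$. Since $x_0 = 0$, the recursion \eqref{eq:setmem} still gives $A_1 = A_0 = [-\Delta,\Delta]$, so $\hat a_1 = 0$ and $u_1 = 0$. We have $\abs{x_1} = \abs{u_0 + w_0} \le 1+\eta$. The reach bound \eqref{eq:reach} at $t=1$ gives $\abs{x_2} \le g_1/2 + 1 = \Delta\abs{x_1} + 1 \le 1 + (1+\eta)\Delta$.

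For $t \ge 2$, I will invoke Proposition~\ref{prop:invariant} only at steps $t \ge 1$. Its proof at step $t$ uses only Corollary~\ref{cor:geom}, or the case $x_t = 0$, together with the midpoint law at time $t$. Both are in force for $t \ge 1$. This gives $g_{t+1} \le L_t \le 2\Delta$, so $\abs{x_{t+2}} \le g_{t+1}/2 + 1 \le 1 + \Delta$. That is within the claimed bound, in the same way as Remark~\ref{rem:restart}.

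Hence $\gamma(K^\eta) \le \max\big(1+\eta,\, 1 + (1+\eta)\Delta\big)$. The attack above achieves both terms of the maximum against $K^\eta$, with equality because $u_1 = 0$. So the bound is attained exactly.

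\emph{Main obstacle.} The only delicate point is legitimately reusing the invariant. Its proof of the step $t=0$ assumed $u_0 = 0$, which fails under the probe. I will avoid that step entirely by bounding $g_1 = 2\Delta\abs{x_1}$ directly through \eqref{eq:reach}. The invariant is then applied only from $t=1$ onward, where the controller coincides with the midpoint law and the consistent sets are the genuine ones.
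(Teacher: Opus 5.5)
Your proposal is correct and follows the paper's proof essentially step for step. It uses the same aligned kick-and-cash attack of Proposition~\ref{prop:lb}, keeping track of $\abs{x_1} = 1+\eta$. For tightness it uses the same two facts: the post-probe exposure is at most $2(1+\eta)\Delta$, and the invariant \eqref{eq:invariant} applies from $t \ge 1$, where the probed law coincides with the midpoint law (your observation that the probe breaks only the $t=0$ step of the invariant is exactly why the paper restricts it to $t \ge 1$).
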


\begin{proof}
Run the construction of Proposition~\ref{prop:lb}: the aligned kick gives $\abs{x_1} = \eta + 1$, so the peak is at least $\eta+1$; no information is revealed (the regressor is $x_0 = 0$), and the aligned extreme parameter and disturbance give $\abs{x_2} = \Delta\abs{x_1} + \abs{u_1} + 1 \ge 1 + (1+\eta)\Delta$. For tightness, the probe-then-midpoint law has post-probe exposure at most $2(1+\eta)\Delta$, and the invariant \eqref{eq:invariant} (applied from $t \ge 1$, where the law coincides with Definition~\ref{def:controller}) caps every later peak at $1 + \Delta$.
\end{proof}

The statement time-shifts: if the adversary can steer play to a history with $x_t = 0$ while the consistent set still has length $L$ (e.g., by playing $w \equiv 0$ until the controller first departs from zero at a zero state --- along which no information ever leaks), then a probe of size $\eta$ at that point is punished by at least $\max\big(1+\eta,\, 1 + (1+\eta)L/2\big)$. Consequently every optimal controller is passive at \emph{every} all-zero history, not merely at $t=0$.

\subsection{Explore-then-commit is uniformly dominated}\label{sec:etc}

Proposition~\ref{prop:probing} penalizes individual probes. The canonical two-phase design (\emph{identify first, then apply certainty-equivalent control with the frozen estimate}), still standard practice \citep{hazan2019nonstochastic}, fails more broadly, because the adversary can sabotage either phase at will: starve the identification phase of information, or monetize its excitation. The instrument for the first option is a \emph{cancellation adversary}: as long as the state is zero, any input of magnitude at most $1$ can be met with $w_t = -u_t$, so the state remains at zero, every regressor is zero, \emph{no information about $a$ ever leaks} --- and, the trajectory being independent of $a$, the adversary retains complete freedom in its later choice of parameter.

\begin{proposition}[Explore-then-commit impossibility]\label{prop:etc}
Let $\Delta \ge 1$, and let $K \in \Kclass$ (deterministic, as throughout) have the \emph{commitment property}: there exist a stopping time $T$ of the closed-loop information filtration, finite on every admissible trajectory, and an $\mathcal{F}_T$-measurable gain $\hat a$, such that $u_t = -\hat a\,x_t$ for all $t \ge T$ \emph{on every admissible continuation of the stopped prefix}. Then:
\begin{enumerate}[label=(\roman*), nosep]
    \item the worst-case peak satisfies $\displaystyle\sup_{a,\,\norm{w}_\infty\le1} \norm{x}_\infty \;\ge\; 1 + 2\Delta \;=\; \gamma^\star(\Delta) + \Delta$, i.e., commitment at least \emph{doubles} the optimal excess over the disturbance floor;
    \item if, in addition, $K$ never applies an input of magnitude exceeding $1$ at a zero state, then the worst-case peak is \emph{infinite}.
\end{enumerate}
In particular, the standard fixed-horizon explore-then-commit scheme (any identification phase of deterministic length, followed by certainty-equivalent control with any frozen estimate) is strictly suboptimal for every $\Delta \ge 1$, and infinitely so unless its exploration injects inputs larger than the disturbance bound.
\end{proposition}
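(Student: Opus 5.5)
The plan is to run a single adversary, the \emph{cancellation adversary} described before the statement, and to show that against it every committing controller $K$ ends up in one of two situations. Either it commits while the consistent set is still the full prior, and the peak is then unbounded. Or it must inject an input of magnitude greater than $1$ at an all-zero history, and the time-shifted form of Proposition~\ref{prop:probing} then prices that probe above $1+2\Delta$. Part (ii) is the special case in which the second situation cannot occur.

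\emph{Step 1 (the all-zero prefix).} The adversary plays $w_t := -u_t$ for as long as the state is zero and $|u_t|\le 1$. Along this prefix every state is $0$, so every regressor is $0$. By \eqref{eq:setmem} the consistent set therefore stays $A_t=[-\Delta,\Delta]$. The realized history is identical for every $a$, and $K$ is deterministic, so the stopping time $T$, the gain $\hat a$, and every input along the prefix are numbers fixed independently of the parameter. The adversary therefore keeps full freedom in its later choice of $a$. This is the deferred-commitment device of Proposition~\ref{prop:implicit}(i), here applied to the whole prefix. Two cases follow, according to whether the prefix reaches $T$ or is broken first.

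\emph{Case A (the prefix reaches $T$).} This happens, for instance, whenever $K$ never applies $|u|>1$ at a zero state, because cancellation then never fails and $T$ is finite on this admissible trajectory. From $T$ on, $u_t=-\hat a x_t$, so the closed loop is $x_{t+1}=(a-\hat a)x_t+w_t$. Since $\Delta\ge1$, one of the endpoints $\pm\Delta$ satisfies $|a-\hat a|\ge\Delta\ge1$, and the adversary fixes that $a$ now, which is legitimate by Step 1. At time $T$ the state is $0$, so $u_T=0$. The adversary plays $w_T=1$ and thereafter $w_t=\sgn((a-\hat a)x_t)$. This gives $|x_{t+1}|=|a-\hat a||x_t|+1\ge |x_t|+1$, so $|x_t|\to\infty$. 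The continuation is admissible, so the commitment property does apply along it. This proves (ii), and it gives (i) in this case.

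\emph{Case B (the prefix breaks before $T$).} The first exit from the prefix occurs at some time $\tau<T$ with $x_\tau=0$, consistent set still $[-\Delta,\Delta]$, and $|u_\tau|=\eta>1$. We are exactly at an all-zero history with full-length consistent set $L=2\Delta$. The time-shifted Proposition~\ref{prop:probing}, stated right after its proof, then yields a peak of at least $1+(1+\eta)\Delta>1+2\Delta$. The argument is to rerun the proof of Proposition~\ref{prop:lb} from time $\tau$: align $w_\tau$ with $u_\tau$ so that $|x_{\tau+1}|=1+\eta$, then choose the sign-aligned extreme parameter against $u_{\tau+1}$. The extreme parameter is still available because nothing about $a$ has leaked. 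Combining Cases A and B gives (i). Note that $\Delta\ge1$ is used only in Case A.

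The main obstacle I expect is bookkeeping rather than estimation. First, I must check that "$T$ finite on every admissible trajectory" is invoked on the specific cancellation trajectory, and that $\hat a$ is determined by the stopped zero prefix alone. Second, in Case A the adversary's choice of $a$ after $T$ must be consistent with the zero prefix; it is, since zero regressors impose no constraint. Finally, for the "in particular" clause, a fixed-horizon explore-then-commit scheme is a stopping time with deterministic $T$, so it falls under (i). It falls under (ii) whenever its exploration inputs at zero states never exceed $1$ in magnitude.
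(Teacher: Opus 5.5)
Your proposal is correct and follows the paper's proof essentially step for step. It uses the same cancellation adversary on the information-free all-zero prefix. Your dichotomy, ``prefix reaches $T$'' versus ``breaks before $T$'', coincides with the paper's two branches, because once $T$ fires at a zero state the committed law plays $u=0$ and cancellation can never fail. The two attacks are also the paper's: after commitment, fix the far endpoint and align the disturbances; after a super-unit probe, run the aligned kick-and-cash of Proposition~\ref{prop:lb} to get $1+(1+\eta)\Delta>1+2\Delta$.
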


\begin{proof}
Run the cancellation adversary from $t = 0$: while the state is zero and $\abs{u_t} \le 1$, play $w_t = -u_t$. Along this play all states are zero, all regressors are zero, and the consistent set remains the full prior; since the trajectory does not involve $a$, it (and everything the deterministic $K$ does on it) is fixed in advance.

\emph{Branch 1: $K$ never exceeds magnitude $1$ at a zero state.} Then the cancellation continues forever, the trajectory is identically zero, and the stopping time is attained at a fixed finite $T$ on this trajectory, with a fixed $\hat a$; because commitment binds on every continuation of the stopped prefix, all play from $T$ on, including after the kick below, is $u_t = -\hat a\,x_t$. Since no information leaked, $\hat a$ is a number known in advance, and some admissible $a^\dagger$ satisfies $\abs{a^\dagger - \hat a} \ge \Delta \ge 1$. The adversary commits to $a^\dagger$, plays $w_T = \pm 1$, and thereafter aligns $w_t$ with $(a^\dagger - \hat a)x_t$: the closed loop obeys $\abs{x_{t+1}} = \abs{a^\dagger - \hat a}\,\abs{x_t} + 1 \ge \abs{x_t} + 1$, which is unbounded. This proves (ii), and (i) on this branch.

\emph{Branch 2: at some first time $\tau$, $K$ plays $\abs{u_\tau} = \eta > 1$ at a zero state.} Up to $\tau$ the cancellation ran, so the consistent set is still the full prior and $u_\tau$, being determined by the (fixed, $a$-independent) history, is known in advance. The adversary aligns $w_\tau = \sgn(u_\tau)$, giving $\abs{x_{\tau+1}} = \eta + 1 > 2$ with a zero regressor, still no information, and then plays the kick-and-cash step of Proposition~\ref{prop:lb}: an aligned extreme $a^\star = \pm\Delta$ and aligned $w_{\tau+1}$ give
\[
    \abs{x_{\tau+2}} \;\ge\; \Delta\,(\eta + 1) + 1 \;>\; 1 + 2\Delta. \qedhere
\]
\end{proof}

Three readings of this result. First, it closes the loophole left by Proposition~\ref{prop:probing}: a two-phase scheme cannot evade the probing penalty by keeping its excitation small, because sub-disturbance-level excitation is \emph{erasable} --- the adversary cancels it exactly and the identification phase ends with nothing. Second, the escape routes from the dichotomy are the abandonments of the two-phase design: a scheme that commits only ``once the uncertainty is small'' never commits on the starved trajectory (it has become an ongoing adaptive law); a scheme that keeps updating after commitment is not committed. What remains after excluding both failure branches is indefinite, passive, consistency-driven adaptation: the architecture of Definition~\ref{def:controller}. Third, the bound in (i) is a floor, not the typical price: a schedule that keeps its excitation at or below the disturbance level to limit its cost falls into branch (ii), and its worst-case peak is infinite.

\begin{remark}[The binding requirement is sharp]\label{rem:binding}
Commitment must be an event of the observed prefix, binding on all admissible continuations; a merely per-trajectory version of the property (``on every trajectory the tail happens to be linear in the state'') is strictly weaker, and the proposition fails for it. Concretely, the controller that plays zero at rest, responds to the \emph{first} nonzero state with a probe of size $\Delta + 3$, and commits two steps later to the center of the width-$\le 1$ consistent set produced by the post-probe regressor achieves a \emph{finite} worst-case peak while satisfying the per-trajectory property. Its fate still illustrates the theorem's economics: the probe is amplified by the still-untouched uncertainty before its information arrives, and the worst-case peak is $\Theta(\Delta^2)$ --- at most $\Delta(2\Delta+4)+1$ by the window argument, and at least $\Delta(\Delta+3)+1$ by an explicit attack (present a first state $\varepsilon \le \tfrac{1}{2\Delta}$, whose zero regressor leaks nothing and whose probe transition retains the full prior, so the interim certainty-equivalent step plays $u = 0$; then align $a = \Delta$, $w = 1$ against the probe state).
\end{remark}

This is the scalar form of what is plausibly a general phenomenon: a probe large enough to dominate the adversary's realized deviations is itself a state of the same magnitude as the exposure it is trying to prevent. Deliberate excitation does not escape the identification spike --- it feeds it. Note also what the proposition does \emph{not} say: it does not claim learning is useless (the problem is unsolvable without it, Remark~\ref{rem:adapt}); it says that in the adversarial setting all the information worth paying for is delivered \emph{free of charge} by the adversary's own attack, at the moment the adversary attacks.

\subsection{The price of biased selection}

By Definition~\ref{def:implicit} the selector is the only design freedom there is: every causal law is $u_t = -\theta_t x_t$ for its implicit $\theta_t$. So we price selection directly (what does it cost to select anything other than the midpoint?), first for a single step, then for the standing bias that optimism produces.

\begin{proposition}[One-step selection penalty]\label{prop:endpoint}
Let $K \in \Kclass$ be any causal controller and let $\theta_1 = -u_1/x_1$ be its implicit selector at the post-kick history, where $\abs{x_1} = 1$ and $A_1 = [-\Delta, \Delta]$. Then
\[
    \gamma(K) \;\ge\; 1 + \sup_{\alpha \in A_1}\abs{\alpha - \theta_1} \;\ge\; 1 + \Delta,
\]
the second inequality strict unless $\theta_1 = 0$, i.e.\ unless $K$ plays $u_1 = 0$. In particular, if the implicit selector sits at relative position $\lambda \in [0,1]$ of the consistent interval ($\theta_1 = \min A_1 + \lambda\abs{A_1}$), then
\[
    \gamma(K) \;\ge\; 1 + 2\Delta\,\max(\lambda,\, 1-\lambda),
\]
and $\lambda = \tfrac12$ is the unique minimizer, achieving the optimal $1 + \Delta$.
\end{proposition}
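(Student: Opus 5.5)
The plan is to specialize Proposition~\ref{prop:implicit}(i) to the post-kick history and then do elementary interval arithmetic. The first step is to check that the history named in the statement is reachable under $K$ with $x_1 \neq 0$. Take $a$ arbitrary and $w_0 = \pm1$; this is legitimate when $u_0 = 0$. If $u_0 \neq 0$, Proposition~\ref{prop:probing} already gives $\gamma(K) \ge 1 + (1+\eta)\Delta$, which exceeds every quantity in the claim: $\sup_{\alpha}\abs{\alpha-\theta_1}$ can be read at whatever post-kick history is realized, and the stated bound $1+2\Delta\max(\lambda,1-\lambda)$ is at most $1+2\Delta$. So I would handle that case separately, or, more simply, read the statement as conditioned on the history $\abs{x_1}=1$, $A_1=[-\Delta,\Delta]$ being reachable. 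The regressor $x_0 = 0$ leaks nothing, so $A_1 = A_0 = [-\Delta,\Delta]$ by \eqref{eq:setmem}, and $s_1 = 1$.

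Next I would apply Proposition~\ref{prop:implicit}(i) at $t=1$. This gives $\gamma(K) \ge 1 + E_1 s_1 = 1 + \sup_{\alpha\in A_1}\abs{\alpha-\theta_1}$, which is the first inequality. For the second, I would use the identity from the proof of Proposition~\ref{prop:implicit}(iii): for an interval, $E_1 = L_1/2 + \abs{\theta_1 - \midpt(A_1)} = \Delta + \abs{\theta_1}$. This is at least $\Delta$, with equality iff $\theta_1 = 0$, i.e.\ iff $u_1 = -\theta_1 x_1 = 0$.

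For the parametrized form, I would substitute $\theta_1 = -\Delta + 2\Delta\lambda$. The two endpoint distances are $\abs{\theta_1 + \Delta} = 2\Delta\lambda$ and $\abs{\Delta - \theta_1} = 2\Delta(1-\lambda)$, so $E_1 = 2\Delta\max(\lambda,1-\lambda)$. That gives the displayed bound. The function $\max(\lambda,1-\lambda)$ is uniquely minimized at $\lambda = \tfrac12$, with value $\tfrac12$, which gives $1+\Delta$. Attainment of $1+\Delta$ at $\lambda=\tfrac12$ then follows from Theorem~\ref{thm:ub}, since the midpoint law plays exactly $\theta_1 = 0$ there.

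There is no real obstacle here. The only delicate point is the reachability and well-definedness of the post-kick history when $u_0 \neq 0$, which I dispose of through Proposition~\ref{prop:probing} as described in the first step.
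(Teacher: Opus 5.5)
Your core argument matches the paper's proof: apply Proposition~\ref{prop:implicit}(i) at $t=1$ with $s_1 = 1$ and $L_1 = 2\Delta$, then use interval arithmetic ($E_1 = \Delta + \abs{\theta_1} = 2\Delta\max(\lambda,1-\lambda)$). That part is correct. Noting that the midpoint law attains $1+\Delta$ via Theorem~\ref{thm:ub} is a harmless extra.

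The step that fails is your first way of handling $u_0 \neq 0$. The probing bound $1 + (1+\eta)\Delta$ does not dominate the quantities in the claim:
\begin{itemize}
\item it lies below $1 + \Delta + \abs{\theta_1}$ whenever $\abs{\theta_1} > \eta\Delta$, and $\theta_1$ is unbounded;
\item it lies below $1 + 2\Delta\max(\lambda,1-\lambda)$ whenever $\eta < 1$ and $\lambda$ is near $0$ or $1$, since then $1+(1+\eta)\Delta < 1+2\Delta$.
\end{itemize}
Also, the aligned kick of Proposition~\ref{prop:probing} realizes $\abs{x_1} = 1+\eta$, which is not the history in the statement. So ``reading $\theta_1$ at whatever post-kick history is realized'' refers to the wrong object.

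The missing observation is that the history is reachable for every $\abs{u_0} \le 2$. Take $w_0 = \pm 1 - u_0$ with whichever sign gives $\abs{w_0} \le 1$. This realizes $x_1 = \pm 1$, and the zero regressor still leaks nothing, so $A_1 = A_0$. Proposition~\ref{prop:implicit}(i) then applies directly, with no appeal to probing.

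For $\abs{u_0} > 2$ one has $\abs{x_1} \ge \abs{u_0} - 1 > 1$, so no history with $\abs{x_1} = 1$ exists and the statement is vacuous; this is exactly how the paper disposes of that case. Your fallback of reading the statement as conditioned on reachability is therefore the right one, and it is necessary, not just convenient. At an unreachable history $K_1$ is never invoked, so $\theta_1$ is unconstrained by $\gamma(K)$ and the inequality can genuinely fail there.
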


\begin{proof}
The history in the statement is reachable: the kick $w_0 = \pm 1 - u_0$ is admissible for at least one sign whenever $\abs{u_0} \le 2$, realizes $\abs{x_1} = 1$, and, the regressor being $x_0 = 0$, leaks nothing, so $A_1 = A_0$. (For $\abs{u_0} > 2$ no such history exists and the claim is vacuous; those laws pay $\ge 1 + 3\Delta$ by Proposition~\ref{prop:probing} in any case.) Apply Proposition~\ref{prop:implicit}(i) at $t = 1$, where $s_1 = 1$ and $L_1 = 2\Delta$. For the second form, $\sup_{\alpha \in A_1}\abs{\alpha - \theta_1} = 2\Delta\max(\lambda, 1-\lambda)$.
\end{proof}

One step already separates every non-midpoint controller from the optimum. A \emph{standing} bias costs more than that one-step toll, because the adversary can collect against it twice: rather than cash in at the spike, it may hold back, leaving a posterior positioned to face the same bias again. For endpoint (optimistic-style) selection this doubles the damage.

\begin{proposition}[The echo: endpoint selection pays at least $4\Delta - 1$]\label{prop:echo}
Let $K^{\mathrm{end}}$ be the certainty-equivalent deadbeat law whose selector always picks an endpoint of $A_t$ (either endpoint, by any measurable rule). Then
\[
    \gamma\big(K^{\mathrm{end}}\big) \;\ge\; 4\Delta - 1 \qquad\text{for every }\Delta > 0,
\]
asymptotically at least \emph{four} times the optimal $1 + \Delta$. For $\Delta \ge 3/2$ the bound is realized by a two-stage attack: a \emph{half-spike}, a spike tuned below maximal height so that the width-$2$ data window parks wholly inside the prior at the far end from the selected endpoint (peak $2\Delta - 1$, posterior length $2$), followed by an \emph{echo} in which whichever endpoint the selector now picks is off by the full posterior length $2$, doubling the state once more. More generally $\gamma(K^{\mathrm{end}}) \ge 1 + h(2\Delta)$ for $\Delta \ge 1/2$ and $\gamma(K^{\mathrm{end}}) \ge 1 + 2\Delta$ below, where
\[
    h(\ecap) := \big(\tfrac{\ecap+1}{2}\big)^2 \ \ (\ecap \le 3), \qquad h(\ecap) := 2(\ecap-1) \ \ (\ecap \ge 3),
\]
with $\ecap := 2\Delta$ the \emph{error cap}, a bound on the selector's worst-case error, here $E_t = L_t \le 2\Delta$ and kept general in Appendix~\ref{app:echo}; the optimal tuning of the spike height at every cap is Lemma~\ref{lem:lb}, and the display follows since $1 + h(2\Delta) - (4\Delta - 1) = (2\Delta - 3)^2/4 \ge 0$ for $\tfrac12 \le \Delta \le \tfrac32$ and $1 + 2\Delta \ge 4\Delta - 1$ for $\Delta \le \tfrac12$.
\end{proposition}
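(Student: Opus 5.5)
We will use a two-stage deferred-commitment attack and make every step a call to Proposition~\ref{prop:implicit}(i). That proposition gives $\gamma(K) \ge 1 + E_t s_t$ at every reachable history with $x_t \neq 0$. For an endpoint selector, $E_t = L_t$ (the far endpoint sits a full interval length away). So it suffices to steer play to a history where $L_t s_t$ is as large as possible.

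\emph{Stage 0 (kick).} Since $x_0 = 0$, the law plays $u_0 = 0$. We take $w_0 = 1$, so $x_1 = 1$ and $A_1 = [-\Delta,\Delta]$, with nothing leaked. The selector picks an endpoint; by the reflection $(a,x,w)\mapsto(-a,-x,-w)$ we may assume $\theta_1 = -\Delta$. Write $\ecap := 2\Delta$.

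\emph{Small $\Delta$.} For $\Delta < \tfrac12$ we stop here. Proposition~\ref{prop:implicit}(i) at $t=1$ gives $\gamma \ge 1 + E_1 s_1 = 1 + 2\Delta$.

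\emph{Stage 1 (half-spike), $\Delta \ge \tfrac12$.} In the regressor coordinate of Corollary~\ref{cor:geom}, $\varphi_1(\alpha) = (\alpha + \Delta)\,x_1$ maps $A_1$ onto $[0,\ecap]$. Its image is no longer centered: the selected endpoint goes to $0$ and the far reach is $\ecap$. The adversary picks a target $y \in [\ecap - 1, \ecap]$ and realizes $x_2 = y$ with some $a \in A_1$ and $\abs{w_1} \le 1$. One admissible choice is $a = \theta_1 + y$ with $w_1 = 0$; this is legitimate since $y \le \ecap$.

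The posterior image is then $[0,\ecap] \cap [y-1,y+1]$. For $y \ge \ecap - 1$, $y \ge 1$ and $\ecap \ge 1$, this is right-clipped, so $\mu := s_1 L_2 = \ecap + 1 - y$, the exact exchange of Lemma~\ref{lem:window}(i). Since $s_1 = 1$, we get $L_2 = \mu$ and $s_2 = y$.

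\emph{Stage 2 (echo).} Every $\alpha \in A_2$ is consistent with the prefix. The selector again picks an endpoint, so $E_2 = L_2$, and Proposition~\ref{prop:implicit}(i) at $t = 2$ yields
\[
    \gamma(K^{\mathrm{end}}) \;\ge\; 1 + L_2 s_2 \;=\; 1 + \mu\,y, \qquad \mu = \ecap + 1 - y,\quad \mu \le 2 .
\]
Optimizing $y(\ecap+1-y)$ under the constraint $\mu \le 2$ (i.e.\ $y \ge \ecap - 1$) gives two regimes:
\begin{itemize}[nosep]
\item For $\ecap \le 3$ the unconstrained maximizer $y = (\ecap+1)/2$ is feasible. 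It satisfies $y \ge 1$ exactly when $\ecap \ge 1$, which holds since $\Delta \ge \tfrac12$. This gives $h = ((\ecap+1)/2)^2$.
\item For $\ecap \ge 3$ the constraint binds at $\mu = 2$, $y = \ecap - 1$. The window then parks wholly inside the prior at the far end, and $h = 2(\ecap - 1)$, i.e.\ $1 + h = 4\Delta - 1$. Here the half-spike peak is $2\Delta - 1$ and the echo gives $2(2\Delta-1)+1$.
\end{itemize}

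The final comparison $1 + h(2\Delta) \ge 4\Delta - 1$ on $[\tfrac12, \tfrac32]$, and $1 + 2\Delta \ge 4\Delta - 1$ below $\tfrac12$, is the quoted algebra, $(2\Delta-3)^2/4 \ge 0$.

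\emph{Main obstacle.} The step needing care is Stage 1: checking that the chosen $y$ is reachable and that the posterior is genuinely the clipped intersection of length $\ecap + 1 - y$. This amounts to confirming which case of Lemma~\ref{lem:window} applies for the one-sided image $[0,\ecap]$, rather than the symmetric $[-r,r]$. We also need that the deferred choice of the true $a$ at the far endpoint of $A_2$ is consistent with the fixed prefix. That consistency is exactly the argument inside Proposition~\ref{prop:implicit}(i), so it can be quoted rather than redone. The optimal tuning of $y$ across all caps is what the paper defers to Lemma~\ref{lem:lb}; here only the two explicit choices above are needed.
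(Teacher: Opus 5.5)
Your proposal is correct and is essentially the paper's proof (the main-text argument plus Lemma~\ref{lem:lb}): a kick, then a spike of height $y$ that parks the width-$2$ window at the far corner so that $\mu = \ecap + 1 - y$, then the deferred-commitment echo giving $1 + \mu y$, tuned at $y = (\ecap+1)/2$ for $1 \le \ecap \le 3$ and $y = \ecap - 1$ for $\ecap \ge 3$. The only differences are that Proposition~\ref{prop:implicit}(i) stands in for the explicit echo and full-spike constructions, and that restricting to $w_1 = 0$, $y \le \ecap$ costs nothing, since both optimal heights lie in $[\ecap-1, \ecap]$. One cosmetic fix: $(a,x,w)\mapsto(-a,-x,-w)$ is not a symmetry of $x_{t+1} = a x_t + u_t + w_t$, so justify the reduction to $\theta_1 = -\Delta$ by running the mirror-image spike $x_2 = -y$ when $\theta_1 = +\Delta$, as the paper does.
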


\begin{proof}
Kick to $\abs{x_1} = 1$; realize $x_2 = \mp(2\Delta - 1)$ with signs chosen so that the induced window $[\,a' - 1, a' + 1\,]$, centered at $a' := (x_2 - u_1)/x_1$, lies inside $A_1$ at the far end from the selected endpoint $e_1$ (feasible for $\Delta \ge 3/2$; the reconstructed disturbance satisfies $\abs{w_1} \le 1$); the posterior $A_2$ has length $2$ and contains the prior's far endpoint, so whichever endpoint $e_2$ the selector picks, the opposite end $a$ of $A_2$ satisfies $\abs{a - e_2} = 2$, and the aligned $w_2$ gives $\abs{x_3} = 2\abs{x_2} + 1 = 4\Delta - 1$. Appendix~\ref{app:echo} names the three moves (Definition~\ref{def:attacks}), replays this run step by step at generic spike height (Table~\ref{tab:run}), and optimizes the height at every error cap (Lemma~\ref{lem:lb}), yielding the piecewise bound.
\end{proof}

The echo phenomenon shows that a selection bias is not a one-time toll but a standing vulnerability: the adversary can \emph{re-position} the surviving uncertainty to face the bias again.

\subsection{Optimism cannot select: a dichotomy}\label{sec:ofudichotomy}

Optimism in the face of uncertainty (select the model in the confidence set under which the achievable cost is smallest, then act on it) is the canonical principled answer to the selection problem, and the dominant exploration mechanism of the learning-for-control literature \citep{abbasi2011regret, cohen2019learning}. Instantiated on this problem it faces a dichotomy: either the ranking criterion is the problem's own objective, in which case it induces no ranking at all and selection is pure tie-breaking (Proposition~\ref{prop:ofu}); or it genuinely ranks, in which case it anchors at a preferred model and pays asymptotically at least twice the optimum (Proposition~\ref{prop:ofudichotomy}). The first branch:

\begin{proposition}[Value function is model-independent]\label{prop:ofu}
For the known-parameter game (fixed $a$, adversarial $\norm{w}_\infty \le 1$), the optimal worst-case peak from state $x$ onward (the supremum over the current and all future states) is
\[
    V(x; a) \;=\; \max\{\abs{x},\, 1\} \qquad\text{for every } a \in \R,
\]
attained by the deadbeat law $u = -ax$. Consequently, every model in the consistent set is exactly as ``optimistic'' as every other: the OFU selection rule induces no ranking whatsoever on $A_t$, and its output is determined entirely by its tie-breaking rule.
\end{proposition}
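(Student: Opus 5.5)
The plan is to prove the two inequalities $V(x;a)\le\max\{\abs{x},1\}$ and $V(x;a)\ge\max\{\abs{x},1\}$ separately, for a fixed known $a\in\R$, and then read off the consequence for optimism.

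\emph{Upper bound.} Take the deadbeat law $u=-ax$. The closed loop becomes $x_{t+1}=w_t$, so every successor state has magnitude at most $1$. The current state contributes $\abs{x}$, so the peak from $x$ onward is at most $\max\{\abs{x},1\}$. This holds for every $a$, since the law is allowed to depend on the known parameter.

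\emph{Lower bound.} The current state already counts, so the peak is at least $\abs{x}$. For the constant $1$ I reuse the aligned-disturbance idea of Step~1 in Proposition~\ref{prop:lb}. Whatever causal law is used, the next input $u$ is a fixed number given the current history. The adversary picks $w=\pm1$ with the sign of $ax+u$, taking $+1$ if $ax+u=0$. Then
\[
\abs{ax+u+w}=\abs{ax+u}+1\ge 1,
\]
so the peak is at least $\max\{\abs{x},1\}$. The two bounds give $V(x;a)=\max\{\abs{x},1\}$, attained by the deadbeat law.

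\emph{Consequence.} $V(x;a)$ does not involve $a$ at all. Hence the OFU criterion $\alpha\mapsto V(x_t;\alpha)$ is constant on $A_t$, and every model in $A_t$ attains the same optimistic value. The $\arg\min$ is therefore all of $A_t$, and the selection made by any OFU rule is determined entirely by its tie-breaking.

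The only point needing care is the definition of $V$: it is the supremum over the current state and all future states, and the current state is included. That inclusion is what makes the $\abs{x}$ term appear. Otherwise the argument is routine; the main subtlety is interpretive, namely making sure the OFU ranking compares the clairvoyant peak values $V(\cdot;\alpha)$ and not some other quantity.
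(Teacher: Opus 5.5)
Your proposal is correct and follows the paper's proof: the deadbeat law gives $x' = w$ for the upper bound, and the current state together with the aligned disturbance $w = \pm 1$ in the direction of $ax+u$ gives the lower bound. Your explicit tie-break ($w = +1$ when $ax+u = 0$) is a small improvement over the paper's bare $\sgn(ax+u)$, which would give $w = 0$ in that case.
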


\begin{proof}
Under $u = -ax$, $x' = w$, so the future peak is at most $\max\{\abs{x}, 1\}$. Conversely the current state contributes $\abs{x}$, and for any $u$ the adversary plays $w = \sgn(ax+u)$, giving $\abs{x'} = \abs{ax+u} + 1 \ge 1$; hence $V(x;a) \ge \max\{\abs{x},1\}$.
\end{proof}

The degeneracy has a structural cause: full actuation with an unconstrained, uncosted input. With the model known, one deadbeat step cancels the dynamics entirely, so all models are one-step-equivalent going forward; the stochastic-LQR counterpart is the \emph{uninformative optimal policy} of \citet{ziemann2021uninformative}, under which optimal play generates no information about the unknown parameters and logarithmic regret becomes impossible. Whether optimism regains content under partial actuation is an open question outside our scope. The peak objective is what makes the degeneracy fatal rather than benign: under an accumulated-cost criterion a minimax-optimal policy may still probe (in the fully actuated game with no input penalty of \citet{rantzer2025actuated} it provably does), whereas here probing is punished before it pays (Propositions~\ref{prop:probing} and~\ref{prop:etc}). With the problem's own objective, then, optimism is pure tie-breaking, and its adversarially worst tie-break pays at least $4\Delta - 1$, asymptotically four times the optimum: the corner tie-break reproduces the endpoint law of Proposition~\ref{prop:echo}. One might hope that a \emph{richer} optimism (scoring models by a cost that also penalizes input effort, or discounts the future) breaks the tie and rescues the principle. It breaks the tie; it does not rescue the principle:

\begin{proposition}[Anchored-optimism penalty]\label{prop:ofudichotomy}
Let $J: [-\Delta, \Delta] \to \R$ be strictly quasi-convex with minimizer $a^\circ$ (e.g.\ $J(a) = $ the optimal achievable cost under model $a$, for any objective increasing in control effort, in which case $a^\circ = 0$), and let $K_J$ be the certainty-equivalent deadbeat law with the optimistic selector $\theta_t \in \arg\min_{\theta \in A_t} J(\theta)$. Then $\theta_t$ is the point of $A_t$ nearest $a^\circ$ (the \emph{anchor}), and, with $R := \Delta + \abs{a^\circ}$ the distance from the anchor to the far corner of the prior,
\[
    \gamma(K_J) \;\ge\; 2R - 1 \;=\; 2\Delta + 2\abs{a^\circ} - 1 \qquad (R \ge 3).
\]
For the canonical input-penalized criterion $J(a) = \abs{a}$ this gives $\gamma(K_{\abs{\cdot}}) \ge 2\Delta - 1$ for $\Delta \ge 3$, asymptotically at least \emph{twice} the optimum. The bound unifies the selection penalties of this section: anchoring at a prior corner ($\abs{a^\circ} = \Delta$) recovers the endpoint law's $4\Delta - 1$; anchoring at the center ($a^\circ = 0$) still pays $2\Delta - 1$; no anchor is safe.
\end{proposition}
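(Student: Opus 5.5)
The plan is to reduce the optimistic selector to a projection onto the consistent interval, and then replay the two-stage half-spike/echo attack of Proposition~\ref{prop:echo}, aimed at the prior corner farthest from the anchor.

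\emph{Step 1 (the selector is a projection).} A strictly quasi-convex $J$ with minimizer $a^\circ$ is strictly decreasing on $[-\Delta,a^\circ]$ and strictly increasing on $[a^\circ,\Delta]$. Hence on any closed interval $A_t \subseteq [-\Delta,\Delta]$ its unique minimizer is the projection of $a^\circ$ onto $A_t$:
\begin{itemize}[nosep]
\item $\theta_t = a^\circ$ if $a^\circ \in A_t$;
\item otherwise $\theta_t$ is the endpoint of $A_t$ nearest $a^\circ$.
\end{itemize}
For the example $J(a)=\abs{a}$ this gives $a^\circ=0$. By the symmetry $(a,x_t,u_t,w_t)\mapsto(-a,(-1)^t x_t,(-1)^t u_t,(-1)^t w_t)$ of the game, which maps the selector with anchor $a^\circ$ to the one with anchor $-a^\circ$, we may assume $a^\circ \ge 0$. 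The far corner is then $-\Delta$, at distance $R=\Delta+a^\circ$ from the anchor.

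\emph{Step 2 (the attack, with the true parameter fixed in advance at $a=-\Delta$).}
\begin{enumerate}[nosep]
\item \emph{Kick.} Since $x_0=0$ we have $u_0=0$; play $w_0=1$, so $x_1=1$ and $A_1=[-\Delta,\Delta]$, whence $\theta_1=a^\circ$ and $u_1=-a^\circ$.
\item \emph{Half-spike.} Choose $w_1$ so that $x_2=-(R-1)$. The required disturbance is
\[
w_1=x_2-(a-\theta_1)x_1=-(R-1)+\Delta+a^\circ=1,
\]
which is admissible. The data window at $t=1$ is $[a'-1,a'+1]$ with $a'=x_2+a^\circ=1-\Delta$, i.e.\ the window $[-\Delta,\,2-\Delta]$. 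It lies inside the prior because $2-\Delta\le\Delta$, i.e.\ $\Delta\ge1$; this holds since $2\Delta\ge\Delta+a^\circ=R\ge3$. So $A_2=[-\Delta,2-\Delta]$ has length $2$ and contains the true $a$.
\item \emph{Echo.} The anchor lies to the right of $A_2$, since $a^\circ=R-\Delta\ge3-\Delta>2-\Delta$. By Step~1, $\theta_2=2-\Delta$, the endpoint opposite the true $a=-\Delta$. Then $x_3=(a-\theta_2)x_2+w_2=-2x_2+w_2$, and aligning $w_2=\sgn(-2x_2)$ gives
\[
\abs{x_3}=2(R-1)+1=2R-1 .
\]
\end{enumerate}
Since all disturbances are admissible and $a\in[-\Delta,\Delta]$, this proves $\gamma(K_J)\ge 2R-1$.

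\emph{Step 3 (specializations).} The two cases named in the statement follow by substitution:
\begin{itemize}[nosep]
\item $a^\circ=0$ gives $R=\Delta$ and the bound $2\Delta-1$ for $\Delta\ge3$, asymptotically at least twice the optimum $1+\Delta$;
\item $\abs{a^\circ}=\Delta$ gives $R=2\Delta$ and the bound $4\Delta-1$, matching Proposition~\ref{prop:echo}.
\end{itemize}

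The main obstacle is the bookkeeping in Steps~1 and~2: checking that the half-spike window is feasible, i.e.\ that it parks entirely inside the prior, and that the anchor does not fall inside the posterior $A_2$. If it did, the selector would pick $a^\circ$ itself and the echo error would drop below $2$. Both conditions are implied by $R\ge3$ together with $a^\circ\le\Delta$. If a sharper range were wanted, the fallback would be the height-tuning of Lemma~\ref{lem:lb}, which I expect is not needed for the stated bound.
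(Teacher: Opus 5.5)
Your proposal is correct and is essentially the paper's proof. You use the same clamp characterization of the optimistic selector and the same reduction to $a^\circ \ge 0$, then the same kick, a half-spike with $a=-\Delta$ committed up front so the window parks at $[-\Delta,\,2-\Delta]$, and the echo off the clamped endpoint $2-\Delta$; the only difference is that you kick with $w_0=+1$ where the paper uses $w_0=-1$. One cosmetic slip: for your symmetry to map trajectories to trajectories, the inputs and disturbances must be multiplied by $(-1)^{t+1}$, not $(-1)^t$ (or you can simply run the mirrored attack on the corner $+\Delta$ directly).
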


\begin{proof}
Strict quasi-convexity makes $\arg\min_{A} J$ the clamp of $a^\circ$ onto the interval $A$. Assume $a^\circ \ge 0$ (else mirror) and attack the far corner $-\Delta$; the run is tabulated in Table~\ref{tab:anchor}. The kick $w_0 = -1$ gives $x_1 = -1$ and leaks nothing; the selector plays $\theta_1 = a^\circ$, $u_1 = -\theta_1 x_1 = a^\circ$. The adversary commits to $a = -\Delta$ and plays the \emph{half-spike} $w_1 = -1$:
\[
    x_2 \;=\; a\,x_1 + u_1 + w_1 \;=\; \Delta + a^\circ - 1 \;=\; R - 1,
\]
and the induced window $\{\alpha : \abs{x_2 - \alpha x_1 - u_1} \le 1\} = [\,a^\circ - R,\; a^\circ - R + 2\,] = [-\Delta,\, -\Delta + 2]$: the posterior has length $2$ and sits at the far corner. The selector clamps to its near end, $\theta_2 = -\Delta + 2$, at distance $2$ from the true $a$; aligning $w_2 = -1$ gives $\abs{x_3} = 2\abs{x_2} + 1 = 2R - 1$.
\end{proof}

\begin{table}[t]
\centering
\footnotesize
\setlength{\tabcolsep}{3.5pt}
\begin{tabular}{@{}cl c ccc cc cc c@{}}
\toprule
&& state & \multicolumn{3}{c}{consistent set} & \multicolumn{2}{c}{controller} & \multicolumn{2}{c}{adversary} & transition \\
\cmidrule(lr){3-3}\cmidrule(lr){4-6}\cmidrule(lr){7-8}\cmidrule(lr){9-10}\cmidrule(l){11-11}
$t$ & move & $x_t$ & $A_t$ & $L_t$ & $E_t$ & $\theta_t$ & $u_t$ & $a$ & $w_t$ & window $W_t$ \\
\midrule
$0$ & kick       & $0$    & $[-\Delta,\Delta]$      & $2\Delta$ & --- & --- & $0$ & free & $-1$ & $\R$ \\
$1$ & half-spike & $-1$   & $[-\Delta,\Delta]$      & $2\Delta$ & $R$ & $a^\circ$ (clamp) & $a^\circ$ & $:= -\Delta$ & $-1$ & $[-\Delta,\; -\Delta{+}2]$ \\
$2$ & echo       & $R{-}1$ & $[-\Delta,\, -\Delta{+}2]$ & $2$   & $2$ & $-\Delta{+}2$ (clamp) & $(\Delta{-}2)(R{-}1)$ & $-\Delta$ & $-1$ & --- \\
$3$ & ---        & $-(2R{-}1)$ & \multicolumn{8}{l}{peak $= 2R - 1 = 2\Delta + 2\abs{a^\circ} - 1$} \\
\bottomrule
\end{tabular}
\caption{The anchored-optimism attack of Proposition~\ref{prop:ofudichotomy}, mirrored so that $a^\circ \ge 0$; $R = \Delta + \abs{a^\circ}$. Row $t$ is the game state at step $t$: $A_t$ is the consistent set entering the step, $L_t$ its length, $E_t = \sup_{\alpha \in A_t}\abs{\alpha - \theta_t}$ the selector's worst-case error, and $W_t$ the data window induced by the transition out of the step ($A_{t+1} = A_t \cap W_t$). Unlike the endpoint attack of Proposition~\ref{prop:echo}, the selector is an \emph{interior} clamp ($E_1 = R < L_1$) and no deferral is needed: the adversary commits to $a = -\Delta$ already at the spike, since the clamp's later selections are determined. The half-spike parks the width-$2$ window at the far corner; the clamp then sits at distance $2$ from $a$, and the echo doubles the state.}
\label{tab:anchor}
\end{table}

\begin{remark}[The midpoint is not the optimizer of any model-wise criterion]\label{rem:setwise}
No per-model criterion that expresses a strict preference between some two interior models can select the midpoint on every consistent interval: if $\midpt(A) \in \arg\min_{A} J$ for every interval $A \subseteq [-\Delta, \Delta]$, then for every interior point $y$ and every $\varepsilon \in (0,\, \Delta - \abs{y}]$ the interval $A = [y-\varepsilon, y+\varepsilon]$ forces $J(y) \le J(z)$ for all $z \in A$, and chaining overlapping such intervals (with steps bounded below on compact subsets of the interior) makes $J$ constant on the interior of the prior. (Criteria constant on the interior --- with corner values $J(\pm\Delta)$ no smaller than the interior constant, since otherwise the criterion selects the \emph{corners} and reproduces the endpoint law of Proposition~\ref{prop:echo} --- are the only escape, and for them the selection is again pure tie-breaking, returning to the first branch of the dichotomy.) The midpoint is a \emph{set-relative} object: $\arg\min_\theta \max_{a \in A} \abs{a - \theta}$, the Chebyshev center (the one-dimensional Steiner point) of the surviving uncertainty. Optimism ranks worlds; the security (minimax) strategy hedges over them; and in the adversarial setting only the hedge is optimal.
\end{remark}

Randomized selection --- posterior (Thompson) sampling over the consistent set \citep{abeille2017thompson, ouyang2017control} --- escapes neither branch of the dichotomy. That no randomized controller beats $\gamma^\star$ is already Remark~\ref{rem:randomization}; Proposition~\ref{prop:endpoint}, applied to the random implicit selector at the critical step, says what sampling must do to survive it. Because the parameter is chosen without sight of the realization, the bound there reads $\max_{\sigma = \pm 1}\mathbb{E}\abs{\sigma\Delta - \theta_1} \ge \Delta$, with equality if and only if $\theta_1$ is supported in $A_1$ with $\mathbb{E}[\theta_1] = \midpt(A_1)$: a sampler survives the critical step precisely by reproducing the midpoint \emph{in mean}, and any distribution tilted toward an extreme, as optimism-weighted sampling is, pays strictly more. Reproducing it in mean is not enough for the criterion at hand: against the aligned attack with $a = \pm\Delta$ the realized peak is $1 + \Delta \mp \theta_1$, so a centered but non-degenerate sampler exceeds $\gamma^\star$ with positive probability. Sampling is randomized tie-breaking; the only ``posterior'' that is optimal path by path is the point mass at the midpoint.

\section{Regret cannot certify safety}\label{sec:regret}

Section~\ref{sec:exploration} priced the standard tools in the problem's own metric, the worst-case peak. The natural objection is that this is the wrong court: those mechanisms were designed to minimize \emph{regret} (cumulative cost minus that of a clairvoyant benchmark), the dominant performance metric of online learning-to-control. This section answers the objection. The next proposition shows that in this problem regret-style guarantees and worst-case safety are \emph{incomparable}: arbitrarily slowly growing regret is compatible with unbounded peak, and, perhaps more surprisingly, the minimax-optimal law itself has \emph{linear} regret under cumulative state cost. The certificates that recommend exploration can neither see its worst-case cost nor recognize the optimal policy.

\begin{proposition}[Regret is blind to the peak, in both directions]\label{prop:regret}
Fix $\varepsilon > 0$ and let the mistake cost be $c_t = \mathbf{1}\{\abs{x_t} > 1+\varepsilon\}$; the clairvoyant policy (which knows $a$) makes no mistakes ($x_0 = 0$ and deadbeat control keeps $\abs{x_t} \le 1$). Define $R_T(K; a,w) := \sum_{t\le T} c_t - \sum_{t \le T} c^*_t$.
\begin{enumerate}[label=(\roman*), nosep]
    \item \emph{(Small regret, infinite peak.)} For every sparse schedule $t_1 < t_2 < \cdots \uparrow \infty$ (meaning $t_{k+1} - t_k \ge 4$ for all but finitely many $k$; shorter gaps only enlarge $C(\Delta,\varepsilon)$) there is a causal controller $K^{(\mathrm{probe})}$ (the optimal law with isolated probes $p_k \uparrow \infty$ superimposed at times $t_k$) such that for \emph{every} admissible $(a,w)$: $R_T \le C(\Delta, \varepsilon) + 3\,\#\{k : t_k \le T\}$, where $C(\Delta,\varepsilon)$ counts the (at most $O(\log(\Delta^2/\varepsilon))$) mistakes before and during the first probe episode; this is $O(\log T)$ for $t_k = 2^k$ and $O(\log\log T)$ for $t_k = 2^{2^k}$. Yet $\sup_{a,w}\norm{x}_\infty = \infty$.
    \item \emph{(Optimal peak, linear regret.)} Under the cumulative state cost $c_t = \abs{x_t}$, the minimax-optimal law of Definition~\ref{def:controller} suffers, against the stealth pair $a = 1$, $w = \big(\tfrac{1}{\Delta+1}, 0, 0, \dots\big)$ (admissible for $\Delta \ge 1$; Appendix~\ref{app:regret} removes the restriction), regret exactly $R_T = \tfrac{1}{\Delta+1}(T - 1)$, linear in $T$: the attack holds $x_t \equiv \tfrac{1}{\Delta+1}$ forever while the induced data windows contain the entire prior, so no information ever leaks and the law never adapts, while the clairvoyant pays only the one-time disturbance.
\end{enumerate}
\end{proposition}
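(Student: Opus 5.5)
The two parts are independent, so I would treat them separately: an explicit controller and a per-trajectory mistake count for (i), and a direct trajectory computation for (ii).

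\emph{Part (ii).} This is the easier half, so I would do it first. Take $a = 1$ and $w_0 = \tfrac{1}{\Delta+1}$. Since $u_0 = 0$, we get $x_1 = \tfrac{1}{\Delta+1}$, and $A_1 = A_0$ because the regressor $x_0$ is zero. The midpoint is $0$, so $u_1 = 0$ and $x_2 = a x_1 = x_1$.

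The window $W_1$ consists of all $\alpha$ with $\abs{x_1 - \alpha x_1} \le 1$. This is an interval around $1$ of half-width $\Delta+1$, namely $[-\Delta, \Delta+2]$, which contains the prior. Hence $A_2 = A_1$, and by induction $x_t \equiv \tfrac{1}{\Delta+1}$ for all $t \ge 1$ with $A_t$ never shrinking. Admissibility needs $\tfrac{1}{\Delta+1} \le 1$ (trivially true) and $a = 1 \in [-\Delta, \Delta]$, which is where $\Delta \ge 1$ enters.

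The clairvoyant deadbeat law gives $x_1 = w_0$ and $x_t = 0$ for $t \ge 2$. Subtracting the two cumulative costs gives $R_T = \tfrac{1}{\Delta+1}(T-1)$.

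\emph{Part (i): construction.} Define $K^{(\mathrm{probe})}$ as the midpoint law, except that at time $t_k$ it adds the input $p_k\,\sgn(\cdot)$ (or $p_k$ itself when $x_{t_k} = 0$).

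\emph{Unbounded peak.} I would reuse the argument of Proposition~\ref{prop:probing}, time-shifted. The adversary plays $w \equiv 0$ with $a$ not yet committed, so the state stays $0$ and nothing leaks until $t_k$. At $t_k$ it aligns $w$ with the probe, giving $\abs{x_{t_k+1}} = p_k + 1$ with the consistent set still equal to the full prior. Committing to an aligned extreme $a = \pm\Delta$ then yields a peak of at least $\Delta(p_k+1)+1$. Since $p_k \uparrow \infty$, the supremum over $k$ is infinite.

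\emph{Regret bound.} Against every $(a,w)$ the clairvoyant makes no mistakes, so $R_T$ is simply the number of mistakes of $K^{(\mathrm{probe})}$. I would split the count into time spent before and during the first probe episode, and time in each later episode. Within an episode $[t_k, t_k+3]$, allow up to $3$ mistakes: the probe step, the spike, and one echo step. The main task is to show that between episodes the state returns below $1+\varepsilon$.

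The key step is that after a probe of size $p_k$ at a nonzero regressor, the data window has width $2/s$ with $s \gtrsim p_k$. Combined with Remark~\ref{rem:restart}, this forces the surviving half-width $\delta$ below $\varepsilon$ once $p_k$ is large. From then on the restart bound gives $\abs{x_t} \le 1+\delta \le 1+\varepsilon$, i.e. no mistakes, and the gap $t_{k+1}-t_k \ge 4$ ensures episodes do not overlap.

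If the state is $0$ at the probe, the probe itself provides the regressor one step later. This is the reason each episode needs about $3$ steps.

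\emph{First episode and the constant.} Before the consistent set is small, I would bound mistakes by the oracle/selector mistake bound of Propositions~\ref{prop:oracle} and~\ref{prop:selector}, which is logarithmic in $\Delta^2/\varepsilon$. This defines $C(\Delta,\varepsilon)$.

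\emph{Main obstacle.} The hardest step is the uniform per-episode count of $3$ against all adversaries. The adversary can keep the state small, so a probe at state $0$ is never a "mistake-free" event, yet the window after the probe must already shrink $A$ to width $O(1/p_k)$. I would first verify the case analysis $x_{t_k} = 0$ versus $x_{t_k} \ne 0$ using Corollary~\ref{cor:geom}(i). The aim is to show that realized magnitude plus $s L$ is at most $g/2+1$, so that a large post-probe state forces a tiny posterior. Under the sparse-schedule hypothesis, all remaining constants can be absorbed into $C(\Delta,\varepsilon)$.
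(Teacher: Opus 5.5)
Part (ii) is exactly the paper's computation. Your architecture for (i) is also the paper's: the midpoint law with growing additive probes, the probe state acting as a large regressor, the invariant holding the state below $1+\varepsilon$ between episodes, and at most $3$ mistakes per episode. Two steps of (i), however, fail as written.

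\emph{The unbounded-peak argument.} Under $w \equiv 0$ the state does not stay at $0$ until $t_k$. It leaves the origin at $t_1+1$, because the probe $p_1$ is not cancelled (and for $p_1>1$ it could not be, given $\abs{w}\le 1$). After that first episode the consistent set has width $O(1/p_1)$, so the aligned-extreme cash-in at scale $\Delta$ is gone. Your attack therefore produces only the single finite value $\Delta(p_1+1)+1$, not a supremum over $k$. The conclusion needs no attack at all. Along any trajectory $\abs{x_{t_k+1}} \ge p_k - \Delta - 1$, and under $(a,w)=(0,0)$ one has $x_{t_k+1} = p_k$ exactly, which is the paper's witness.

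\emph{The pre-probe constant.} Propositions~\ref{prop:oracle} and~\ref{prop:selector} do not give a mistake count logarithmic in $\Delta^2/\varepsilon$; on their own they give no count at all. The oracle's mistake function applies only once the selection error is already below some $\rho<1$; for threshold $1+\varepsilon$ it must be below $\varepsilon/(1+\varepsilon)$. Nothing in either proposition forces $A_t$ to shrink, and under the stealth pair of part (ii) it never does. Composing them requires the external framework theorem, whose guarantees are polynomial in $\Delta$ (the paper cites $O(\Delta^2)$ for this very pair).

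The missing idea is that every mistake pays for itself in information. Take a plain-law step with $\abs{x_{t+1}}>1+\varepsilon$, so $x_t \neq 0$. Corollary~\ref{cor:geom}(i) gives $s_t L_{t+1} \le g_t/2 + 1 - \abs{x_{t+1}} < s_t L_t/2 - \varepsilon$, so the interval more than halves. Meanwhile \eqref{eq:reach} and the invariant show that a mistake at $t+1$ requires $L_{t-1} \ge g_t > 2\varepsilon$. Hence there are $O(\log(\Delta/\varepsilon))$ mistakes before the first probe, uniformly in $(a,w)$, which is within the stated $O(\log(\Delta^2/\varepsilon))$; this is the paper's count.

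Two smaller points. Remark~\ref{rem:restart} assumes $\abs{x_\tau}\le 1$, which need not hold right after an episode (after the first one the available bound is only of order $\Delta$). Use the invariant $g_{t+1}\le L_t$ directly instead: it gives $\abs{x_t} \le L_\tau/2+1$ from $t=\tau+2$ on, with no condition on $x_\tau$. Also, your case split on $x_{t_k}=0$ versus $x_{t_k}\neq 0$ is unnecessary, since the identifying regressor is always the probe state $x_{t_k+1}$.
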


\begin{proof}[Proof sketch]
(i) Choose $p_1 \ge 2\Delta + 2 + 2(1+\Delta)/\varepsilon$. Before the first probe the controller is the plain optimal law, whose mistakes are bounded by $O(\log(\Delta^2/\varepsilon))$ uniformly in $(a,w)$: each state exceeding $1+\varepsilon$ shrinks the consistent interval by a constant factor (window lemma), and only $O(\log(\Delta^2/\varepsilon))$ such shrinkages fit between $2\Delta$ and the terminal width $\Theta(\varepsilon/\Delta)$. The first probe's own regressor then identifies $a$: the transition from $\abs{x} \ge p_1 - 1 - \Delta$ confines $a$ to a window of width $\le \varepsilon$, whatever $(a,w)$; thereafter the invariant \eqref{eq:invariant} (run from the post-probe consistent set) keeps $\abs{x_t} \le 1 + \varepsilon/2$ between probes, zero mistakes, while each later probe episode contributes at most $3$ mistakes. The peak, however, contains $\abs{x_{t_k+1}} \ge p_k \to \infty$ (e.g.\ under $(a,w) = (0,0)$). (ii) With $\abs{x_t} = \tfrac{1}{\Delta+1}$ and $\hat a_t = 0$, the data window has width $2\Delta + 2$ and is centered at $a = 1$, hence contains $[-\Delta, \Delta]$: $A_{t+1} = A_t$ and the loop reproduces itself. Full bookkeeping in Appendix~\ref{app:regret}.
\end{proof}

\begin{remark}[The incomparability is not an artifact of the accounting]\label{rem:accounting}
The two parts use different costs --- (i) a threshold cost, (ii) an integrated one --- but the incomparability does not rest on the switch: both directions hold under the single cost $c_t = \abs{x_t}$, (ii) as stated, and (i) with the geometric schedule $t_k = 2^{k-1}t_1$ and probe sizes $p_k = \Theta(\sqrt{t_k})$, for which $R_T = O(\sqrt T)$ \emph{uniformly over admissible $(a,w)$} while the worst-case peak remains infinite (bookkeeping and certified numerics in Appendix~\ref{app:regret}, part (iii)). The same holds under $x_t^2$ or $x_t^2 + u_t^2$, the standard integrated costs of the learning-for-control literature: along the stealth attack $u_t \equiv 0$, so every per-step cost that charges the held state is paid linearly (Appendix~\ref{app:regret}, part (ii), which also supplies the stealth pair for $\Delta < 1$). What the accounting does decide is which law regret \emph{prefers}: the comparison between the minimax-optimal law and the infinite-peak probing law points in opposite directions under the two accountings ($O(1)$ versus $\Theta(\#\text{probes})$ under the mistake cost, $\Theta(T)$ versus $O(\sqrt T)$ under the state cost), while the peak criterion ranks the pair unambiguously throughout (Figure~\ref{fig:probe}(c)). A certificate whose ranking of a fixed pair of policies on a fixed problem reverses under re-accounting of the same trajectories cannot carry safety semantics; and under the integrated yardstick it is the exactly minimax-optimal law that looks bad.
\end{remark}

The moral is not that regret is uninteresting, but that it is the wrong criterion for adversarial safety, in both directions. A time-average metric permits amortizing rare disasters, while a peak metric, the formal carrier of safety constraints, does not; and conversely a policy that is exactly optimal for the peak may look poor through a regret lens, because refusing to gamble on unidentified dynamics has a running cost that the clairvoyant does not pay. Any exploration scheme whose justification rests on amortization (explore-then-commit, $\varepsilon$-greedy schedules, optimism with logarithmic exploration budgets) imports a stochastic accounting into a setting where the adversary controls the exchange rate. This formalizes, at the level of a single solvable problem, the separation between mistake/safety guarantees and regret guarantees observed in \citet{ho2021online}, and complements the certainty-equivalence sensitivity analysis of \citet[Ch.~4]{ho2023thesis} and \citet{yu2023online}: in the closed-loop error dynamics of certainty-equivalent adaptation, the destabilizing gain is proportional to the \emph{movement} of the posited model between consecutive steps, not to its error, so a selector that keeps moving (as exploration forces it to) keeps the loop excited forever, while a competitive chaser moves only when data force it to, totaling at most $\Delta$ of movement over an infinite horizon (Proposition~\ref{prop:selector}(ii)).

\section{Numerical verification}\label{sec:numerics}

Independently of the proofs, Theorem~\ref{thm:main} was stress-tested numerically. For each $\Delta$ from $0.25$ to $500$, the closed loop of Definition~\ref{def:controller} was attacked by the theoretical worst case of Proposition~\ref{prop:lb}, by aligned and randomized disturbance strategies over parameter grids, by random-restart search over $(a, w)$, and by a \emph{lazy} (deferred-commitment) \emph{set-membership adversary} that at each step may realize any successor state consistent with some admissible $(a,w)$, an action space that provably contains every admissible adversary; every lazy rollout is converted into an explicit certificate (a parameter $a^\star$ in the final consistent set and reconstructed disturbances $w^\star_t$ verified to satisfy $\abs{w^\star_t}\le 1$) and replayed to confirm the identical trajectory. No attack ever exceeded $1 + \Delta$; the worst case of Proposition~\ref{prop:lb} attains it to machine precision at every $\Delta$; and independent exact reachability analysis of the scalar information state $(s_t, L_t)$ confirms the value with zero slack for $\Delta$ ranging from $0.125$ to $5\cdot10^4$. The forced-midpoint penalty of Proposition~\ref{prop:implicit}(iii) was checked the same way: a deviation $d$ at the post-kick history yields realized worst-case peak $1 + \Delta + d$ to $10^{-9}$, matching the exact linear penalty of Remark~\ref{rem:u0}. Figure~\ref{fig:verification} shows the zero-slack agreement and the realized worst-case trajectory.\footnote{A caveat for floating-point implementations: the $x_t = 0$ branch of the update \eqref{eq:setmem} is numerically delicate --- a state that is merely \emph{near} zero produces a window whose center divides rounding noise by $\abs{x_t}$, misplacing the posterior cut. Implementations should treat $\abs{x_t}$ below a threshold as zero; apparent bound violations in unguarded float simulations are artifacts of this knife edge and disappear under exact rational arithmetic.}

\begin{figure}[t]
    \centering
    \includegraphics[width=\textwidth]{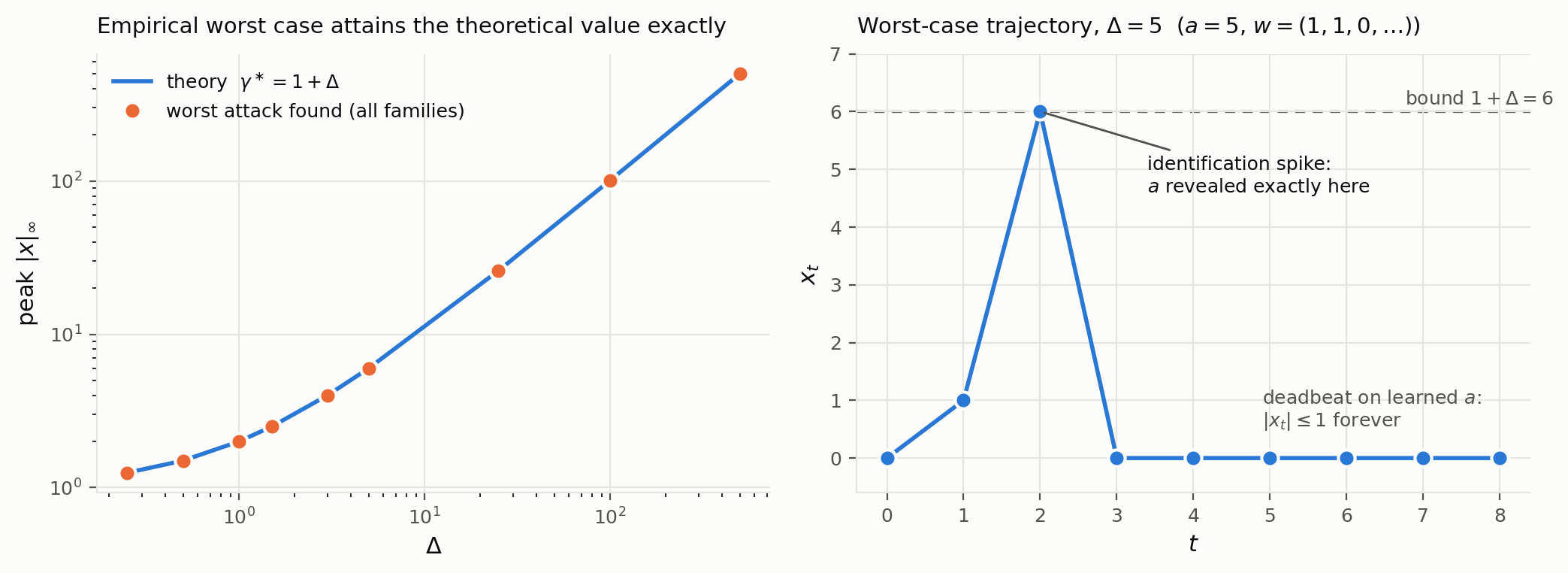}
    \caption{Empirical verification. \emph{Left:} the worst peak found by any attack family (markers) coincides with the theoretical value $\gamma^\star = 1 + \Delta$ (line) over more than three orders of magnitude of $\Delta$. \emph{Right:} the realized worst-case trajectory for $\Delta = 5$: the disturbance kicks the state to $1$ while revealing nothing about $a$; the aligned extreme parameter then produces the one-time identification spike $\abs{x_2} = \Delta + 1 = 6$, which reveals $a$ exactly, after which deadbeat control holds $\abs{x_t} \le 1$ forever.}
    \label{fig:verification}
\end{figure}

\section{Discussion and outlook}\label{sec:outlook}

\paragraph{What the exact solution teaches.} Three structural lessons emerge. First, \emph{the value is governed by the diameter of the uncertainty, not the instability of the nominal system} (Remark~\ref{rem:extensions}(c)): a known unstable pole is free; not knowing it is what costs. Second, \emph{constancy is what makes the price payable}: the same uncertainty set with a time-varying pole $a_t$ has infinite value for $\Delta \ge 1$ (Section~\ref{sec:related}), no observation ever constraining the next step, while the constant, hence identifiable, pole costs exactly $\Delta$; adaptation at large uncertainty works because the unknown holds still long enough to be learned. Third, \emph{the optimal way to learn adversarially is passively}: consistency extracts every bit the adversary's attack necessarily leaks, and competitiveness ensures the controller's model hypothesis moves no more than the data compel, the two properties that define consistent model chasing \citep{ho2021online}. The distinction is as old as this line of work: its first installment \citep{ho2018passive} introduced controllers that learn \emph{passively} near the origin and turn \emph{aggressive}, actively exciting the system, when pushed far from it; the present paper shows that under the peak objective the aggressive mode is priced out from rest and, at the optimum, only the passive mode survives.

\paragraph{The multivariate frontier.} The scalar solution is the ground floor of a tower. In dimension $n$ (unknown matrix $A$ with entrywise bound $\Delta$), our preliminary analysis supports the conjecture that the adversary can spend roughly one identification spike per direction, exactly one in the scalar case, so that the minimax peak grows as $\Delta^n$ up to $n$-dependent constants, with the natural generalization of Definition~\ref{def:controller} (per-coordinate functional centering on the consistent row sets) optimal at the level of this growth order. Exactness, however, will not come for free: the game is richer and the analysis harder --- already in $\R^2$ the adversary can profit from \emph{sacrificing} immediate peak to preserve uncertainty for a later, larger spike --- and the exact value is open. A structural reason makes the scalar case silent on what the answer will look like: in dimension one every notion of centrality collapses. Midpoint, Steiner point, Chebyshev center coincide, so the scalar certificate cannot reveal \emph{which} of them optimality selects; in higher dimensions they part ways, and identifying the right centering (the right consistent model chaser, the slot the architecture deliberately leaves open) is, we believe, where the exact theory will be decided. A complete account is the subject of ongoing work.

\paragraph{Nonzero initial conditions.} The game \eqref{eq:minimax} starts at rest, and the lower bound exploits precisely that: the kick buys the adversary a unit regressor at zero informational price. A nonzero initial state hands the adversary that regressor before play begins, and Corollary~\ref{cor:nonzeroic} settles every initial condition at or above the disturbance level --- the value is $\max(\abs{x_0},\, 1 + \Delta\abs{x_0})$, the midpoint law attains it, and outside the degenerate case the midpoint is again forced at the start --- with the same window lemma and invariant carrying the proof. What remains open is the sub-disturbance window $0 < \abs{x_0} < 1$, where the lower bound $1 + \Delta\abs{x_0}$ parts ways with the law's guarantee of $1 + \Delta$. Safety is not in question there, and the window's endpoints $\abs{x_0} = 0$ and $\abs{x_0} = 1$ both carry the value $1 + \Delta$; what is open is the exact value between them, which we defer.

\paragraph{The selector is the load-bearing component: a data-first case study.} The optimal architecture admits an instructive comparison with the model-free cancellation controller of \citet{ho2019robust} (see also \citealp[Ch.~5]{ho2023thesis}), which stabilizes \eqref{eq:sys} (in any dimension) with \emph{no} prior parameter bound, directly from data matrices. Specialized to $n=1$, that controller is itself a certainty-equivalent deadbeat law, sharing the oracle and the passive, consistency-driven structure of Definition~\ref{def:controller}, but its selector retains \emph{one} data window, the one generated by the largest past regressor (a fixed initialization entry $\varepsilon > 0$ acting as a phantom regressor positing $\hat a = 0$), in place of the full consistent set $A_t$. That controller, however, solves a harder problem than \eqref{eq:minimax}: it assumes no bound on the parameter and no knowledge of the disturbance bound, and part of the gap below is the price of that generality. The difference is still not cosmetic. Kicking the state to $\abs{x_1} = c \in (\varepsilon, 1]$ and cashing in immediately forces a peak of $\Delta + 1/c + 1$, unbounded as $\varepsilon \to 0$, and hiding the first state below $\varepsilon$ forces, at the natural scale $\varepsilon = 1$, a peak of $\Delta^2 + \Delta + 1$: one full extra multiplicative stage.\footnote{Both formulas are realized by explicit three-move attacks, replayed numerically across $\Delta \in [1.5, 25]$. Cash-in: $w_0 = c$ moves the state to $c$ while the phantom regressor still governs, so $u_1 = 0$; the aligned pair $a = \Delta$, $w_1 = 1$ gives $x_2 = \Delta c + 1$, measured through the regressor $c$, so the retained window posits $\hat a = \Delta + 1/c$, off by $1/c$; the aligned $w_2$ then realizes $\abs{x_3} = \tfrac1c(\Delta c + 1) + 1 = \Delta + 1/c + 1$. Masking, at $\varepsilon = 1$: $w_0 = \delta < 1$ keeps the phantom regressor in charge for two steps ($u_1 = u_2 = 0$), so the aligned $a = \Delta$ compounds twice, $\abs{x_3} = \Delta^2\delta + \Delta + 1 \uparrow \Delta^2 + \Delta + 1$ as $\delta \uparrow 1$.} Repaired variants fall short as well: clipping the estimate to the prior still pays order $2\Delta$ under the same masking attack, and intersecting \emph{all} data windows without the prior does not help, because at the critical spike moment only one window exists and it is the \emph{prior} that caps the estimation error at $\Delta$. The exact optimum $1 + \Delta$ is restored precisely when the selector keeps the entire consistent set, all windows \emph{and} the prior, and centers on it; the resulting law remains purely data-first, the endpoints of $A_t$ being extrema of the raw-data functionals $(x_{j+1} - u_j \pm 1)/x_j$ and $\pm\Delta$. The comparison sharpens the conclusion of Section~\ref{sec:architecture}: with the oracle fixed, the entire distance between ``stabilizing'' and ``minimax optimal'' is carried by the selector.

\paragraph{Toward a model-chasing theory of adaptive control.} The classical stochastic theory of adaptive control has certainty equivalence and persistent excitation as its organizing principles. The results here, combined with the framework of \citet{ho2021online} and its sequels, suggest an organizing principle for the adversarial regime: \emph{robust oracle $\times$ consistent model chasing}, with optimality certificates in place of asymptotic ones. The scalar problem now has a complete such certificate; extending exactness to richer uncertainty classes, partial observation, and input constraints are natural next steps.

\bibliographystyle{plainnat}
\bibliography{refs}

\appendix

\section{The framework interface conditions: statements and proofs}\label{app:interface}

This appendix states the two interface conditions of the oracle--selector framework, specialized to our setting --- they are formulated for general nonlinear systems and compact parameter spaces in \citet{ho2021online}, and developed in full in \citet[Ch.~6]{ho2023thesis} --- and proves that the two components of Definition~\ref{def:controller} satisfy them, as claimed in Section~\ref{sec:architecture}.

\begin{definition}[Robust oracle; consistent model chaser \citep{ho2021online}]\label{def:interface}
Let $(\Theta, d)$ be a compact parameter space of models. \emph{(i)} A map $\pi$ from parameters to control policies is a \emph{$\rho$-robust oracle} for an objective encoded by binary costs $\mathcal{G}_t(x_t, u_t) \in \{0,1\}$ (``mistakes'') if, whenever the policies $\pi[\theta_t]$ are applied with any parameter sequence satisfying $d(\theta_t, \theta^*) \le \rho$ for the true $\theta^*$, the total number of mistakes is finite; the \emph{mistake function} $M^\pi_\rho(\gamma)$ is the worst-case number of mistakes from initial states of norm at most $\gamma$. \emph{(ii)} A map $\Sel$ from data sets to parameters is a \emph{consistent model chaser} if it always selects a parameter consistent with all data seen so far, and it is \emph{$\gamma_{\mathrm{c}}$-competitive} if over every time window its selections' total movement is bounded by $\gamma_{\mathrm{c}}$ times the Hausdorff distance between the consistent sets at the window's ends: $\sum_{t = t_1+1}^{t_2} d(\theta_t, \theta_{t-1}) \le \gamma_{\mathrm{c}}\, \dH\big(P_{t_1}, P_{t_2}\big)$, where $P_t$ is the consistent set at time $t$. The composition $\Api$ plays, at every step, $u_t = \pi[\Sel(\text{data}_t)](x_t)$.
\end{definition}

If $\pi$ is a robust oracle and $\Sel$ a competitive chaser, the general theorems of \citet{ho2021online} guarantee that the composition $\Api$ inherits worst-case boundedness and mistake bounds \emph{for arbitrarily large parameter sets}.

\begin{proposition}[Deadbeat is a robust oracle]\label{prop:oracle}
Let $\pi[\theta](x) := -\theta x$ for $\theta \in \R$. For any $\rho \in (0,1)$ and any parameter sequence $(\theta_t)$ with $\abs{\theta_t - a} \le \rho$ for all $t$, the closed loop $x_{t+1} = a x_t + \pi[\theta_t](x_t) + w_t$ satisfies
\[
    \abs{x_{t+1}} \le \rho\abs{x_t} + 1,
    \qquad\text{hence}\qquad
    \abs{x_t} \le \tfrac{1}{1-\rho} + \rho^t\big(\abs{x_0} - \tfrac{1}{1-\rho}\big),
\]
i.e., $\pi$ is a $\rho$-robust control oracle in the sense of \citet{ho2021online} for every interval-membership objective $\mathbf{1}\{\abs{x_t} > b\}$ with threshold $b > \tfrac{1}{1-\rho}$ (the displayed geometric envelope is the corresponding peak guarantee), with mistake function $M^\pi_\rho(\gamma) \le \log_+\gamma/\log\rho^{-1} + c(\rho, b)$, where $\log_+ := \max(\log, 0)$.
\end{proposition}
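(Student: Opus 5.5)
The plan is a direct computation: first the one-step contraction, then unroll it, then convert the envelope into a count of mistakes. Substituting the oracle into the dynamics gives the closed loop
\[
    x_{t+1} = (a-\theta_t)x_t + w_t .
\]
The triangle inequality, together with $\abs{a-\theta_t}\le\rho$ and $\abs{w_t}\le 1$, yields $\abs{x_{t+1}}\le\rho\abs{x_t}+1$. This is the only place where the system enters, and it mirrors the closed loop \eqref{eq:closedloop} with $e_t$ replaced by $a-\theta_t$.

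Next I unroll the scalar recursion. Write $z_t:=\abs{x_t}-\tfrac{1}{1-\rho}$. The recursion becomes $z_{t+1}\le\rho z_t$, since $\tfrac{1}{1-\rho}$ is the fixed point of $z\mapsto\rho z+1$. Induction then gives $z_t\le\rho^t z_0$, which is exactly the displayed envelope. The induction is legitimate because multiplication by $\rho>0$ preserves the inequality, and the bound holds whichever sign $z_0$ has.

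For the mistake function, fix a threshold $b>\tfrac{1}{1-\rho}$ and an initial state with $\abs{x_0}\le\gamma$. A mistake at time $t$ requires $\abs{x_t}>b$. By the envelope this forces
\[
    \rho^t\bigl(\gamma-\tfrac{1}{1-\rho}\bigr) > b-\tfrac{1}{1-\rho} > 0 .
\]
This cannot happen once $t > \log\bigl((\gamma-\tfrac1{1-\rho})_+/(b-\tfrac1{1-\rho})\bigr)/\log\rho^{-1}$. Moreover, the envelope is nonincreasing in $t$ whenever $z_0>0$, and it already lies below $b$ from the start when $z_0\le 0$. Hence all mistakes occur in an initial segment, and their number is at most $1$ plus that ratio. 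Splitting the logarithm, I bound it by
\[
    \log_+\gamma/\log\rho^{-1} + c(\rho,b), \qquad c(\rho,b):=1+\log_+\bigl(1/(b-\tfrac1{1-\rho})\bigr)/\log\rho^{-1}.
\]
Here I use $(\gamma-\tfrac1{1-\rho})_+\le\max(\gamma,1)$.

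None of these steps is a real obstacle. The only care needed is in matching the oracle condition of Definition~\ref{def:interface}: its hypothesis $d(\theta_t,\theta^*)\le\rho$ must be read with $d$ the absolute value and $\theta^*=a$. I also need to check that the mistake count is uniform over all admissible $w$, and it is, because the envelope does not depend on $w$.
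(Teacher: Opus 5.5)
Your proof is correct and follows the paper's argument essentially step for step. It uses the same closed-loop substitution, the same shift $z_t = \abs{x_t} - \tfrac{1}{1-\rho}$ giving $z_{t+1} \le \rho z_t$, and the same count of admissible mistake times, arriving at exactly the paper's constant $c(\rho,b)$. (The paper also notes that the threshold condition $b > \tfrac{1}{1-\rho}$ is sharp, but the statement does not ask for this.)
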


\begin{proof}[Proof of Proposition~\ref{prop:oracle}]
With $\abs{\theta_t - a} \le \rho$ the closed loop reads $x_{t+1} = (a - \theta_t)x_t + w_t$, so $\abs{x_{t+1}} \le \rho\abs{x_t} + 1$; iterating gives the displayed envelope. For the mistake bound set $c_0 := \tfrac{1}{1-\rho}$ and $y_t := \abs{x_t} - c_0$, so that $y_{t+1} \le \rho\,y_t$ and $y_0 \le \gamma - c_0$. A mistake at time $t$ means $\abs{x_t} > b$, i.e.\ $y_t > b - c_0$, which is positive since $b > c_0$; as $y_t \le \rho^t(\gamma - c_0)$ whenever $\gamma > c_0$ (and $y_t \le 0$ forever otherwise), a mistake requires $\rho^t(\gamma - c_0) > b - c_0$, i.e.\ $t < \log\big((\gamma - c_0)/(b - c_0)\big)/\log\rho^{-1}$. Counting the admissible $t$,
\[
    M^\pi_\rho(\gamma) \;\le\; 1 + \frac{\log_+\gamma + \log_+\tfrac{1}{b - c_0}}{\log\rho^{-1}}
    \;=\; \frac{\log_+\gamma}{\log\rho^{-1}} + c(\rho, b),
    \qquad c(\rho, b) := 1 + \frac{\log_+\tfrac{1}{b - c_0}}{\log\rho^{-1}}.
\]
The threshold condition is sharp: at $b = c_0$, from any $\gamma > c_0$ the adversary can realize $\abs{x_t} = c_0 + \rho^t(\gamma - c_0) > b$ at every $t$, so no finite mistake bound exists for $b \le c_0$.
\end{proof}

\begin{proposition}[Midpoint is a $1$-competitive consistent model chaser]\label{prop:selector}
Let $(A_t)$ be any nested sequence of closed intervals and $m_t := \midpt(A_t)$. Then:
\begin{enumerate}[label=(\roman*), nosep]
    \item $m_t \in A_t$ for all $t$ (consistency);
    \item $\abs{m_{t+1} - m_t} \le \tfrac12\big(L_t - L_{t+1}\big)$, so the total movement satisfies $\sum_{t \ge 0}\abs{m_{t+1}-m_t} \le \tfrac12\big(L_0 - \lim_t L_t\big) \le L_0/2$ ($= \Delta$ for the sets of Definition~\ref{def:controller});
    \item for all $t_1 < t_2$: $\sum_{t=t_1+1}^{t_2} \abs{m_t - m_{t-1}} \le \dH(A_{t_1}, A_{t_2})$, i.e., $\midpt$ is a $1$-competitive consistent model chaser in the sense of Definition~\ref{def:interface}.
\end{enumerate}
Moreover, $\midpt(A_t)$ coincides with the Steiner point of $A_t$ in dimension one, so this selector is the one-dimensional instance of the Steiner-point chasing algorithm of \citet{bubeck2020chasing} employed in \citet{ho2021online}.
\end{proposition}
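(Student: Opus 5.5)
The plan is to reduce all three claims to elementary facts about nested closed intervals, written in endpoint coordinates. Write $A_t = [\ell_t, r_t]$. Nestedness means $\ell_t$ is nondecreasing and $r_t$ is nonincreasing, and $L_t = r_t - \ell_t$, $m_t = (\ell_t + r_t)/2$.

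Claim (i) is immediate, since the midpoint of a closed interval lies in it. For (ii) we would compute
\[
2(m_{t+1} - m_t) = (\ell_{t+1} - \ell_t) - (r_t - r_{t+1}).
\]
Both bracketed increments are nonnegative, so the absolute value of their difference is at most their sum, which equals $L_t - L_{t+1}$. This gives $\abs{m_{t+1} - m_t} \le \tfrac12(L_t - L_{t+1})$. Summing telescopes to $\tfrac12(L_0 - \lim_t L_t)$; the limit exists because $L_t$ is monotone and nonnegative. For the sets of Definition~\ref{def:controller} we have $L_0 = 2\Delta$, so the total is at most $\Delta$.

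For (iii) there is a subtlety. Summing (ii) over the window bounds the movement by $\tfrac12(L_{t_1} - L_{t_2})$, so it suffices to show
\[
\tfrac12(L_{t_1} - L_{t_2}) \le \dH(A_{t_1}, A_{t_2}).
\]
For nested intervals $A_{t_2} \subseteq A_{t_1}$, the Hausdorff distance is $\max(\ell_{t_2} - \ell_{t_1},\, r_{t_1} - r_{t_2})$. Writing $\alpha := \ell_{t_2} - \ell_{t_1} \ge 0$ and $\beta := r_{t_1} - r_{t_2} \ge 0$, we get $L_{t_1} - L_{t_2} = \alpha + \beta \le 2\max(\alpha, \beta)$, which is the required inequality. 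The competitive constant is therefore $1$, and in fact a factor $\tfrac12$ of slack appears when only one endpoint moves. I expect the only step needing care to be this Hausdorff formula for nested intervals: the farthest point of $A_{t_1}$ from $A_{t_2}$ is one of its endpoints, at distance $\alpha$ or $\beta$, while every point of $A_{t_2}$ lies at distance $0$ from $A_{t_1}$.

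Finally, for the Steiner-point identification we would recall that in dimension one the Steiner point of a convex body $K$ is $\tfrac12(h_K(1)\cdot 1 + h_K(-1)\cdot(-1)) = \tfrac12(r - \ell)$. More precisely, it is the average of the two support points, namely the boundary points $r$ and $\ell$, which equals the midpoint. We would state this in a line, citing the definition used in \citet{bubeck2020chasing}.
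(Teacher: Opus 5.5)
Your proof is correct and matches the paper's argument: the endpoint-increment identity $2(m_{t+1}-m_t) = (\ell_{t+1}-\ell_t) - (r_t - r_{t+1})$ gives (ii) by telescoping, and (iii) follows from summing (ii) together with the nested-interval formula $\dH(A_{t_1},A_{t_2}) = \max(\alpha,\beta) \ge \tfrac12(\alpha+\beta)$. There is one arithmetic slip in the Steiner-point line: since $h_K(-1) = -\ell$, the expression $\tfrac12\big(h_K(1)\cdot 1 + h_K(-1)\cdot(-1)\big)$ equals $\tfrac12(r+\ell)$, not $\tfrac12(r-\ell)$ (which is the half-length), though your next sentence already states the correct conclusion.
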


\begin{proof}[Proof of Proposition~\ref{prop:selector}]
(i) is immediate. For (ii), write $A_t = [l, r] \supseteq A_{t+1} = [l', r']$ and set $g_l := l'-l \ge 0$, $g_r := r - r' \ge 0$; then $m_{t+1} - m_t = (g_l - g_r)/2$, so $\abs{m_{t+1}-m_t} \le (g_l + g_r)/2 = (L_t - L_{t+1})/2$; telescoping gives the total bound. For (iii), summing the per-step bound over $(t_1, t_2]$ gives $\tfrac12\big(G_l + G_r\big)$ where $G_l, G_r \ge 0$ are the total endpoint displacements between $A_{t_1}$ and $A_{t_2}$; since for nested intervals $\dH(A_{t_1}, A_{t_2}) = \max(G_l, G_r) \ge \tfrac12(G_l + G_r)$, the claim follows.
\end{proof}

\section{Attack primitives and lower-bound witnesses}\label{app:echo}

Throughout, $s_t := \abs{x_t}$, $L_t := \abs{A_t}$, and $E_t := \sup_{\alpha \in A_t}\abs{\alpha - \theta_t}$ (the implicit selector's worst-case error, as in Section~\ref{sec:ub}), with $h$ as in Proposition~\ref{prop:echo}. This appendix names the three attack moves used by the lower bounds of Section~\ref{sec:exploration} and runs them, at every error cap, to produce the witnesses invoked by Proposition~\ref{prop:echo}.

\paragraph{The attack vocabulary.}
The lower bounds of Section~\ref{sec:exploration} are assembled from three moves, which we name once here and refer to by name thereafter. Throughout, the controller is a certainty-equivalent deadbeat law $u_t = -\theta_t x_t$ with an arbitrary consistent selector --- equivalently, by Definition~\ref{def:implicit}, any causal law whose implicit selector is consistent --- and $s_t$, $L_t$, $E_t$ are as above.

\begin{definition}[Attack primitives]\label{def:attacks}
\leavevmode
\begin{enumerate}[label=(\alph*), nosep, leftmargin=*]
\item \emph{Kick.} At $t = 0$ the state $x_0 = 0$ forces $u_0 = 0$, so $w_0 = \pm 1$ realizes $s_1 = 1$; and because the transition's regressor is $x_0 = 0$, the induced data window is all of $\R$ and $A_1 = A_0$. A unit regressor, bought at zero information.
\item \emph{Spike of height $y$.} From $s_t = 1$ with selected point $\theta_t$, let $c$ be the corner of $A_t$ farthest from $\theta_t$, so $\abs{c - \theta_t} = E_t$. The adversary drives the state toward $c$ but \emph{holds back} by $E_t - y$, realizing $s_{t+1} = y$; admissibility of $\abs{w_t} \le 1$ makes exactly $y \in [E_t - 1,\, E_t + 1]$ available. The induced width-$2$ data window then sits at the $c$-end of $A_t$, and for $y \ge 1$ the surviving posterior abuts $c$ with length
\[
    \mu \;=\; \min\big(2,\; E_t + 1 - y\big).
\]
Raising $y$ slides the window off the far corner and shortens $\mu$ one-for-one. This is the additive tradeoff of Lemma~\ref{lem:window}(i) read as a design knob.
\item \emph{Echo.} The adversary does \emph{not} commit to $a$ at the spike. It waits until $\theta_{t+1}$ is revealed, takes $a$ to be the endpoint of $A_{t+1}$ opposite $\theta_{t+1}$ (at distance $E_{t+1}$, whichever endpoint the selector chose), and aligns $w_{t+1}$, producing $s_{t+2} = E_{t+1}\,y + 1$. Deferral is legitimate by the certificate argument in the proof of Proposition~\ref{prop:implicit}(i): the required certificate is $a \in A_{t+1}$ together with reconstructed disturbances of magnitude $\le 1$, and the data window is by construction exactly the set of parameters meeting the latter.
\end{enumerate}
\end{definition}

The \emph{half-spike} of Proposition~\ref{prop:echo} and the \emph{vertex-spike} introduced here are not two attacks but the two optimal tunings of the single knob $y$ in their respective regimes (the \emph{full-spike} of Table~\ref{tab:witness} is the aligned kick of Proposition~\ref{prop:endpoint}, needing no echo): the echo payoff is $\mu\,y = y\,\min(2,\, E_t + 1 - y)$, and maximizing it over the feasible range is a one-line constrained maximization whose two branches are the two branches of $h$.

\begin{figure}[t]
\centering
\begin{tikzpicture}
  \begin{scope}
    \fill[black!14] (-2.5,-0.16) rectangle (-0.5,0.16);
    \draw[very thick] (-2.5,0) -- (2.5,0);
    \draw[very thick] (-2.5,-0.24) -- (-2.5,0.24);
    \draw[very thick] (2.5,-0.24) -- (2.5,0.24);
    \draw (-0.5,-0.24) -- (-0.5,0.24);
    \node[font=\small, anchor=south] at (-2.5,0.30) {$a$};
    \node[font=\small, anchor=south] at (2.5,0.30) {$\theta_1$};
    \draw[gray, dashed] (-2.5,0.62) -- (-2.5,1.05);
    \draw[gray, dashed] (-0.5,0.30) -- (-0.5,1.05);
    \draw[{Latex}-{Latex}] (-2.5,1.05) -- (-0.5,1.05);
    \node[font=\small, anchor=south] at (-1.5,1.10) {data window, width $2$};
    \draw[{Latex}-{Latex}] (-2.5,-0.62) -- (-0.5,-0.62);
    \node[font=\small, anchor=north] at (-1.5,-0.68) {$\mu = 2$};
    \node[font=\small, anchor=north] at (0,-1.45) {\textbf{half-spike:} $y = \ecap - 1$};
    \node[font=\small, anchor=north] at (0,-1.90) {window just fits; $\mu$ is maximal};
    \node[font=\footnotesize, anchor=north] at (0,-2.35) {($\ecap = 5$, $y = 4$)};
  \end{scope}
  \begin{scope}[xshift=8.2cm]
    \fill[black!14] (-2.5,-0.16) rectangle (1.0,0.16);
    \draw[very thick] (-2.5,0) -- (2.5,0);
    \draw[very thick] (-2.5,-0.24) -- (-2.5,0.24);
    \draw[very thick] (2.5,-0.24) -- (2.5,0.24);
    \draw (1.0,-0.24) -- (1.0,0.24);
    \draw[thick, dotted] (-3.0,-0.16) rectangle (1.0,0.16);
    \node[font=\small, anchor=south] at (-2.5,0.30) {$a$};
    \node[font=\small, anchor=south] at (2.5,0.30) {$\theta_1$};
    \draw[gray, dashed] (-3.0,0.16) -- (-3.0,1.05);
    \draw[gray, dashed] (1.0,0.30) -- (1.0,1.05);
    \draw[{Latex}-{Latex}] (-3.0,1.05) -- (1.0,1.05);
    \node[font=\small, anchor=south] at (-1.0,1.10) {data window, width $2$};
    \draw[{Latex}-{Latex}] (-2.5,-0.62) -- (1.0,-0.62);
    \node[font=\small, anchor=north] at (-0.75,-0.68) {$\mu = \ecap + 1 - y$};
    \node[font=\small, anchor=north] at (-0.25,-1.45) {\textbf{vertex-spike:} $y = \tfrac{\ecap+1}{2}$};
    \node[font=\small, anchor=north] at (-0.25,-1.90) {window overhangs; $\mu$ shrinks};
    \node[font=\footnotesize, anchor=north] at (-0.25,-2.35) {($\ecap = 5/2$, $y = 7/4$)};
  \end{scope}
\end{tikzpicture}
\caption{The spike knob, in the regressor coordinate of Corollary~\ref{cor:geom}. Thick segment: the prior $A_1 = [-\tfrac{\ecap}{2}, \tfrac{\ecap}{2}]$ after the kick, with the selected endpoint $\theta_1$ at one end. Shaded: the surviving posterior $A_2$, which always abuts the far corner $a$. Raising the spike height $y$ slides the width-$2$ data window leftward off that corner. \emph{Left:} at $y = \ecap-1$ the window just fits inside the prior and the full length $\mu = 2$ survives. \emph{Right:} at $y > \ecap-1$ the window overhangs and $\mu = \ecap+1-y$ shrinks one-for-one --- the tradeoff the adversary optimizes.}
\label{fig:spike}
\end{figure}

\paragraph{The witnesses.}
The next lemma runs the three moves against the endpoint law at every error cap $\ecap$ and tunes the spike optimally; instantiated at $\ecap = 2\Delta$ it yields the full piecewise lower bound of Proposition~\ref{prop:echo}. Table~\ref{tab:run} replays the run step by step --- each row is one time step, carrying every quantity the proof references at that step.

\begin{lemma}[The witness family: $1 + h(\ecap)$ at every error cap]\label{lem:lb}
Fix $\ecap > 0$, run the game on the prior $A_0 = [-\ecap/2,\, \ecap/2]$, and let $K^{\mathrm{end}}$ be the certainty-equivalent deadbeat law selecting an endpoint of $A_t$ by any rule. Then $E_t = L_t \le \ecap$ at every step, and there is an admissible pair $(a, w)$ with
\[
    \norm{x}_\infty \;=\; 1 + h(\ecap) \quad (\ecap \ge 1), \qquad\qquad \norm{x}_\infty \;=\; 1 + \ecap \quad (\ecap \le 1).
\]
The step-by-step run is that of Table~\ref{tab:run}; the optimal spike heights, and the resulting trajectories, are those of Table~\ref{tab:witness}.
\end{lemma}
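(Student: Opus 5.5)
The plan is to build the witness explicitly from the three moves of Definition~\ref{def:attacks}: kick, spike of tuned height $y$, echo. After that, the adversary freezes the loop so that the peak is exactly the echo state. The first claim, $E_t = L_t \le \ecap$, is immediate. If $\theta_t$ is an endpoint of the interval $A_t$, the farthest point of $A_t$ from it is the opposite endpoint, so $E_t = L_t$. Nestedness (\eqref{eq:setmem}) then gives $L_t \le L_0 = \ecap$.

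\emph{Regime $\ecap \le 1$.} The kick realizes $s_1 = 1$ with $A_1 = A_0$ and $E_1 = \ecap$. The adversary then plays the full aligned spike of Proposition~\ref{prop:implicit}(i), realizing $s_2 = E_1 + 1 = 1 + \ecap$. To see that nothing later exceeds this, read the step in the regressor coordinate of Corollary~\ref{cor:geom}. The image of $A_1$ is the one-sided interval $[0,\ecap]$, since $\theta_1$ is an endpoint. The width-$2$ window is centred at $y = \ecap + 1$, so it meets $[0,\ecap]$ only at the far corner. Hence $A_2 = \{a\}$, the endpoint selector plays $\theta_2 = a$, and deadbeat control with $w_t = 0$ for $t \ge 2$ gives $x_t = 0$ thereafter. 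The peak is therefore exactly $1 + \ecap$.

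\emph{Regime $\ecap \ge 1$.} Kick to $s_1 = 1$, then spike to height $y$. Admissibility permits every $y \in [E_1 - 1, E_1 + 1] = [\ecap - 1, \ecap + 1]$, and I restrict to $y \ge 1$ so that the window lies on one side of $0$. In regressor units with $s_1 = 1$, the posterior abuts the far corner and has length $L_2 = \mu = \min(2, \ecap + 1 - y)$. The formula comes from intersecting $[0,\ecap]$ with $[y-1, y+1]$: the lower end of the window clips only when $y - 1 < 0$, which $y \ge 1$ excludes. The adversary then runs the echo, choosing $a$ as the endpoint of $A_2$ opposite $\theta_2$ and aligning $w_2$. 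This is legitimate by the deferred-commitment certificate argument of Proposition~\ref{prop:implicit}(i), and it yields $s_3 = \mu y + 1$.

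The echo step is again a full aligned spike against a one-sided image $[0, \mu y]$, so exactly as in the first regime the posterior collapses: $A_3 = \{a\}$, and $w_t = 0$ thereafter gives $x_t = 0$. The peak is then $\max(1, y, \mu y + 1)$, which equals $\mu y + 1$ as soon as $\mu \ge 1$. It remains to maximize $f(y) = y\,\min(2, \ecap + 1 - y)$ over $y \in [\max(1, \ecap - 1), \ecap + 1]$. On the feasible set $\ecap + 1 - y \le 2$, so $f(y) = y(\ecap + 1 - y)$, a concave parabola with vertex at $y = (\ecap+1)/2$.
\begin{itemize}
\item If $1 \le \ecap \le 3$, the vertex lies in the feasible range, since $(\ecap+1)/2 \ge \ecap - 1$ exactly when $\ecap \le 3$, and $(\ecap+1)/2 \ge 1$. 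Taking $y = (\ecap+1)/2$ gives $\mu = y$ and $f = ((\ecap+1)/2)^2$.
\item If $\ecap \ge 3$, the parabola is decreasing on the feasible range. The maximum is at $y = \ecap - 1$ (the half-spike), with $\mu = 2$ and $f = 2(\ecap - 1)$.
\end{itemize}
In both cases $\mu \ge 1$, so the peak is $1 + h(\ecap)$. This matches Table~\ref{tab:witness}.

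The main obstacle I expect is the geometric bookkeeping of the spike: verifying that for the chosen $y$ the reconstructed $w_1$ is admissible, and that the window sits at the far corner so that $\mu = \min(2, \ecap + 1 - y)$ holds exactly. The second delicate point is showing that the peak is \emph{equal} to, not merely at least, the claimed value. That rests on the collapse-to-a-point argument, which is the equality case of Lemma~\ref{lem:window}(i) applied to a one-sided image, where the endpoint selector sits at $0$. I would check both points directly in the regressor coordinate, following the replay of Table~\ref{tab:run}.
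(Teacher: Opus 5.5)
Your proposal is correct and follows the paper's proof essentially step for step: the kick, the spike of height $y$ with $\mu=\min(2,\ecap+1-y)$ read off in the regressor coordinate, the deferred-commitment echo, and the same vertex / half-spike / full-spike case split over $\ecap$. Your collapse-to-a-point argument is a useful addition that the paper's proof leaves implicit: the final aligned spike against a one-sided image leaves $A_{t+1}=\{a\}$, so every later state is at most $1$, which gives equality rather than just a lower bound. Likewise, your check that $\mu\ge 1$ at the optimizers is what actually justifies $\max(1,y,\mu y+1)=\mu y+1$.
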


\begin{proof}
\emph{Kick.} $w_0 = 1$ gives $x_1 = 1$ and $A_1 = [-\ecap/2, \ecap/2]$, so $L_1 = E_1 = \ecap$. Let $\theta_1 \in \{\pm \ecap/2\}$ be the endpoint the rule selects, $c := -\theta_1$ the opposite corner, and mirror if necessary so that $\theta_1 = \ecap/2$, $c = -\ecap/2$.

\emph{Spike of height $y$.} Put $x_2 := -y$. Since $u_1 = -\theta_1 x_1 = -\ecap/2$, the disturbance needed to realize this from a parameter $\alpha$ is $w_1 = x_2 - \alpha x_1 - u_1 = \ecap/2 - y - \alpha$, so the set of parameters consistent with the transition is exactly the window $[\,\ecap/2 - y - 1,\; \ecap/2 - y + 1\,]$, and every $\alpha$ in it certifies $\abs{w_1} \le 1$. Feasibility of the spike is $c$ lying in the window, i.e.\ $\ecap - 1 \le y \le \ecap + 1$ (the window meets $A_1$ at all iff $y \le \ecap+1$). For $\max(1,\, \ecap-1) \le y \le \ecap+1$ the window's upper end $\ecap/2 - y + 1$ lies at or below $\ecap/2$ and its lower end at or below $-\ecap/2$, so
\[
    A_2 \;=\; [\,-\ecap/2,\; \min(\ecap/2,\; \ecap/2 - y + 1)\,], \qquad \mu \;:=\; L_2 \;=\; \min\big(2,\; \ecap + 1 - y\big),
\]
an interval abutting $c$, as in Definition~\ref{def:attacks}(b) and Figure~\ref{fig:spike}.

\emph{Echo.} The selector reveals $\theta_2 \in \partial A_2$; the adversary sets $a$ to the opposite endpoint of $A_2$, so $\abs{a - \theta_2} = \mu = E_2$, and $a \in A_2$ certifies $\abs{w_1} \le 1$ by the previous paragraph. Aligning $w_2 := \sgn\big((a - \theta_2)x_2\big)$ gives $\abs{x_3} = \mu\,y + 1$, and the realized peak is $\max(1,\, y,\, \mu y + 1) = \mu y + 1$.

\emph{Optimizing the knob.} It remains to maximize $\mu y = y\,\min(2,\, \ecap+1-y)$ over $y \ge \max(1,\, \ecap-1)$, $y \le \ecap+1$. The unconstrained vertex of $y(\ecap+1-y)$ is $y = \tfrac{\ecap+1}{2}$ with value $\big(\tfrac{\ecap+1}{2}\big)^2$; it respects $\mu \le 2$ iff $\tfrac{\ecap+1}{2} \le 2$, i.e.\ \emph{iff} $\ecap \le 3$, and respects $y \ge 1$ \emph{iff} $\ecap \ge 1$. Hence:
\begin{itemize}[nosep]
\item \emph{$\ecap \ge 3$:} the feasible range is $y \in [\ecap-1,\, \ecap+1]$, on which the objective $y(\ecap+1-y)$ is decreasing (the vertex lies below $\ecap-1$), so the optimum is at $y = \ecap-1$, the half-spike, where the window just fits and $\mu = 2$, giving $\mu y = 2(\ecap-1) = h(\ecap)$ and peak $2\ecap - 1$.
\item \emph{$1 \le \ecap \le 3$:} the vertex is admissible --- the vertex-spike --- giving $\mu y = \big(\tfrac{\ecap+1}{2}\big)^2 = h(\ecap)$ and peak $1 + h(\ecap)$.
\item \emph{$\ecap \le 1$:} the constraint $y \ge 1$ binds, so the echo yields at most $\mu y \le \ecap$ at $y = 1$; the same value is reached directly by the \emph{full-spike} $y = \ecap+1$, which is the aligned kick of Proposition~\ref{prop:endpoint} and gives peak $\abs{x_2} = 1 + \ecap$ with no echo needed.
\end{itemize}
In each case the realized peak is the tabulated $1 + h(\ecap)$ for $\ecap \ge 1$, respectively $1 + \ecap$ for $\ecap \le 1$.
\end{proof}

\begin{table}[t]
\centering
\footnotesize
\setlength{\tabcolsep}{3.5pt}
\begin{tabular}{@{}cl c cc cc cc c@{}}
\toprule
&& state & \multicolumn{2}{c}{consistent set} & \multicolumn{2}{c}{controller} & \multicolumn{2}{c}{adversary} & transition \\
\cmidrule(lr){3-3}\cmidrule(lr){4-5}\cmidrule(lr){6-7}\cmidrule(lr){8-9}\cmidrule(l){10-10}
$t$ & move & $x_t$ & $A_t$ & $L_t{=}E_t$ & $\theta_t$ & $u_t$ & $a$ & $w_t$ & window $W_t$ \\
\midrule
$0$ & kick  & $0$  & $[-\tfrac{\ecap}{2},\tfrac{\ecap}{2}]$ & $\ecap$ & --- & $0$ & free & $+1$ & $\R$ \\
$1$ & spike & $1$  & $[-\tfrac{\ecap}{2},\tfrac{\ecap}{2}]$ & $\ecap$ & $\tfrac{\ecap}{2}$ & $-\tfrac{\ecap}{2}$ & free & $\tfrac{\ecap}{2}{-}y{-}a$ & $[\tfrac{\ecap}{2}{-}y{-}1,\;\tfrac{\ecap}{2}{-}y{+}1]$ \\
$2$ & echo  & $-y$ & $[-\tfrac{\ecap}{2},\;\tfrac{\ecap}{2}{-}y{+}1]$ & $\mu$ & ${\in}\;\partial A_2$ & $\theta_2\,y$ & opp.\ end of $A_2$ & $\sgn\!\big((a{-}\theta_2)x_2\big)$ & --- \\
$3$ & ---   & $\pm(\mu y{+}1)$ & \multicolumn{7}{l}{peak $= \mu y + 1$, \ where $\mu = \min(2,\; \ecap{+}1{-}y)$} \\
\bottomrule
\end{tabular}
\caption{The witness run of Lemma~\ref{lem:lb} at generic spike height $y$, on the prior $[-\tfrac{\ecap}{2}, \tfrac{\ecap}{2}]$, mirrored so that $\theta_1 = \tfrac{\ecap}{2}$ (far corner $c = -\tfrac{\ecap}{2}$). Row $t$ is the game state at step $t$: $A_t$ is the consistent set \emph{entering} the step, $W_t$ the data window induced by the transition out of it ($A_{t+1} = A_t \cap W_t$), and $E_t = L_t$ because the selector picks endpoints. The adversary defers committing to $a$ until $\theta_2$ is revealed: any $a \in A_2$ certifies $\abs{w_1} \le 1$ through the tabulated reconstruction $w_1 = \tfrac{\ecap}{2} - y - a$. For $\ecap \le 1$ the optimum is the full-spike $y = \ecap+1$, ending at $t = 2$ with no echo. Optimal tunings of $y$: Table~\ref{tab:witness}.}
\label{tab:run}
\end{table}

\begin{table}[t]
\centering
\small
\begin{tabular}{lllll}
\toprule
regime & spike height $y$ & posterior $\mu$ & trajectory $(x_1, x_2, x_3)$ & peak \\
\midrule
$\ecap \le 1$ & $\ecap + 1$ (full-spike) & --- (no echo) & $(1,\; -(1+\ecap))$ & $1 + \ecap$ \\
$1 \le \ecap \le 3$ & $\tfrac{\ecap+1}{2}$ (vertex) & $\tfrac{\ecap+1}{2}$ & $\big(1,\; -\tfrac{\ecap+1}{2},\; \pm(1 + h(\ecap))\big)$ & $1 + \big(\tfrac{\ecap+1}{2}\big)^2$ \\
$\ecap \ge 3$ & $\ecap - 1$ (half-spike) & $2$ & $\big(1,\; -(\ecap-1),\; \pm(2\ecap-1)\big)$ & $2\ecap - 1$ \\
\bottomrule
\end{tabular}
\caption{The witnesses of Lemma~\ref{lem:lb}, on the prior $[-\tfrac{\ecap}{2}, \tfrac{\ecap}{2}]$ against the endpoint law. The sign of $x_3$ is whichever the selector's tie-break forces; the adversary commits to $a$ only after seeing $\theta_2$. Both branch points of $h$ are feasibility boundaries of the single spike knob: $\ecap = 3$ is where the width-$2$ window stops fitting inside the prior ($\mu \le 2$ starts to bind), and $\ecap = 1$ is where the vertex height drops below $1$ and the echo dies. Certified numerically for $\ecap$ across three orders of magnitude ($0.05$ to $200$) and for both endpoint tie-breaks, each attack replayed through the closed loop from its reconstructed certificate.}
\label{tab:witness}
\end{table}

\section{Proof details for Proposition~\ref{prop:regret}}\label{app:regret}

\paragraph{(i) Small regret, infinite peak.}
The schedule, the anatomy of one probe episode, and the resulting ranking flip are pictured in Figure~\ref{fig:probe}. Fix $\varepsilon > 0$, a sparse schedule $t_1 < t_2 < \cdots \uparrow \infty$, and probe sizes $p_k \uparrow \infty$ with $p_1 \ge 2\Delta + 2 + 2(1+\Delta)/\varepsilon$. The controller is $u_t = -\midpt(A_t)\,x_t + p_k\,\mathbf{1}\{t = t_k\}$. Fix any admissible $(a, w)$. Table~\ref{tab:episodes} tabulates, step by step, the bounds derived in the next three paragraphs.

\emph{Before the first probe} the controller coincides with the optimal law, which keeps $\abs{x_t} \le 1 + \Delta$; every transition that lands above the threshold, $\abs{x_{t+1}} > 1 + \varepsilon$, has, by Corollary~\ref{cor:geom}(i), posterior length $s_t L_{t+1} \le g_t/2 + 1 - \abs{x_{t+1}} < s_t L_t/2 - \varepsilon$, so each mistake more than halves the surviving interval length; consequently at most $C(\Delta, \varepsilon) = O(\log(\Delta^2/\varepsilon))$ mistakes occur in this phase, uniformly over $(a,w)$.

\emph{First probe episode.} The probe fires from $\abs{x_{t_1}} \le 1 + \Delta$, so the probe state satisfies $\abs{x_{t_1+1}} \le p_1 + 1 + \Delta$ and its regressor is at least $p_1 - 1 - \Delta$; the following step is bounded by the worst exposure, $\abs{x_{t_1+2}} \le \Delta\big(p_1+1+\Delta\big) + 1$. The transition \emph{from} the probe state, however, confines $a$ to a window of width $L \le 2/(p_1 - 1 - \Delta) \le \varepsilon/(1 + \Delta)$, whatever $(a,w)$. The echo is then bounded, $\abs{x_{t_1+3}} \le \tfrac{L}{2}\abs{x_{t_1+2}} + 1 \le \Delta(1+\varepsilon) + O(1)$ (one further $\Delta$-scale mistake), and since $t_1+2$ is a plain-law step, the invariant \eqref{eq:invariant} applied there gives $g_{t_1+3} \le L_{t_1+2} \le L$, whence $\abs{x_{t_1+4}} \le L/2 + 1 \le 1 + \varepsilon/2$: the episode contributes $O(1)$ mistakes, absorbed into $C(\Delta,\varepsilon)$.

\emph{Later probes.} From $\abs{x} \le 1 + \varepsilon$ with $L \le \varepsilon/(1+\Delta)$: the probe state is at most $p_k + 2$ (one mistake); the next state is at most $\tfrac{L}{2}(p_k+2) + 1$ (a second, possibly large, mistake); but the probe-$k$ transition itself re-identifies $a$ to width $\le 2/(p_k - 2)$, so the step after satisfies $\abs{x} \le \tfrac{1}{p_k-2}\big(\tfrac{L}{2}(p_k+2)+1\big) + 1 \le 1 + \varepsilon$ for $p_k$ large: at most $3$ mistakes per episode, uniformly in $k$.

\emph{Between probes.} Proposition~\ref{prop:invariant}, applied from the post-episode state, gives exposure $g_t \le L(1+\varepsilon) \le \varepsilon$ (for $\varepsilon \le \Delta$; larger $\varepsilon$ only weakens the mistake criterion), hence $\abs{x_t} \le 1 + \varepsilon/2$: zero mistakes.

Summing: $R_T \le C(\Delta,\varepsilon) + 3\,\#\{k : t_k \le T\}$ for \emph{every} admissible $(a,w)$; with $t_k = 2^k$ this is $O(\log T)$, with $t_k = 2^{2^k}$ it is $O(\log\log T)$. The worst-case peak is nevertheless infinite: under the admissible pair $(a, w) = (0, 0)$ the trajectory contains $x_{t_k+1} = p_k \to \infty$.

\begin{table}[t]
\centering
\footnotesize
\setlength{\tabcolsep}{4pt}
\begin{tabular}{@{}l l l c@{}}
\toprule
$t$ & bound on $\abs{x_t}$ & information state after the step & mistake? \\
\midrule
\multicolumn{4}{@{}l}{\emph{First episode} --- entered from the plain law: $\abs{x_{t_1}} \le 1+\Delta$, \ $L_{t_1} \le 2\Delta$.}\\
$t_1$     & $1+\Delta$ & --- & (in $C$) \\
$t_1{+}1$ & $p_1{+}1{+}\Delta$ \ (probe) & regressor ${\ge}\,p_1{-}1{-}\Delta$: \ $L \le \tfrac{2}{p_1-1-\Delta} \le \tfrac{\varepsilon}{1+\Delta}$ & yes \\
$t_1{+}2$ & $\Delta(p_1{+}1{+}\Delta)+1$ \ (exposure) & plain-law step: \eqref{eq:invariant} gives $g_{t_1+3} \le L_{t_1+2} \le L$ & yes \\
$t_1{+}3$ & $\tfrac{L}{2}\abs{x_{t_1+2}}+1 \le \Delta(1{+}\varepsilon)+O(1)$ \ (echo) & --- & $\le 1$ \\
$t_1{+}4$ & $L/2+1 \le 1+\varepsilon/2$ & --- & no \\
\midrule
\multicolumn{4}{@{}l}{\emph{Episode $k \ge 2$} --- entered with $\abs{x} \le 1+\varepsilon$, \ $L \le \varepsilon/(1{+}\Delta)$.}\\
$t_k{+}1$ & $p_k+2$ \ (probe) & re-identifies: $L_k \le 2/(p_k{-}2)$ & yes \\
$t_k{+}2$ & $\tfrac{L}{2}(p_k{+}2)+1$ \ (echo) & --- & yes \\
$t_k{+}3$ & $\tfrac{1}{p_k-2}\big(\tfrac{L}{2}(p_k{+}2)+1\big)+1 \le 1+\varepsilon$ & --- & no \\
between   & $1+\varepsilon/2$ \ (invariant \eqref{eq:invariant}) & --- & no \\
\bottomrule
\end{tabular}
\caption{Per-step bookkeeping of the probe episodes of Appendix~\ref{app:regret}(i), valid for \emph{every} admissible $(a,w)$; a mistake is $\abs{x_t} > 1+\varepsilon$. The probe's own transition re-identifies $a$ whatever the adversary plays --- the disturbance shifts the window's center, never its width --- so each episode contributes at most $3$ mistakes before the loop re-enters the mistake-free regime of Figure~\ref{fig:probe}(a).}
\label{tab:episodes}
\end{table}

\begin{figure}[tbp]
\centering
\begin{tikzpicture}

\begin{scope}[y=0.5cm]
  \node[figlab, anchor=south west] at (-0.35,6.05)
       {\pnl{a} the probing law: bounded regret, unbounded peak};
  \draw[axisline] (-0.3,0) -- (10.1,0) node[fignum, above left=-1pt and -2pt] {$t$};
  \draw[axisline] (-0.25,-0.15) -- (-0.25,5.9) node[fignum, right=1pt] {$\abs{x_t}$};

  \fill[cPost!18] (-0.25,0) rectangle (9.9,0.62);
  \node[figtiny, cPost, anchor=north west, inner sep=2pt] at (-0.3,-0.75)
       {shaded: $\abs{x_t}\le 1+\varepsilon$ --- no mistakes between episodes};

  \foreach \X/\P/\E/\lb in {1.7/2.3/1.15/{$t_1$}, 4.7/3.7/1.5/{$t_2$},
                            7.7/5.3/1.9/{$t_3$}}{
     \draw[cAdv!55, line width=.9pt] (\X,0) -- (\X,\P);
     \node[trajdot] at (\X,\P) {};
     \draw[cAdv!55, line width=.9pt] (\X+0.42,0) -- (\X+0.42,\E);
     \node[trajdot, minimum size=2.8pt] at (\X+0.42,\E) {};
     \node[fignum, anchor=north, inner sep=2.5pt] at (\X,0) {\lb};}
  \node[figtiny, cAdv, anchor=south, inner sep=2.5pt] at (1.7,2.35) {$p_1$};
  \node[figtiny, cAdv, anchor=south, inner sep=2.5pt] at (4.7,3.75) {$p_2$};
  \node[figtiny, cAdv, anchor=south, inner sep=2.5pt] at (7.7,5.35) {$p_3$};
  \node[figtiny, cAdv, anchor=west, inner sep=3pt] at (8.35,5.0)
       {$p_k\uparrow\infty$};
  \node[figtiny, black!62, anchor=west, inner sep=2pt] at (2.42,1.95) {echo};
  \draw[gray!70, line width=.4pt] (2.38,1.9) -- (2.15,1.3);

  \foreach \X in {3.2, 6.2}{
     \draw[white, line width=2.2pt] (\X-0.06,-0.13) -- (\X+0.06,0.13);
     \draw[gray!70, line width=.4pt] (\X-0.14,-0.16) -- (\X-0.02,0.16);
     \draw[gray!70, line width=.4pt] (\X+0.02,-0.16) -- (\X+0.14,0.16);}
\end{scope}

\begin{scope}[yshift=-3.6cm, x=1cm, y=1cm]
  \node[figlab, anchor=south west] at (-0.35,2.25)
       {\pnl{b} inside one episode: why only $3$ mistakes};
  \draw[axisline] (-0.3,0.55) -- (5.5,0.55);
  \foreach \i/\h/\lb/\tk in {0/0.18/{}/{$t_k$}, 1/1.25/{probe}/{$+1$},
                             2/0.72/{echo}/{$+2$}, 3/0.20/{}/{$+3$},
                             4/0.18/{}/{$+4$}}{
     \fill[cAdv!30] (\i*1.05+0.1,0.55) rectangle (\i*1.05+0.55,0.55+\h);
     \draw[cAdv, line width=.5pt] (\i*1.05+0.1,0.55) rectangle (\i*1.05+0.55,0.55+\h);
     \node[figtiny, cAdv, anchor=south, inner sep=1.5pt]
          at (\i*1.05+0.325,0.58+\h) {\lb};
     \node[fignum, anchor=north, inner sep=2.5pt] at (\i*1.05+0.325,0.52) {\tk};}
  \draw[gray!55, dashed, line width=.4pt] (-0.1,0.73) -- (5.4,0.73);
  \node[figtiny, black!62, anchor=east, inner sep=2pt] at (-0.12,0.73)
       {$1+\varepsilon$};
  \node[figtiny, anchor=west, align=left, inner sep=2pt] at (3.55,1.55)
       {$\le3$ mistakes,\\ then back below $1+\varepsilon$};

  \fill[priorband] (0.15,-0.30) rectangle (2.0,-0.06);
  \draw[priorband, fill=none] (0.15,-0.30) rectangle (2.0,-0.06);
  \node[figtiny, anchor=east, inner sep=3pt] at (0.15,-0.18) {$L_{k-1}$};
  \fill[postband] (2.55,-0.30) rectangle (2.72,-0.06);
  \draw[postband, fill=none] (2.55,-0.30) rectangle (2.72,-0.06);
  \node[figtiny, anchor=west, inner sep=3pt, cPost] at (2.78,-0.18)
       {$L_k\le 2/(p_k-2)$};
  \node[figtiny, anchor=west, inner sep=3pt, black!62] at (0.15,-0.62)
       {the probe's own regressor identifies $a$ --- whatever the adversary plays};
  \draw[-{Latex[length=1.6mm,width=1.1mm]}, gray!75, line width=.5pt]
       (2.08,-0.18) -- (2.48,-0.18);
\end{scope}

\begin{scope}[yshift=-7.45cm]
  \node[figlab, anchor=south west] at (-0.35,2.30)
       {\pnl{c} the same two laws on three scoreboards \ ($\checkmark$ = better)};
  \node[figtiny, cPost, anchor=center] at (5.6,1.72) {optimal law};
  \node[figtiny, cAdv,  anchor=center] at (9.2,1.72) {probing law};
  \draw[gray!50, line width=.4pt] (4.3,1.50) -- (10.5,1.50);
  \fill[cAdv!8, rounded corners=2pt] (1.45,-0.53) rectangle (10.5,0.03);
  \node[figtiny, anchor=east, inner sep=3pt] at (3.6,1.15) {worst-case peak};
  \node[figtiny, anchor=center] at (5.6,1.15) {$\checkmark\; 1+\Delta$};
  \node[figtiny, black!45, anchor=center] at (9.2,1.15) {$\infty$};
  \node[figtiny, anchor=east, inner sep=3pt] at (3.6,0.45) {mistake regret};
  \node[figtiny, anchor=center] at (5.6,0.45) {$\checkmark\; O(1)$};
  \node[figtiny, black!45, anchor=center] at (9.2,0.45) {$\Theta(\log T)$};
  \node[figtiny, anchor=east, inner sep=3pt] at (3.6,-0.25) {state-cost regret};
  \node[figtiny, black!45, anchor=center] at (5.6,-0.25) {$\Theta(T)$};
  \node[figtiny, anchor=center] at (9.2,-0.25) {$\checkmark\; O(\sqrt{T})$};
  \node[figtiny, cAdv, anchor=north, inner sep=4pt] at (6.4,-0.58)
       {same trajectories, opposite ranking};
\end{scope}

\end{tikzpicture}
\caption{Appendix~\ref{app:regret}(i) and (iii): a law whose regret is small
under every accounting and whose worst-case peak is infinite.
\pnl{a} The optimal law with isolated probes $p_k\uparrow\infty$ superimposed at
a sparse schedule. Between episodes the invariant \eqref{eq:invariant} holds the
state below $1+\varepsilon$, so no mistakes accrue there. The regret is
carried entirely by the episodes, and there are only $\#\{k:t_k\le T\}$ of them.
\pnl{b} Inside an episode the bookkeeping is short because the probe is
self-correcting: the probe state is enormous, so the transition \emph{out} of it
measures $a$ through a huge regressor and confines the parameter to a window of
width $\le 2/(p_k-2)$, whatever the adversary does, since the disturbance can
shift that window's center but not its width. Two large states (probe and echo),
then the loop is back under $1+\varepsilon$.
\pnl{c} The consequence, and the reason Proposition~\ref{prop:regret} is stated
in both directions. Columns are fixed per law; the check marks the better law on
each scoreboard. The peak and mistake criteria prefer the optimal law; on the
third scoreboard the check jumps columns --- the integrated criterion prefers
the law with the \emph{infinite peak}. Nothing about the trajectories changed
between the rows --- only how they are added up.}
\label{fig:probe}
\end{figure}

\paragraph{(ii) Optimal peak, linear regret.}
The attack is pictured in Figure~\ref{fig:stealth} and replayed, for both variants, in Table~\ref{tab:stealth}. Let $\Delta \ge 1$, $a = 1$, $w_0 = \tfrac{1}{\Delta+1}$, $w_t = 0$ for $t \ge 1$, and run the optimal law. Then $x_1 = w_0 = \tfrac{1}{\Delta+1}$ and, inductively, $\hat a_t = 0$ and $x_{t+1} = a x_t = x_t$: the state holds at $\tfrac{1}{\Delta+1}$ forever. The induced data window at each step has width $2/x_t = 2\Delta + 2$ and is centered at the true $a = 1$, hence equals $[1 - (\Delta+1),\, 1 + (\Delta+1)] \supseteq [-\Delta, \Delta]$: the consistent set never shrinks, $\hat a_t \equiv 0$, and the loop reproduces itself --- \emph{zero information leaks at any time}. The controller's cumulative state cost is $\tfrac{1}{\Delta+1}\,T$ ($t = 1, \dots, T$), while the clairvoyant (deadbeat at $a = 1$) pays only the one-time disturbance $\tfrac{1}{\Delta+1}$; the regret is exactly $\tfrac{1}{\Delta+1}(T-1)$. (Numerically: at $\Delta = 5$, $T = 2000$, measured regret $333.17 = \tfrac{1999}{6}$, slope exactly $\tfrac{1}{\Delta+1} = \tfrac16$.)

Two robustness properties, cited from Proposition~\ref{prop:regret}(ii) and Remark~\ref{rem:accounting}. \emph{Cost-insensitivity within the integrated family:} along the attack $u_t \equiv 0$, so any per-step cost charging the held state is paid linearly --- under $x_t^2$ and $x_t^2 + u_t^2$ the slope is $\bar x^2 = \tfrac{1}{(\Delta+1)^2}$, verified to six digits at $\Delta \in \{1, 1.5, 5, 50\}$, with $\max_t\abs{u_t}$ at machine zero and the consistent set equal to the full prior throughout. \emph{No restriction on $\Delta$:} for $\Delta < 1$, where $a = 1$ is inadmissible, take $a = \Delta$, $w_0 = \bar x := \tfrac{1}{\Delta+1}$, and $w_t \equiv (1 - \Delta)\bar x$ for $t \ge 1$. The state again holds at $\bar x$, the induced window $[1 - 1/\bar x,\, 1 + 1/\bar x] = [-\Delta,\, 2 + \Delta] \supseteq [-\Delta, \Delta]$ retains the full prior (its lower edge touching $-\Delta$ exactly), and the per-step excess over the clairvoyant is $\bar x - (1 - \Delta)\bar x = \tfrac{\Delta}{\Delta+1}$: linear regret with the stated slope, verified at $\Delta \in \{0.25, 0.5, 0.75\}$ to five digits.

\begin{table}[t]
\centering
\footnotesize
\setlength{\tabcolsep}{4.5pt}
\begin{tabular}{@{}l c c c c c@{}}
\toprule
$t$ & $x_t$ & $w_t$ & window $W_t$ & $A_{t+1}$ & excess $\abs{x_t} - \abs{x^*_t}$ \\
\midrule
\multicolumn{6}{@{}l}{\emph{$\Delta \ge 1$:} \ $a = 1$, \ $\bar x := \tfrac{1}{\Delta+1}$.}\\
$0$      & $0$      & $\bar x$ & $\R$ \ (zero regressor) & $[-\Delta,\Delta]$ & $0$ \\
$1$      & $\bar x$ & $0$      & $[-\Delta,\; \Delta+2]$ & $[-\Delta,\Delta]$ & $0$ \\
$t \ge 2$ & $\bar x$ & $0$      & $[-\Delta,\; \Delta+2]$ & $[-\Delta,\Delta]$ & $\bar x$ \\
\midrule
\multicolumn{6}{@{}l}{\emph{$\Delta < 1$:} \ $a = \Delta$, \ $\bar x := \tfrac{1}{\Delta+1}$.}\\
$0$      & $0$      & $\bar x$            & $\R$ \ (zero regressor) & $[-\Delta,\Delta]$ & $0$ \\
$1$      & $\bar x$ & $(1{-}\Delta)\bar x$ & $[-\Delta,\; 2+\Delta]$ & $[-\Delta,\Delta]$ & $0$ \\
$t \ge 2$ & $\bar x$ & $(1{-}\Delta)\bar x$ & $[-\Delta,\; 2+\Delta]$ & $[-\Delta,\Delta]$ & $\tfrac{\Delta}{\Delta+1}$ \\
\bottomrule
\end{tabular}
\caption{The stealth pairs of Appendix~\ref{app:regret}(ii) against the optimal law. Throughout, $\hat a_t \equiv 0$ and $u_t \equiv 0$: every induced window $W_t$ contains the full prior, so $A_{t+1} = A_t = [-\Delta, \Delta]$ and the loop reproduces itself --- zero information leaks at any time, and each row for $t \ge 1$ repeats forever. Excess is measured against the clairvoyant $x^*_{t+1} = w_t$ (deadbeat at the true $a$); summing the last column gives the stated linear regret.}
\label{tab:stealth}
\end{table}

\begin{figure}[tbp]
\centering
\begin{tikzpicture}

\begin{scope}[x=0.40cm, y=1cm]
  \node[figlab, anchor=south west] at (-6.6,3.05)
       {\pnl{a} why nothing is ever learned ($\Delta=5$)};
  \draw[axisline] (-6.6,-0.55) -- (8.6,-0.55)
        node[fignum, above left=-1pt and -2pt] {$\alpha$};
  \foreach \v/\lb in {-5/{$-\Delta$}, 0/{$0$}, 5/{$\Delta$}}{
     \draw[tick] (\v,-0.67) -- (\v,-0.43);
     \node[fignum, anchor=north, inner sep=2pt] at (\v,-0.67) {\lb};}

  \fill[priorband] (-5,2.07) rectangle (5,2.33);
  \draw[priorband, fill=none] (-5,2.07) rectangle (5,2.33);
  \node[figtiny, anchor=west, inner sep=0pt] at (7.3,2.20) {$A_t$};

  \fill[winband] (-5,1.22) rectangle (7,1.48);
  \draw[winband, fill=none] (-5,1.22) rectangle (7,1.48);
  \node[figtiny, anchor=west, inner sep=0pt, cWin] at (7.3,1.35) {$W_t$};
  \node[advdot] at (1,1.35) {};
  \node[fignum, cAdv, anchor=south, inner sep=2.5pt] at (1,1.50) {$a=1$};
  \draw[meas] (-5,1.08) -- (7,1.08)
       node[measlab, midway, below=7pt] {$2/\bar x = 2(\Delta+1)$};

  \fill[postband] (-5,0.12) rectangle (5,0.38);
  \draw[postband, fill=none] (-5,0.12) rectangle (5,0.38);
  \node[figtiny, anchor=west, inner sep=0pt, cPost] at (7.3,0.25)
       {$A_{t+1}=A_t$};

  \draw[cAdv, dash pattern=on 2pt off 1.5pt, line width=.5pt] (-5,0.00) -- (-5,2.62);
  \node[callout, cAdv, anchor=south west, align=left] at (-4.8,2.50)
       {lower edge lands exactly on $-\Delta$};
\end{scope}

\begin{scope}[xshift=7.75cm, yshift=-0.60cm]
\begin{axis}[appfig, width=5.0cm, height=3.6cm,
    xlabel={horizon $T$}, ylabel={cumulative $\sum_t\abs{x_t}$},
    xmin=0, xmax=12, ymin=0, ymax=2.35,
    xtick={0,4,8,12}, ytick={0,1,2},
    title={\pnl{b} the resulting regret}]
  \addplot[draw=none, fill=cAdv!12, forget plot] coordinates
      {(1,0.1667) (12,0.1667) (12,2.0) (1,0.1667)} \closedcycle;
  \addplot[cAdv, domain=1:12, samples=2] {x/6};
  \addplot[cPost, domain=1:12, samples=2] {1/6};
  \node[figtiny, cAdv, anchor=north east] at (axis cs:11.4,1.62)
       {optimal law};
  \node[figtiny, cPost, anchor=south west] at (axis cs:5.2,0.22)
       {clairvoyant};
  \node[figtiny, cAdv!85, anchor=east, align=right] at (axis cs:10.6,0.95)
       {$R_T=\frac{T-1}{\Delta+1}$};
\end{axis}
\end{scope}

\end{tikzpicture}
\caption{The stealth attack of Appendix~\ref{app:regret}(ii), at $\Delta=5$:
$a=1$, $w_0=\bar x:=\tfrac{1}{\Delta+1}$, $w_t=0$ thereafter.
\pnl{a} The adversary parks the state at $\bar x$ and leaves it there. Because
the regressor is $\bar x$, the induced data window has width $2/\bar x =
2(\Delta+1)$ and is centered at the true $a=1$, so it contains the entire
prior, with its lower edge falling exactly on $-\Delta$. The intersection removes
nothing, the consistent set never shrinks, the midpoint stays at $0$, the input
stays at $0$, and the loop reproduces itself forever: an attack that leaks
\emph{no} information at any time. This is the exact complement of the spike of
Remark~\ref{rem:anatomy}: there, magnitude is bought at the price of information;
here, no magnitude is taken and no information is given.
\pnl{b} The optimal law therefore holds $\bar x$ forever while the clairvoyant,
knowing $a$, pays only the one-time disturbance. Under the cumulative state cost
the minimax-optimal policy accrues regret $\tfrac{T-1}{\Delta+1}$, linear in $T$
--- the price of refusing to gamble on dynamics the data have not pinned down.}
\label{fig:stealth}
\end{figure}

\paragraph{(iii) Both directions under the single cost $c_t = \abs{x_t}$ (Remark~\ref{rem:accounting}).}
Take the geometric schedule $t_k = 2^{k-1} t_1$ with probe sizes $p_k = c_0\sqrt{t_k}$, where $p_1 = c_0\sqrt{t_1}$ is sized as in (i), and measure regret against the same clairvoyant (deadbeat at the true $a$: $x^*_{t+1} = w_t$). Fix any admissible $(a,w)$. The per-step excess is $\abs{x_{t+1}} - \abs{x^*_{t+1}} = \abs{(a - \hat a_t)x_t + \pi_t + w_t} - \abs{w_t} \le \abs{a - \hat a_t}\abs{x_t} + \abs{\pi_t}$, where $\pi_t$ is the probe offset (zero off the schedule), and the sum splits into four parts. \emph{Pre-probe:} at most $t_1(1 + \Delta) = O(1)$. \emph{Spikes:} $\sum_{k : t_k \le T} p_k \le \tfrac{c_0}{1 - 2^{-1/2}}\sqrt{T}\,(1 + o(1))$. \emph{Echoes:} the step after probe $k$ costs at most $\tfrac{L_{k-1}}{2}(p_k + O(1)) + 1 = O(p_k/p_{k-1}) = O(1)$, since the width available during the spike step is the pre-probe width $L_{k-1} \le 2/(p_{k-1} - 2)$. \emph{Drift:} from the probe-$k$ transition onward, $\abs{a - \hat a_t} \le L_k/2 \le 1/(p_k - 2)$ --- the disturbance shifts only the center of the probe's data window, never its width $2/\abs{x_{t_k+1}}$, so this holds whatever $(a,w)$ --- while Proposition~\ref{prop:invariant} keeps $\abs{x_t} = O(1)$ between probes: segment $k$ contributes $O\big((t_{k+1} - t_k)/p_k\big) = O(\sqrt{t_k})$, again summing geometrically. Hence $R_T \le C'(\Delta, c_0, t_1)\,\sqrt{T}$ for \emph{every} admissible $(a,w)$, while the peak is infinite along $(a,w) = (0,0)$. Numerically, at $\Delta = 5$, $c_0 = 10$, $T = 2^{21}$, against a battery of $80$ adversaries (fixed-parameter greedy, mismatch-aligned, probe-aligned, window-poisoning, stealth, and random families; structured attacks replayed from certified admissible pairs as in Section~\ref{sec:numerics}), the supremum of $R_T/\sqrt{T}$ over all adversaries and checkpoints is $34.2$, matching the predicted spike-sum constant $c_0/(1 - 2^{-1/2}) \approx 34.1$, with log--log regret slope $0.505$ under $(a,w) = (0,0)$; a squaring schedule $t_{k+1} = t_k^2$ with $p_k = t_k^{2/3}$ measures $\sup_T R_T/T^{2/3} = 1.3$, the rate the same decomposition predicts. The geometric schedule is the better tuning. Under the threshold cost the same two laws rank in the \emph{opposite} order: over the same battery the plain optimal law's total mistakes never exceeded $3$ ($\Delta \in \{1, 5, 25\}$, $T = 10^5$, $\varepsilon \in \{0.1, 0.25, 0.5\}$): bounded mistake-regret, against $\Theta(\#\text{probes})$ for the probing law.

\end{document}